\let\originalleft\left
\let\originalright\right
\renewcommand{\left}{\mathopen{}\mathclose\bgroup\originalleft}
\renewcommand{\right}{\aftergroup\egroup\originalright}
\newcommand{\C}{{\mathbb{C}}}
\newcommand{\N}{{\mathbb{N}}}
\newcommand{\R}{{\mathbb{R}}}
\newcommand{\Z}{{\mathbb{Z}}}
\newcommand{\D}{{\mathscr{D}}}
\renewcommand{\L}{\mathscr{L}}
\renewcommand{\d}{\mathrm{d}}
\newcommand{\Poisson}{\mathrm{Poisson}}
\newcommand{\ket}[1]{|{#1}\rangle}
\newcommand{\norm}[1]{\|{#1}\|}
\newcommand{\range}[1]{[{#1}]}
\newcommand{\rangez}[1]{[{#1}]_0}
\DeclareMathOperator{\poly}{poly}
\newtheorem{lemma}{Lemma}
\newtheorem{corollary}{Corollary}
\newtheorem{problem}{Problem}
\numberwithin{equation}{section}
\newcommand{\eq}[1]{(\ref{eq:#1})}
\renewcommand{\sec}[1]{\hyperref[sec:#1]{Section~\ref*{sec:#1}}}
\newcommand{\app}[1]{\hyperref[app:#1]{Appendix~\ref*{app:#1}}}
\newcommand{\thm}[1]{\hyperref[thm:#1]{Theorem~\ref*{thm:#1}}}
\newcommand{\lem}[1]{\hyperref[lem:#1]{Lemma~\ref*{lem:#1}}}
\newcommand{\cor}[1]{\hyperref[cor:#1]{Corollary~\ref*{cor:#1}}}
\newcommand{\prb}[1]{\hyperref[prb:#1]{Problem~\ref*{prb:#1}}}
\newcommand{\tab}[1]{\hyperref[tab:#1]{Table~\ref*{tab:#1}}}
\newcommand{\gdd}{global strict diagonal dominance}
\let\@@magyar@captionfix\relax
\newcommand\notsotiny{\@setfontsize\notsotiny{7.5}{8.5}}
\begin{document}


\title{High-precision quantum algorithms for \newline partial differential equations}

\author[1,2,3]{Andrew M.\ Childs}
\author[1,2,4]{Jin-Peng Liu}
\author[1,2,5]{Aaron Ostrander}
\affil[1]{Joint Center for Quantum Information and Computer Science, University of Maryland, MD 20742, USA}
\affil[2]{Institute for Advanced Computer Studies, University of Maryland, MD 20742, USA}
\affil[3]{Department of Computer Science, University of Maryland, MD 20742, USA}
\affil[4]{Department of Mathematics, University of Maryland, MD 20742, USA}
\affil[5]{Department of Physics, University of Maryland, MD 20742, USA}

\date{}
\maketitle


\begin{abstract}
Quantum computers can produce a quantum encoding of the solution of a system of differential equations exponentially faster than a classical algorithm can produce an explicit description. However, while high-precision quantum algorithms for linear ordinary differential equations are well established, the best previous quantum algorithms for linear partial differential equations (PDEs) have complexity $\poly(1/\epsilon)$, where $\epsilon$ is the error tolerance. By developing quantum algorithms based on adaptive-order finite difference methods and spectral methods, we improve the complexity of quantum algorithms for linear PDEs to be $\poly(d, \log(1/\epsilon))$, where $d$ is the spatial dimension. Our algorithms apply high-precision quantum linear system algorithms to systems whose condition numbers and approximation errors we bound. We develop a finite difference algorithm for the Poisson equation and a spectral algorithm for more general second-order elliptic equations.
\end{abstract}


\section{Introduction}
\label{sec:problem}

Many scientific problems involve partial differential equations (PDEs). Prominent examples include Maxwell's equations for electromagnetism, Boltzmann's equation and the Fokker-Planck equation in thermodynamics, and Schr{\"o}dinger's equation in continuum quantum mechanics. While models of physics are often studied in a constant number of spatial dimensions, it is also natural to study high-dimensional PDEs, such as to model systems with many interacting particles. Classical numerical methods have complexity that grows exponentially in the dimension, a phenomenon sometimes called the curse of dimensionality~\cite{Bel57}. This is a major challenge for attempts to solve PDEs on classical computers.

A common approach to solving PDEs on a digital computer is the finite difference method (FDM). In this approach, we discretize space into a rectangular lattice, solve a system of linear equations that approximates the PDE on the lattice, and output the solution on those grid points. If each spatial coordinate has $n$ discrete values, then $n^d$ points are needed to discretize a $d$-dimensional problem. Simply outputting the solution on these grid points takes time $\Omega(n^d)$.

Beyond uniform grids, the sparse grid technique \cite{Smo63} has been applied to reduce the time and space complexity of outputting a sparse encoding of the solution to $O(n\log^d n)$ \cite{BG04,Zen91}. While this is a significant improvement, it still scales exponentially in $d$. It can be shown that for a grid-based approach this complexity is optimal with respect to certain norms \cite{BG04}. Reference~\cite{BG04} proposes alternative sparse grid algorithms whose complexities scale linearly with $n$ but exponentially with $d$. Another grid-based method is the finite element method (FEM), where the differential equation is multiplied by functions with local support (restricted by the grid) and then integrated. This produces a set of equations that the solution must satisfy, which are then used to approximate the solution. In yet another grid-based approach, the finite volume method (FVM) considers a grid dividing space into volumes/cells. The field is integrated over these volumes to create auxiliary variables, and relations between these variables are derived from the differential equation.

An alternative to grid methods is the concept of \emph{spectral methods} \cite{Ghe07,STW11}. Spectral methods use linear combinations of basis functions (such as Fourier basis states or Chebyshev polynomials) to globally approximate the solution. These basis functions allow the construction of a linear system whose solution approximates the solution of the PDE.

These classical algorithms often consider the problem of outputting the solution at $N$ points in space, which clearly requires $\Omega(N)$ space and time. Quantum algorithms often (though not always) consider the alternative problem of outputting a quantum state proportional to such a vector, which requires only $\Omega(\log N)$ space---and correspondingly provides more limited access to the solution---but can potentially be done in only $\poly(\log N)$ time.

The fact that quantum states can efficiently encode exponentially long vectors has also been leveraged for the development of quantum linear system algorithms (QLSAs) \cite{HHL08,Amb12,CKS15}. For a linear system $A \vec x = \vec b$, a QLSA outputs a quantum state proportional to the solution $\vec x$. To learn information about the solution $\vec x$, the output of the QLSA must be post-processed. For example, to output all the entries of an $N$-dimensional vector $\vec x$ given a quantum state $\ket{x}$ proportional to it, even a quantum computer needs time and space $\Omega(N)$.

Because linear systems are often used in classical algorithms for PDEs such as those described above, it is natural to consider their quantum counterparts. Clader, Jacobs, and Sprouse~\cite{CJS13} give a heuristic algorithm for using sparse preconditioners and QLSAs to solve a linear system constructed using the FEM for Maxwell's equations. The state output by the QLSA is then post-processed to compute electromagnetic scattering cross-sections.

In subsequent work, Montanaro and Pallister~\cite{MP16} use QLSAs to implement the FEM for $d$-dimensional boundary value problems and evaluate the quantum speedup that can be achieved when estimating a function of the solution within precision $\epsilon$. This involves a careful analysis of how different algorithmic parameters (such as the dimension and condition number of the FEM linear system and the number of post-processing measurements) scale with respect to input variables (such as the spatial dimension $d$ and desired precision $\epsilon$), since all of these affect the complexity. Their algorithms have complexity $\poly(d, 1/\epsilon)$, compared to $O((1/\epsilon)^d)$ for the classical FEM. This exponential improvement with respect to $d$ suggests that quantum algorithms may be notably faster when $d$ is large. However, they also argue that for fixed $d$, at most a polynomial speed-up can be expected due to lower bounds on the cost of post-processing the state to estimate a function of the solution.

The FDM has also been used in quantum algorithms for PDEs. References~\cite{cao2013quantum,wang2019quantum} apply the FDM to solve Poisson's equation in rectangular volumes under Dirichlet boundary conditions. Although the circuits they construct have $\poly (\log (1/ \epsilon))$ gates, these circuits have success probability $\poly(1/\epsilon)$, leading to $\poly(1/\epsilon)$ time complexity. Additionally, they do not quantify errors resulting from the finite-difference approximation. Reference~\cite{CJO17} applies the FDM to the problem of outputting states proportional to solutions of the wave equation, giving complexity $d^{\frac{5}{2}}\poly(1/\epsilon)$, a polynomial dependence on $d$ and $1/\epsilon$ (which is $\poly(n)$ for a fixed-order FDM). The FVM is combined with the reservoir method in Reference~\cite{Fillion-Gourdeau2018} to simulate hyperbolic equations; although they achieve linear scaling with respect to the spatial dimension, they use fixed order differences, leading to $\poly (1/ \epsilon )$ scaling. These FDM, FEM, and FVM approaches can only give a total complexity $\poly(1/\epsilon)$, even using high-precision methods for the QLSA or Hamiltonian simulation, because of the additional approximation errors in the FDM, FEM, and FVM.

The FDM is also applied in Reference~\cite{kivlichan2017bounding} to simulate how a fixed number of particles evolve under the Schr\"{o}dinger equation with access to an oracle for the potential term. This can be seen as a special case of quantum algorithms for PDEs. Other examples include quantum algorithms for many-body quantum dynamics \cite{zalka1998efficient,wiesner1996simulations} and for electronic structure problems, including for quantum chemistry (see for example References~\cite{Reiher7555, lanyon2010towards}). However, here we focus on PDEs whose dynamics are not necessarily unitary.

In this paper, we propose new quantum algorithms for linear PDEs where the boundary is the unit hypercube. In the spirit of Reference~\cite{MP16}, we state our results in terms of the approximation error and the spatial dimension; however, we do not consider the problem of estimating a function of the PDE solution and instead focus on outputting states encoding the solution, allowing us to give algorithms with complexity $\poly(\log(1/\epsilon))$. Just as for the QLSA, this improvement is potentially significant if the given equations must be solved as a subroutine within some larger computation. The problem we address can be informally stated as follows: Given a linear PDE with boundary conditions and an error parameter $\epsilon$, output a quantum state that is $\epsilon$-close to one whose amplitudes are proportional to the solution of the PDE at a set of grid points in the domain of the PDE.
We focus on elliptic PDEs, and we assume a technical condition that we call \emph{global strict diagonal dominance} (defined in \eq{DDM}).

Our first algorithm is based on a quantum version of the FDM approach: we use a finite-difference approximation to produce a system of linear equations and then solve that system using the QLSA. We analyze our FDM algorithm as applied to Poisson's equation (which automatically satisfies global strict diagonal dominance) under periodic, Dirichlet, and Neumann boundary conditions.
Whereas previous FDM approaches \cite{cao2013quantum,CJO17} considered fixed orders of truncation, we adapt the order of truncation depending on $\epsilon$, inspired by the classical adaptive FDM \cite{BG87}.
As the order increases, the eigenvalues of the FDM matrix approach the eigenvalues of the continuous Laplacian, allowing for more precise approximations.
The main algorithm we present uses the quantum Fourier transform (QFT) and takes advantage of the high-precision LCU-based QLSA \cite{CKS15}. We first consider periodic boundary conditions, but by restricting to appropriate subspaces, this approach can also be applied to homogeneous Dirichlet and Neumann boundary conditions.
We state our result in \thm{main_fdm}, which (informally) says that this quantum adaptive FDM approach produces a quantum state approximating the solution of Poisson's equation with complexity $d^{6.5}\poly(\log d, \log(1/\epsilon))$.

We also propose a quantum algorithm for more general second-order elliptic PDEs under periodic or non-periodic Dirichlet boundary conditions. This algorithm is based on quantum spectral methods \cite{CL19}. The spectral method globally approximates the solution of a PDE by a truncated Fourier or Chebyshev series (which converges exponentially for smooth functions) with undetermined coefficients, and then finds the coefficients by solving a linear system. This system is exponentially large in $d$, so solving it is infeasible for classical algorithms but feasible in a quantum context. To be able to apply the QLSA efficiently, we show how to make the system sparse using variants of the quantum Fourier transform.
Our bound on the condition number of the linear system uses global strict diagonal dominance, and introduces a factor in the complexity that measures the extent to which this condition holds.
We state our result in \thm{main}, which (informally) gives a complexity of $d^{2}\poly(\log(1/\epsilon))$ for producing a quantum state approximating the solution of general second-order elliptic PDEs with Dirichlet boundary conditions.

Both of these approaches have complexity $\poly(d, \log(1/\epsilon))$, providing optimal dependence on $\epsilon$ and an exponential improvement over classical methods as a function of the spatial dimension $d$. Bounding the complexities of these algorithms requires analyzing how $d$ and $\epsilon$ affect the condition numbers of the relevant linear systems (finite difference matrices and matrices relating the spectral coefficients) and accounting for errors in the approximate solution provided by the QLSA. Furthermore, the complexities of both approaches scale logarithmically with high-order derivatives of the solution and the inhomogeneity. The detailed complexity dependence is presented in \thm{main_fdm} and \thm{main}, and is further discussed in \sec{discussion}.

\tab{alg-compare} compares the performance of our approaches to other classical and quantum algorithms for PDEs. Compared to classical algorithms, quantum algorithms improve the dependence on spatial dimension from exponential to polynomial (with the significant caveat that they produce a different representation of the solution). Compared to previous quantum FDM/FEM/FVM algorithms \cite{cao2013quantum,Fillion-Gourdeau2018,CJO17,MP16}, the quantum adaptive FDM and quantum spectral method improve the error dependence from $\poly(1/\epsilon)$ to $\poly(\log(1/\epsilon))$. Our approaches achieve the best known dependence on the parameter $\epsilon$ for the Poisson equation with homogeneous boundary conditions. Furthermore, our quantum spectral method approach not only achieves the best known dependence on $d$ and $\epsilon$ for elliptic PDEs with inhomogeneous Dirichlet boundary conditions, but also improves the dependence on $d$ for the Poisson equation with inhomogeneous Dirichlet boundary conditions, as compared to previous quantum algorithms.

\begin{table}
\begin{center}
\notsotiny{
\renewcommand{\arraystretch}{1.25}
\begin{tabular}{|@{\hspace{1.5mm}}c|c|c|c|c|}
    \hline
     & \textbf{Algorithm} & \textbf{Equation} & \textbf{Boundary conditions}  & \textbf{Complexity} \\
    \hline
    \parbox[t]{1.5mm}{\multirow{5}{*}{\rotatebox[origin=c]{90}{Classical}}}
     & FDM/FEM/FVM  & general & general & $\poly((1/\epsilon)^d)$ \\
    \cline{2-5}
     & Adaptive FDM/FEM \cite{BG87}  & general & general & $\poly((\bm{\log(1/\epsilon)})^d)$ \\
    \cline{2-5}
     & Spectral method \cite{Ghe07,STW11} & general & general & $\poly((\bm{\log(1/\epsilon)})^d)$ \\
    \cline{2-5}
     & Sparse grid FDM/FEM \cite{BG04,Zen91}  & general & general & $\poly((1/\epsilon)(\log(1/\epsilon))^d)$ \\
    \cline{2-5}
     & Sparse grid spectral method \cite{SY10,SY12}  & elliptic & general & $\poly(\bm{\log(1/\epsilon)}(\log\log(1/\epsilon))^d)$ \\
    \hline
    \parbox[t]{1.5mm}{\multirow{7}{*}{\rotatebox[origin=c]{90}{Quantum}}}
     & FEM \cite{MP16}  & Poisson & homogeneous & $\poly(d, 1/\epsilon)$ \\
    \cline{2-5}
     & FDM \cite{cao2013quantum}  & Poisson & homogeneous Dirichlet & $\bm{d} \poly( \bm{\log d}, 1/\epsilon)$ \\
    \cline{2-5}
     & FDM \cite{CJO17}  & wave & homogeneous & $\bm{d^{5/2}} \poly(1/\epsilon)$ \\
    \cline{2-5}
     & FVM \cite{Fillion-Gourdeau2018}  & hyperbolic & periodic & $\bm d \poly(1/\epsilon)$ \\
    \cline{2-5}
     & Adaptive FDM [this paper]  & Poisson & periodic, homogeneous & $d^{13/2} \poly( \log d, \bm{\log(1/\epsilon)})$  \\
    \cline{2-5}
     & Spectral method [this paper]  & Poisson & homogeneous Dirichlet & $\bm{d \poly(\log d, \log(1/\epsilon))}$ \\
    \cline{2-5}
     & Spectral method [this paper]  & elliptic & inhomogeneous Dirichlet & $\bm{d^2 \poly(\log(1/\epsilon))}$ \\
    \hline
\end{tabular}
}
\end{center}
\caption{
Summary of the time complexities of classical and quantum algorithms for $d$-dimensional PDEs with error tolerance $\epsilon$. Portions of the complexity in bold represent best known dependence on that parameter.
\label{tab:alg-compare}
}
\end{table}


The remainder of the paper is structured as follows. \sec{pde-setup} introduces technical details about linear PDEs and formally states the problem we solve. \sec{fdm} covers our FDM algorithm for Poisson's equation. \sec{spectral} details the spectral algorithm for elliptic PDEs. Finally, \sec{discussion} concludes with a brief discussion of the results, their possible applications, and some open problems.

\section{Linear PDEs}
\label{sec:pde-setup}

In this paper, we focus on systems of linear PDEs. Such equations can be written in the form
\begin{equation}
\L(u(\bm{x}))=f(\bm{x}),
\label{eq:pde}
\end{equation}
where the variable $\bm{x}=(x_1,\ldots,x_d) \in \C^d$ is a $d$-dimensional vector, the solution $u(\bm{x})\in\C$ and the inhomogeneity $f(\bm{x})\in\C$ are scalar functions, and $\L$ is a linear differential operator acting on $u(\bm{x})$. In general, $\L$ can be written in a linear combination of $u(\bm{x})$ and its derivatives. A linear differential operator $\L$ of order $h$ has the form
\begin{equation}
\L(u(\bm{x}))=\sum_{\|\bm{j}\|_1\le h}A_{\bm{j}}(\bm{x})\frac{\partial^{\bm{j}}}{\partial \bm{x}^{\bm{j}}}u(\bm{x}),
\label{eq:operator}
\end{equation}
where $\bm{j}=(j_1,\ldots,j_d)$ is a $d$-dimensional non-negative vector with $\norm{\bm{j}}_1 = j_1+\dots+j_d\le h $, $A_j(\bm{x}) \in \C$, and
\begin{equation}
\frac{\partial^{\bm{j}}}{\partial \bm{x}^{\bm{j}}}u(\bm{x})=\frac{\partial^{j_1}}{\partial x_1^{j_1}}\cdots\frac{\partial^{j_d}}{\partial x_d^{j_d}}u(\bm{x}).
\label{eq:derivative}
\end{equation}
The problem reduces to a system of linear ordinary differential equations (ODEs) when $d=1$. For $d\ge2$, we call \eq{pde} a (multi-dimensional) PDE.

For example, systems of first-order linear PDEs can be written in the form
\begin{equation}
\sum_{j=1}^dA_j(\bm{x})\frac{\partial u(\bm{x})}{\partial x_j}+A_0(\bm{x})u(\bm{x})=f(\bm{x}),
\label{eq:fpde}
\end{equation}
where
$A_j(\bm{x}), A_0(\bm{x}), f(\bm{x}) \in\C$ for $j \in \range{d} := \{1,\ldots,d\}$.
Similarly, systems of second-order linear PDEs can be expressed in the form
\begin{equation}
\sum_{j_1,j_2=1}^dA_{j_1j_2}(\bm{x})\frac{\partial^2 u(\bm{x})}{\partial x_{j_1} \partial x_{j_2}}+\sum_{j=1}^dA_j(\bm{x})\frac{\partial u(\bm{x})}{\partial x_j}+A_0(\bm{x})u(\bm{x})=f(\bm{x}),
\label{eq:spde}
\end{equation}
where $A_{j_1,j_2}(\bm{x}), A_j(\bm{x}), A_0(\bm{x}), f(\bm{x}) \in\C$ for $j_1,j_2,j \in \range{d}$. A well-known second-order linear PDEs is the \emph{Poisson equation}
\begin{equation}
\Delta u(\bm{x}):=\sum_{j=1}^d\frac{\partial^2}{\partial x_j^2}u(\bm{x})=f(\bm{x}).
\label{eq:Poisson}
\end{equation}

A linear PDE of order $h$ is called \emph{elliptic} if its differential operator \eq{operator} satisfies
\begin{equation}
\sum_{\|\bm{j}\|_1= h}A_{\bm{j}}(\bm{x})\bm{\xi}^{\bm{j}}\ne0,
\label{eq:elliptic}
\end{equation}
for all nonzero $\bm{\xi}^{\bm{j}}=\xi_1^{j_1}\ldots\xi_d^{j_d}$ with $\xi_1,\ldots,\xi_d\in\R^m$ and all $\bm{x}$.
Note that ellipticity only depends on the highest-order terms.
When $h=2$, the linear PDE \eq{spde} is called a second-order elliptic PDE if and only if $A_{j_1j_2}(\bm{x})$ is positive-definite or negative-definite for any $\bm{x}$. In particular, the Poisson equation \eq{Poisson} is a second-order elliptic PDE.

We consider a class of
elliptic PDEs that also satisfy the condition
\begin{equation}
C := 1-\sum_{j_1=1}^d\frac{1}{|A_{j_1,j_1}(\bm{x})|}\sum_{j_2\in\range{d}\backslash\{j_1\}} |A_{j_1,j_2}(\bm{x})| > 0
\label{eq:DDM}
\end{equation}
for all $\bm{x}$.
We call this condition \emph{\gdd}, since it is a strengthening of the standard (strict) diagonal dominance condition 
\begin{equation}
d-\sum_{j_1=1}^d\frac{1}{|A_{j_1,j_1}(\bm{x})|}\sum_{j_2\in\range{d}\backslash\{j_1\}} |A_{j_1,j_2}(\bm{x})| > 0.
\end{equation}
Observe that \eq{DDM} holds for the Poisson equation \eq{Poisson} with $C=1$.

In this paper, we focus on the following boundary value problem:

\begin{problem}\label{prb:pde}
In the \emph{quantum PDE problem}, we are given a system of second-order elliptic equations
\begin{equation}
\L(u(\bm{x}))=\sum_{\|\bm{j}\|_1=2}A_{\bm{j}}\frac{\partial^{\bm{j}}}{\partial \bm{x}^{\bm{j}}}u(\bm{x})=\sum_{j_1,j_2=1}^dA_{j_1j_2}\frac{\partial^2 u(\bm{x})}{\partial x_{j_1} \partial x_{j_2}}=f(\bm{x})
\end{equation}
satisfying
the \gdd\ condition \eq{DDM},
where the variable $\bm{x}=(x_1,\ldots,x_d) \in \D=[-1, 1]^d$ is a $d$-dimensional vector, the inhomogeneity $f(\bm{x})\in\C$ is a scalar function of $\bm{x}$ satisfying $f(\bm{x}) \in C^\infty$, and the linear coefficients $A_{\bm{j}}\in\C$. We are also given boundary conditions $u(\bm{x}) = \gamma(\bm{x}) \in \partial\D$ or $\frac{\partial u(\bm{x})}{\partial x_j} {\big |}_{x_j= \pm 1} = \gamma (\bm{x}) |_{x_j = \pm 1} \in \partial\D$ where $\gamma (\bm{x}) \in C^{\infty} $. We assume there exists a weak solution $\hat{u}(\bm{x})\in\C$ for the boundary value problem (see Reference~\cite[Section 6.1.2]{eve10}). Given oracles that compute the coefficients $A_{\bm{j}}$, and that prepare normalized states $|\gamma(\bm{x})\rangle$  and $|f(\bm{x})\rangle$ whose amplitudes are proportional to $\gamma(\bm{x})$ and $f(\bm{x})$ on a set of interpolation nodes $\bm{x}$, the goal is to output a quantum state $|u(\bm{x})\rangle$ whose amplitudes are proportional to $u(\bm{x})$ on a set of interpolation nodes $\bm{x}$.
\end{problem}


\section{Finite difference method}
\label{sec:fdm}

We now describe our first approach to quantum algorithms for linear PDEs, based on the finite difference method (FDM). Using this approach, we show the following.


\begin{restatable}{theorem}{theoremFDM}\label{thm:main_fdm}
There exists a quantum algorithm that outputs a state $\epsilon$-close to $| u \rangle$ that runs in time
\begin{align}
\tilde O\biggl( d^{6.5} \log^{4.5} \Bigl(\Bigl| \frac{\d^{2k+1} u}{\d x^{2k+1}} \Bigr| / \epsilon \Bigr) \sqrt{ \log \Bigl[d^{4} \log^{3} \Bigl(\Bigl| \frac{\d^{2k+1} u}{\d x^{2k+1}} \Bigr| / \epsilon \Bigr)/ \epsilon\Bigr]} \biggr)
\end{align}
and makes
\begin{align}
\tilde O\biggl( d^{4}  \log^{3} \Bigl(\Bigl| \frac{\d^{2k+1} u}{\d x^{2k+1}} \Bigr| / \epsilon \Bigr) \sqrt{ \log \Bigl[d^{4} \log^{3} \Bigl(\Bigl| \frac{\d^{2k+1} u}{\d x^{2k+1}} \Bigr| / \epsilon \Bigr)/ \epsilon\Bigr]} \biggr)
\end{align}
queries to the oracle for $\vec f$.
\end{restatable}

To show this, we first construct a linear system corresponding to the finite difference approximation of Poisson's equation with periodic boundary conditions and bound the error of this high-order FDM in \sec{fdm_system} (\lem{high_order}). Then we bound the condition number of this system in \sec{fdm_cnum} (\lem{condnumpbc} and \lem{cnum_higherd}) and bound the error of approximation in \sec{fdm_error} (\lem{error_analysis}). We use these results to give an efficient quantum algorithm in \sec{fdm_alg}, establishing \thm{main_fdm}. We conclude by discussing how to use the method of images to apply this algorithm for Neumann and Dirichlet boundary conditions in \sec{fdm_images}.

The FDM approximates the derivative of a function $f$ at a point $\bm{x}$ in terms of the values of $f$ on a finite set of points near $\bm{x}$.
Generally there are no restrictions on where these points are located relative to $\bm{x}$, but they are typically taken to be uniformly spaced points with respect to a certain coordinate. This corresponds to discretizing $[-1,1]^d$ (or $[0,2 \pi )^d$) to a $d$-dimensional rectangular lattice (where we use periodic boundary conditions).

For a scalar field, in which $u(\bm{x}) \in \C$, the canonical elliptic PDE is Poisson's equation \eq{Poisson},
which we consider solving on $[0, 2\pi)^d$ with periodic boundary conditions. This also implies results for
the domain $\Omega = [-1,1]^d$ under Dirichlet ($u(\partial \Omega )=0$) and Neumann ($\hat n \cdot \nabla u(\partial \Omega) =0$ where $\hat n$ denotes the normal direction to $\partial \Omega$, which for domain $\Omega=[-1,1]^d$ is equivalent to $\frac{\partial u}{\partial x_j}\big|_{x_j=\pm 1}=0$ for $j \in [d]$) boundary conditions.

\subsection{Linear system}
\label{sec:fdm_system}

To approximate the second derivatives appearing in Poisson's equation, we apply the central finite difference formula of order $2k$. Taking $x_j=j h$ for a lattice with spacing $h$, this formula gives the approximation
\begin{align}
f''(0) \approx \frac{1}{h^2} \sum_{j=-k}^{k} r_j f(jh)
\end{align}
where the coefficients are \cite{kivlichan2017bounding,li2005general}
\begin{align}
  r_j:=
  \begin{cases}
    \frac{2(-1)^{j+1} (k!)^2}{j^2 (k-j)!(k+j)!} & j \in [k]\\
    -2 \sum_{j=1}^{k} r_j & j=0 \\
    r_{-j} & j \in -[k].
  \end{cases}
  \label{eq:fdm_coeffs}
\end{align}
We leave the dependence on $k$ implicit in this notation. The following lemma characterizes the error of this formula.

\begin{lemma}[{\cite[Theorem 7]{kivlichan2017bounding}}]\label{lem:high_order}
Let $k\ge 1$ and suppose $f(x) \in C^{2k+1}$ for $x \in \mathbb{R}$. Define the coefficients $r_j$ as in \eq{fdm_coeffs}. Then
\begin{align}
  \frac{\d^2 u(x_0) }{\d x^2} =
 \frac{1}{h^2} \sum_{j=-k}^{k} r_j f(x_0+jh) + O\Bigl( \Bigl|\frac{\d^{2k+1} u}{\d x^{2k+1}}\Bigr| \Bigl(\frac{eh}{2}\Bigr)^{2k-1}\Bigr)
\end{align}
where
\begin{align}
  \Bigl|\frac{\d^{2k+1} u}{\d x^{2k+1}}\Bigr| := \max_{y \in [x_0-kh,x_0+kh] } \Bigl|\frac{\d^{2k+1} u}{\d x^{2k+1}} (y)\Bigr|.
\end{align}
\end{lemma}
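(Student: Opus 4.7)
The plan is a Taylor-expansion argument hinging on moment identities for the coefficients $r_j$, followed by a Stirling-type bound on the residual. I first apply Taylor's theorem with Lagrange remainder (available since $f \in C^{2k+1}$) to write
$$f(x_0+jh) = \sum_{m=0}^{2k} \frac{f^{(m)}(x_0)}{m!}(jh)^m + \frac{f^{(2k+1)}(\xi_j)}{(2k+1)!}(jh)^{2k+1}$$
for each $j\in\{-k,\ldots,k\}$ and some $\xi_j$ between $x_0$ and $x_0+jh$. Substituting into $\sum_{j=-k}^k r_j f(x_0+jh)$ and swapping the two finite sums yields $\sum_{m=0}^{2k} \frac{f^{(m)}(x_0)h^m}{m!}M_m + R$, where $M_m := \sum_{j=-k}^k r_j j^m$ and $R$ is the residual collecting the $f^{(2k+1)}(\xi_j)$ contributions.

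Next I would verify the moment identities $M_0=0$, $M_2=2$, and $M_m=0$ for all other $0\le m\le 2k$. The symmetry $r_{-j}=r_j$ built into \eq{fdm_coeffs} immediately kills every odd $M_m$; the even cases follow from the closed form in \eq{fdm_coeffs} (equivalently, the $r_j$ are the unique symmetric weights making $\tfrac{1}{h^2}\sum_j r_j f(x_0+jh)$ exact on every polynomial of degree $\le 2k$, which amounts to inverting a Vandermonde system on the nodes $\{-k,\ldots,k\}$). Granting these identities collapses the explicit part of the sum to $h^2 f''(x_0)$, so after dividing by $h^2$ the lemma reduces to showing $|R|/h^2 = O(M (eh/2)^{2k-1})$, where $M := \max_{y \in [x_0-kh,x_0+kh]}|f^{(2k+1)}(y)|$.

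The main obstacle is this residual estimate. Triangle inequality gives $|R|/h^2 \leq \frac{M h^{2k-1}}{(2k+1)!}\sum_{j=-k}^k|r_j|\,|j|^{2k+1}$. Substituting $|r_j|=2(k!)^2/(j^2(k-j)!(k+j)!)$ for $j \neq 0$ and using $(k-j)!(k+j)!=(2k)!/\binom{2k}{k+j}$ rewrites the weighted sum as $\tfrac{4(k!)^2}{(2k)!}\sum_{j=1}^k j^{2k-1}\binom{2k}{k+j}$. Crude bounds $j^{2k-1}\le k^{2k-1}$ and $\sum_{j=1}^k \binom{2k}{k+j}\le 2^{2k-1}$, together with Stirling's estimates on $(2k+1)!$ and $\binom{2k}{k}$, collapse the prefactor to the form $O(1/k)\cdot (e/2)^{2k-1}$ after a short rearrangement of powers of $2$, $e$, and $k$. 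The delicate point is that the exponent $2k-1$ (rather than $2k+1$) arises from two concurrent cancellations --- dividing out one factor of $h^2$, and the $j^2$ in the denominator of $r_j$ absorbing two powers of $j$ from $|j|^{2k+1}$ --- while the base $e/2$ emerges precisely from the ratio of $4^k$ in $\binom{2k}{k}$ against $(2k/e)^{2k}$ from Stirling's formula, so any looser intermediate bound would spoil the stated rate.
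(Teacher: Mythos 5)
The paper does not actually prove this lemma: it is imported verbatim as Theorem~7 of \cite{kivlichan2017bounding}, so there is no internal proof to compare against. Your sketch is therefore doing more work than the paper does, and it is essentially the argument that the cited reference itself uses: Taylor-expand to order $2k$ with Lagrange remainder, use the moment identities of the symmetric weights to isolate $h^2 f''(x_0)$, and bound the remainder via $(k-j)!(k+j)! = (2k)!/\binom{2k}{k+j}$ together with Stirling. I checked your residual estimate and it closes: $\sum_{j}|r_j||j|^{2k+1} = \tfrac{4(k!)^2}{(2k)!}\sum_{j=1}^k j^{2k-1}\binom{2k}{k+j} \le 4\sqrt{k}\,k^{2k-1}$ using $1/\binom{2k}{k}\le 2\sqrt{k}/4^k$, and dividing by $(2k+1)!\ge \sqrt{4\pi k}\,(2k/e)^{2k}$ yields $O\bigl(\tfrac{1}{k}(e/2)^{2k}\bigr)$, so $|R|/h^2 = O\bigl(M(eh/2)^{2k-1}\bigr)$ as required. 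The one soft spot is the moment identities: symmetry kills the odd moments, and $M_0=0$ is immediate from the definition of $r_0$, but you only \emph{assert} $M_2=2$ and $M_4=\cdots=M_{2k}=0$, justifying them by uniqueness of the interpolatory weights on the nodes $\{-k,\ldots,k\}$. That appeal is legitimate only if you also establish that the closed form \eq{fdm_coeffs} \emph{is} the solution of that Vandermonde system (which is how \cite{li2005general,kivlichan2017bounding} derive it); otherwise each even moment requires a binomial-sum evaluation of the same flavor as the computation $\sum_{j=1}^k r_j j^2 = 1$ that the paper carries out explicitly inside the proof of \lem{condnumpbc}. Either patch is routine, so I would count this as a fillable gap rather than an error, but as written the even-moment step is the only part of your argument that is not self-contained.
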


Since we assume periodic boundary conditions and apply the same FDM formula at each lattice site, the matrices we consider are circulant.
Define the $2n \times 2n$ matrix $S$ to have entries $S_{i,j} = \delta_{i,j+1 \bmod 2n} $.
If we represent the solution $u(x)$ as a vector $\vec u = \sum_{j=1}^{2n} u( \pi j / n) \vec e_j$, then we can approximate Poisson's equation using a central difference formula as
\begin{align}
\frac{1}{h^2 }L \vec u = \frac{1}{h^2} \Bigl(r_0 I + \sum_{j=1}^{k} r_j (S^j + S^{-j})\Bigr) \vec u = \vec f
\end{align}
where $\vec f = \sum_{j=1}^{2n} f( \pi j / n) \vec e_j$. The solution $\vec u$ corresponds exactly with the quantum state we want to produce, so we do not have to perform any post-processing such as in Reference~\cite{CJO17} and other quantum differential equation algorithms. The matrix in this linear system is just the finite difference matrix, so it suffices to bound its condition number and approximation error (whereas previous quantum algorithms involved more complicated linear systems).

\subsection{Condition number}
\label{sec:fdm_cnum}

The following lemma characterizes the condition number of a circulant Laplacian on $2n$ points.

\begin{lemma} \label{lem:condnumpbc} For $k < (6/\pi^2)^{1/3} n^{2/3}$, the matrix $L = r_0 I + \sum_{j=1}^{k} r_j (S^j + S^{-j})$ with $r_j$ as in \eq{fdm_coeffs} has condition number $\kappa (L) = O(n^2)$.
\end{lemma}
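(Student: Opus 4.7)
The plan is to exploit the circulant structure. The matrix $L$ is real, symmetric, and circulant of size $2n$, so it is diagonalized by the discrete Fourier transform, with eigenvalues
\[
\lambda_m \;=\; r_0 + 2\sum_{j=1}^{k} r_j \cos\!\Bigl(\frac{\pi j m}{n}\Bigr), \qquad m = 0, 1, \ldots, 2n-1.
\]
Using $r_0 = -2\sum_{j=1}^{k} r_j$, this rewrites as $\lambda_m = -4\sum_{j=1}^{k} r_j \sin^{2}(\pi j m/(2n))$, so $\lambda_{0}=0$ (with the constant vector spanning the kernel of $L$) and $\lambda_{2n-m}=\lambda_{m}$ reduces the task to $m\in\{1,\ldots,n\}$. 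The condition number, defined on the complement of the kernel, is $\max_{m}|\lambda_{m}|/\min_{m}|\lambda_{m}|$, so the goal splits into proving $\min_{1\le m\le n}|\lambda_{m}|=\Theta(1/n^{2})$ and $\max_{1\le m\le n}|\lambda_{m}|=\Theta(1)$.

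First, for the smallest eigenvalue, I apply the finite-difference error lemma stated above to $u(x)=\cos(x)$ at $x_{0}=0$: this gives $\lambda_{1}/h^{2}=-1+O((eh/2)^{2k-1})$ with $h=\pi/n$, so $|\lambda_{1}|=(\pi^{2}/n^{2})(1+o(1))=\Theta(1/n^{2})$ under the hypothesis $k=o(n^{2/3})$ (much milder conditions already suffice here). To rule out a smaller value at another mode, I would Taylor-expand $\sin^{2}(\pi jm/(2n))$ and invoke the moment identities built into the FDM construction---$\sum_{j=1}^{k}r_{j}j^{2}=1$ and $\sum_{j=1}^{k}r_{j}j^{2p}=0$ for $2\le p\le k$---which are immediate from the requirement that the central difference formula be exact on polynomials of degree at most $2k$. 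These yield $\lambda_{m}=-(\pi m/n)^{2}(1+o(1))$ over the range $km=o(n)$, so $|\lambda_{m}|$ is monotone increasing in $m$ there.

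Next, for the largest eigenvalue, I specialize to $m=n$ and use $\cos(\pi j)=(-1)^{j}$ together with the sign pattern $r_{j}=(-1)^{j+1}|r_{j}|$ to obtain $\lambda_{n}=-4\sum_{j\text{ odd}}|r_{j}|$. The elementary estimate $|r_{j}|=2(k!)^{2}/(j^{2}(k-j)!(k+j)!)\le 2/j^{2}$ (which follows from $(k!)^{2}\le(k-j)!(k+j)!$) then gives $|\lambda_{n}|\le 2\pi^{2}/3$, while a matching lower bound follows from the $j=1$ contribution alone: $|\lambda_{n}|\ge 4|r_{1}|=8k/(k+1)\ge 4$. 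The same bound $|\lambda_{m}|\le 4\sum_{j}|r_{j}|=O(1)$ applies uniformly in $m$, so $\max_{m}|\lambda_{m}|=\Theta(1)$.

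The main obstacle is patching the low-frequency Taylor estimate (sharp for $m\ll n/k$) together with the explicit $m=n$ computation: for intermediate $m$ I still need $|\lambda_{m}|\ge c/n^{2}$ without the precise asymptotic form being available. The hypothesis $k=o(n^{2/3})$ is precisely what makes this feasible---it guarantees that the low-frequency regime extends to modes where $|\lambda_{m}|$ has already grown to $\Omega(n^{-4/3})$, far larger than $1/n^{2}$, after which a crude tail estimate (for instance, factoring out the leading $j=1$ contribution $-4|r_{1}|\sin^{2}(\pi m/(2n))$ and bounding the alternating tail $\sum_{j=2}^{k}(-1)^{j+1}|r_{j}|\sin^{2}(\pi jm/(2n))$ in absolute value) suffices to close the gap. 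Carefully performing this intermediate-range estimate will be the technically delicate step.
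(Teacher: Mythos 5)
Your proposal follows essentially the same route as the paper's proof: diagonalize the circulant matrix, bound the largest eigenvalue by $4\sum_{j}|r_j|=O(1)$ using $|r_j|\le 2/j^2$, and show that the relevant small eigenvalue is $\Theta(1/n^2)$ by exploiting the exactness of the order-$2k$ stencil on low frequencies. Two of your refinements are genuinely worth keeping. First, obtaining $\lambda_1=-h^2(1+o(1))$ by applying the quoted error lemma to $u(x)=\cos x$ is a clean substitute for the paper's binomial-identity computation of $\sum_{j}r_jj^2=1$; both are correct. Second, you prove $|\lambda_n|\ge 4|r_1|=8k/(k+1)\ge 4$, which the paper omits but which is actually needed for the direction $\kappa=\Omega(n^2)$: knowing only $\norm{L}\le 4\pi^2/3$ and $|\lambda_1|=\Theta(1/n^2)$ does not lower-bound $\lambda_{\max}/\lambda_{\min}$.

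The gap you flag at the end is real, and you should be aware of two things about it. (i) The paper's own proof has the same lacuna: it computes only $\lambda_1$ and never verifies that no other nonzero eigenvalue is smaller than $\Omega(1/n^2)$, which is what the upper bound $\kappa=O(n^2)$ requires; so you are not missing a step the paper supplies. (ii) The patch you sketch does not work as stated: bounding the alternating tail $\sum_{j\ge2}(-1)^{j+1}|r_j|\sin^2(\pi jm/(2n))$ in absolute value gives at best $\sum_{j\ge2}|r_j|\le 2(\pi^2/6-1)\approx 1.29$, while the retained term is $|r_1|\sin^2(\pi m/(2n))\le 2$; the former dominates the latter unless $m\gtrsim 0.6\,n$, so the intermediate range $n/k\ll m\ll n/2$ remains uncovered and the cancellation in the alternating sum is essential there. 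The clean fix is the classical positive expansion of the symbol: one can show $-\lambda(\theta)=\sum_{j=1}^{k}\frac{2((j-1)!)^2}{(2j)!}\bigl(2\sin(\theta/2)\bigr)^{2j}$ (it is the degree-$k$ truncation of the series for $(2\arcsin(z/2))^2$ in $z=2\sin(\theta/2)$), whence every term is nonnegative and $|\lambda_m|\ge 4\sin^2(\pi m/(2n))\ge 4m^2/n^2$ for all $m\in\{1,\dots,n\}$, with no constraint on $k$ at all. With that substitution your outline becomes a complete proof, and in fact a slightly stronger one than the paper's.
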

\begin{proof}

We first upper bound $\norm{L}$ using Gershgorin's circle theorem \cite{horn2012matrix} (a similar argument appears in Reference~\cite{kivlichan2017bounding}).
Note that
\begin{align}
  |r_j| = \frac{2(k!)^2}{j^2 (k-j)!(k+j)!} \leq \frac{2}{j^2}
  \label{eq:rbound}
\end{align}
since
\begin{align}
  \frac{(k!)^2}{(k-j)!(k+j)!} = \frac{k(k-1) \cdots (k-j+1)}{(k+j)(k+j-1) \cdots (k+1)}<1.
\end{align}
The radii of the Gershgorin discs are
\begin{align}
2 \sum_{j=1}^{k} |r_j| & \leq 2 \sum_{j=1}^{k} \frac{2}{j^2}  \leq \frac{2\pi^2}{3}.
\end{align}
The discs are centered at $r_0$, and
\begin{align}
|r_0| & \leq 2 \sum_{j=1}^{k} |r_j|  \leq \frac{2\pi^2}{3},
\end{align}
so $\norm{L} \leq \frac{4\pi^2}{3}$.

To lower bound $\norm{L^{-1}}$ we lower bound the (absolute value of the) smallest non-zero eigenvalue of $L$ (since by construction the all-ones vector is a zero eigenvector). Let $\omega := \exp ( \pi i /n)$. Since $L$ is circulant, its eigenvalues are
\begin{align}
\lambda_l  & = r_0 + \sum_{j=1}^{k} r_j (\omega^{lj} + \omega^{-lj})\\
& = r_0 + \sum_{j=1}^{k} 2 r_j \cos \Bigl(\frac{\pi l j }{n}\Bigr)\\
& = r_0 + \sum_{j=1}^{k} 2 r_j \left( 1 - \frac{\pi^2 l^2 j^2}{2n^2} + \frac{(\pi c_j)^4}{4!n^4} \cos \left(\frac{\pi c_j }{n}\right) \right) \\
& = \sum_{j=1}^{k} 2 r_j \left( - \frac{\pi^2 l^2 j^2}{2n^2} + \frac{(\pi c_j)^4}{4!n^4} \cos \left(\frac{\pi c_j }{n}\right) \right)
\end{align}
where the $c_j \in [0, lj]$  arise from the Taylor remainder theorem.
Using \eq{rbound}, we have
\begin{align}
\bigg|\lambda_1 + \frac{ \pi^2}{n^2} \sum_{j=1}^{k}  r_j j^2 \bigg|
&\le \frac{\pi^4k^3}{6n^4}.
\end{align}

We now compute the sum
\begin{align}
-\sum_{j=1}^{k}  r_j j^2  & = \sum_{j=1}^{k}  j^2  \frac{2(-1)^j (k!)^2}{j^2 (k+j)! (k-j)!} \\
& =2 (k!)^2 \sum_{j=1}^{k} \frac{(-1)^j }{ (k+j)! (k-j)!} \\
&= \frac{2 (k!)^2}{(2k)!} \sum_{j=1}^{k} (-1)^j \binom{2k}{k+j} \\
&= \frac{2 (k!)^2}{(2k)!} \sum_{j=k+1}^{2k} (-1)^{j+k} \binom{2k}{j} \\
& = (-1)^k \frac{(k!)^2}{(2k)!} \sum_{j=0,\, j \neq k}^{2k} (-1)^{j} \binom{2k}{j} \\
& =(-1)^k \frac{(k!)^2}{(2k)!} \biggl((1-1)^{2k}- (-1)^k \binom{2k}{k}\biggr) \\
& = -1.
\end{align}
Therefore, we have
\begin{align}
\lambda_1 &\le - \frac{ \pi^2}{n^2} + \frac{\pi^4 k^3}{6n^4}.
\end{align}

Finally, we see that
\begin{align}
  \kappa(L)
  &= \norm{L}\norm{L^{-1}} \\
  &\le \frac{4 \pi^2}{3}\Bigl(\frac{\pi^2}{n^2}-\frac{\pi^4 k^3}{6n^4}\Bigr)^{-1} \\
  &= \frac{4}{3}n^2\Bigl(1-\frac{\pi^2k^3}{6n^2} \Bigr)^{-1}
\end{align}
which is $O(n^2)$ provided $k < (6/\pi^2)^{1/3} n^{2/3}.$
\end{proof}

In $d$ dimensions, a similar analysis holds.

\begin{lemma} \label{lem:cnum_higherd} For $k < (6/\pi^2)^{1/3} n^{2/3}$, let $L := r_0 I + \sum_{j=1}^{k} r_j (S^j + S^{-j})$ with $r_j$ as in \eq{fdm_coeffs}. The matrix $L' := L \otimes I^{\otimes d-1} + I \otimes L \otimes I^{\otimes d-2} + \dots + I^{\otimes d-1} \otimes L$ has condition number $\kappa (L') = O(dn^2)$.
\end{lemma}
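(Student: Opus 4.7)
The plan is to exploit the Kronecker-sum structure of $L'$ and reduce everything to the one-dimensional analysis already carried out in \lem{condnumpbc}. Since $L$ is circulant, it is unitarily diagonalized by the discrete Fourier basis, and therefore $L'$ is simultaneously diagonalized by the $d$-fold tensor-product Fourier basis. Its eigenvalues are exactly the sums
\[
\lambda_{l_1,\ldots,l_d} \;=\; \lambda_{l_1}+\lambda_{l_2}+\cdots+\lambda_{l_d}, \qquad (l_1,\ldots,l_d)\in\{0,\ldots,2n-1\}^d,
\]
where the $\lambda_l$ are the eigenvalues of the 1D matrix $L$ already analyzed.

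To bound $\|L'\|$ from above, I would apply the triangle inequality to get $|\lambda_{l_1,\ldots,l_d}|\le d\,\|L\| = O(d)$, using $\|L\|=O(1)$ from \lem{condnumpbc}. For a matching lower bound, I would pick an index $l^\ast$ at which $|\lambda_{l^\ast}|=\|L\|=\Theta(1)$ and take $(l_1,\ldots,l_d)=(l^\ast,\ldots,l^\ast)$; this produces an eigenvalue of $L'$ of magnitude $d\,|\lambda_{l^\ast}|=\Omega(d)$. Hence $\|L'\|=\Theta(d)$.

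For $\|(L')^{-1}\|$, the index $(1,0,\ldots,0)$ yields the nonzero eigenvalue $\lambda_1 = -\pi^2/n^2+O(k^3/n^4)=\Theta(1/n^2)$ computed in \lem{condnumpbc}, so immediately $\|(L')^{-1}\|\ge\Omega(n^2)$. The matching upper bound $\|(L')^{-1}\|=O(n^2)$ requires showing that \emph{every} nonzero sum $\lambda_{l_1}+\cdots+\lambda_{l_d}$ has magnitude at least $|\lambda_1|$. The cleanest route is to verify that the 1D eigenvalues $\lambda_l = -4\sum_{j=1}^k r_j \sin^2(\pi lj/(2n))$ are all of one sign (in fact nonpositive, consistent with $L/h^2$ being a high-order approximation to the negative-semidefinite continuous Laplacian) throughout the regime $k=o(n^{2/3})$, together with the monotonicity $|\lambda_l|\ge|\lambda_1|$ for $l\ne 0$ (which holds in the small-$l$ Taylor regime and is easy to confirm in the large-$l$ regime where $|\lambda_l|=\Omega(1)$). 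Granted this, any nonzero sum obeys $|\sum_i \lambda_{l_i}|=\sum_i|\lambda_{l_i}|\ge|\lambda_1|=\Theta(1/n^2)$.

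Combining these two estimates gives $\kappa(L')=\|L'\|\cdot\|(L')^{-1}\|=\Theta(d)\cdot\Theta(n^2)=\Theta(dn^2)$. The main technical obstacle is the sign-uniformity (and monotonicity) check on the 1D eigenvalues, which is what prevents cancellation among the summands $\lambda_{l_i}$ in the Kronecker-sum eigenvalues of $L'$. Once that is in hand, everything else is a direct application of \lem{condnumpbc} together with the triangle inequality.
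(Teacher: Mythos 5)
Your proposal is correct and follows essentially the same route as the paper: diagonalize the circulant $L$ in the Fourier basis, observe that the eigenvalues of the Kronecker sum $L'$ are the $d$-fold sums of the one-dimensional eigenvalues, and invoke \lem{condnumpbc} for $\norm{L}=\Theta(1)$ and $\lambda_1=\Theta(1/n^2)$. You are in fact more careful than the paper's two-line argument, which silently assumes the sign-uniformity of the 1D spectrum that you correctly flag as the step preventing cancellation among the summands $\lambda_{l_1}+\cdots+\lambda_{l_d}$; that check does go through, since the eigenvalues can be rewritten as $\lambda_l=-\sum_{m=1}^{k}\frac{2((m-1)!)^2}{(2m)!}\bigl(2\sin(\pi l/(2n))\bigr)^{2m}$, which is manifestly nonpositive and monotone in $\sin^2(\pi l/(2n))$, giving $|\lambda_l|\ge|\lambda_1|$ for all $l\neq 0$.
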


\begin{proof}
By the triangle inequality for spectral norms, $\norm{L'} \leq d \norm{L}$.
Since $L$ has zero-sum rows by construction, the all-ones vector lies in its kernel, and thus the smallest non-zero eigenvalue of $L$ is the same as that of $L'$. Therefore we have
\begin{align}
  \kappa(L')
  &\le \frac{4}{3}d n^2 \Bigl(1-\frac{\pi^2k^3}{6n^2} \Bigr)^{-1}
\end{align}
which is $O(dn^2)$ provided $k < (6/\pi^2)^{1/3} n^{2/3}.$
\end{proof}

\subsection{Error analysis}
\label{sec:fdm_error}

There are two types of error relevant to our analysis: the FDM error and the QLSA error. We assume that we are able to perfectly generate states proportional to $\vec f$.
The FDM errors arise from the remainder terms in the finite difference formulas and from inexact approximations of the eigenvalues.

We introduce several states for the purpose of error analysis.
Let $|u \rangle$ be the quantum state that is proportional to $\vec u = \sum_{j \in \mathbb{Z}_{2n}^d } u(\pi j/n) \bigotimes_{i=1}^d e_{j_i}$ for the exact solution of the differential equation.
Let $| \bar u \rangle$ be the state output by a QLSA that exactly solves the linear system.
Let $| \tilde u \rangle$ be the state output by a QLSA with error.
Then the total error of approximating $|u \rangle $ by $| \tilde u \rangle$ is bounded by
\begin{align}
\norm{| u \rangle -| \tilde u \rangle }
& \leq \norm{| u \rangle -| \bar  u \rangle } + \norm{| \bar u \rangle -| \tilde u \rangle } \\
& = \epsilon_{\mathrm{FDM}} + \epsilon_{\mathrm{QLSA}}
\end{align}
\noindent and without loss of generality we can take $\epsilon_{\mathrm{FDM}}$ and $\epsilon_{\mathrm{QLSA}}$ to be of the same order of magnitude.

\begin{lemma}\label{lem:error_analysis}
Let $u(\vec x)$ be the exact solution of $( \sum_{i=1}^d \frac{\d^2 }{\d x_i^2}) u (\vec x) = f(\vec x) $.
Let $\vec u \in \mathbb{R}^{(2n)^d}$ encode the exact solution in the sense that $\vec u = \sum_{j \in \mathbb{Z}_{2n}^d } u(\pi j/n) \bigotimes_{i=1}^d e_{j_i}$.
Let $\bar u \in \mathbb{R}^{(2n)^d}$ be the exact solution of the FDM linear system $\frac{1}{h^2} L' \bar u = \vec f$, where $L'$ is a $d$-dimensional $(2k)$th-order Laplacian as above with $k<(6/\pi^2)^{1/3}n^{3/2}$, and $\vec f = \sum_{j=1}^{2n} f( \pi j / n) \vec e_j$.
Then $\norm{\vec u - \bar u} \leq O( 2^{d/2} n^{(d/2)-2k+1} \bigl| \frac{\d^{2k+1} u}{\d x^{2k+1}} \bigr| (e^2/4)^k)$.
\end{lemma}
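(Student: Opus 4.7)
The plan is to view the sampled exact solution $\vec u$ as an approximate solution to the FDM linear system and to bound its distance from the true FDM solution $\bar u$ using the spectral gap of $L'$ from \lem{cnum_higherd}. Specifically, I would set $\vec \delta := \frac{1}{h^2} L' \vec u - \vec f$ and subtract the defining equation $\frac{1}{h^2} L' \bar u = \vec f$ to obtain $\frac{1}{h^2} L' (\vec u - \bar u) = \vec \delta$. Inverting $L'$ on the orthogonal complement of its kernel then reduces the problem to bounding $\|\vec \delta\|$ and $\|(L'/h^2)^{-1}\|$ separately.

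For the residual, each entry of $\frac{1}{h^2} L' \vec u$ at a grid point is a sum of $d$ central-difference approximations to the partials $\partial^2 u/\partial x_i^2$ at that point, and the cited lemma of Ref.~\cite{kivlichan2017bounding} bounds each per-direction truncation error by $O(|\d^{2k+1} u/\d x^{2k+1}| (eh/2)^{2k-1})$. Thus the entrywise residual has the same form (with the $d$-dependence absorbed into the big-O or into the definition of the supremum over partials), and the $\ell^2$ norm is at most $\sqrt{(2n)^d}$ times this entrywise bound. For the inverse, the kernel of $L'$ is the one-dimensional span of the all-ones vector; since the periodic solvability condition forces $\vec f$ to be mean-zero, both $\vec u$ and $\bar u$ can be centered without changing $\|\vec u - \bar u\|$, so the difference lies in the orthogonal complement of the kernel. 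On that subspace the smallest eigenvalue of $L'$ is $\Theta(\pi^2/n^2)$, attained by a mode supported on a single coordinate direction (combining \lem{condnumpbc} with the tensor-product structure used in \lem{cnum_higherd}). With $h=\pi/n$, this gives $\|(L'/h^2)^{-1}\| = \Theta(1)$ on the complement, so $\|\vec u - \bar u\| = O(\|\vec \delta\|)$. Substituting $h = \pi/n$ in $(eh/2)^{2k-1}$ yields a factor $n^{1-2k}$, which combined with $\sqrt{(2n)^d} = 2^{d/2} n^{d/2}$ produces $2^{d/2} n^{(d/2)-2k+1}$, with the exponential-in-$k$ constants absorbed into $(e^2/4)^k$ up to polynomial-in-$k$ corrections hidden in the big-O.

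The main obstacle I anticipate is the careful treatment of the kernel of $L'$: without centering, $(L'/h^2)^{-1}$ is undefined, and one must verify that the periodic solvability condition places $\vec f$ in the range of $L'$ and lets us shift $\vec u$ and $\bar u$ into the same orthogonal complement. A secondary source of slack is the $\ell^\infty \to \ell^2$ conversion, which costs a $\sqrt{(2n)^d}$ factor and is responsible for the exponential dependence on $d$ in the final bound; sharpening it would require a finer analysis of how truncation errors at distinct grid points interact through $(L')^{-1}$, but this is not needed for the stated estimate.
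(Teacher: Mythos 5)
Your proposal is correct and follows essentially the same route as the paper: form the residual $\frac{1}{h^2}L'(\vec u-\bar u)=\vec\delta$, bound $\|\vec\delta\|$ by $\sqrt{(2n)^d}$ times the entrywise truncation error $O\bigl(\bigl|\frac{\d^{2k+1}u}{\d x^{2k+1}}\bigr|(eh/2)^{2k-1}\bigr)$, and invert $L'$ using the smallest nonzero eigenvalue $\Theta(1/n^2)$ from \lem{condnumpbc}, with $h=\Theta(1/n)$ reproducing the stated bound. Your explicit treatment of the kernel of $L'$ (centering modulo the all-ones vector and checking that the residual lies in the range) is a point the paper's proof leaves implicit, and is a welcome bit of extra care rather than a deviation.
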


\begin{proof}
The remainder term of the central difference formula is $O( \bigl| \frac{\d^{2k+1} u}{\d x^{2k+1}} \bigr| h^{2k-1} (e/2)^{2k} )$, so
\begin{align}
\frac{1}{h^2} L' \vec u & = \vec f + O\Bigl( \Bigl| \frac{\d^{2k+1} u}{\d x^{2k+1}} \Bigr| (eh/2)^{2k-1}\Bigr)  \vec \epsilon
\end{align}
where $\vec \epsilon$ is a $(2n)^d$ dimensional vector whose entries are $O(1)$. This implies
\begin{align}
\frac{1}{h^2} L' (\vec u - \bar u) & =O\Bigl( \Bigl| \frac{\d^{2k+1} u}{\d x^{2k+1}} \Bigr| (eh/2)^{2k-1} \Bigr) \vec \epsilon
\end{align}
and therefore
\begin{align}
 \norm{\vec u - \bar u} & =O\Bigl( \Bigl| \frac{\d^{2k+1} u}{\d x^{2k+1}} \Bigr| (eh/2)^{2k+1} \Bigr) \norm{(L')^{-1} \vec \epsilon} \\
 & =O\Bigl( (2n)^{d/2}  \Bigl| \frac{\d^{2k+1} u}{\d x^{2k+1}} \Bigr| (eh/2)^{2k+1} / \lambda_1\Bigr).
\end{align}

By \lem{condnumpbc} we have $\lambda_1 = \Theta (1/n^2)$, and since $h = \Theta (1/n)$, we have
\begin{align}
 \norm{\vec u - \bar u}
 &=O \Bigl( 2^{d/2} n^{(d/2)-2k+1}  \Bigl| \frac{\d^{2k+1} u}{\d x^{2k+1}} \Bigr| (e/2)^{2k}\Bigr)
\end{align}
as claimed.
\end{proof}


\subsection{FDM algorithm}
\label{sec:fdm_alg}

To apply QLSAs, we must consider the complexity of simulating Hamiltonians that correspond to Laplacian FDM operators.
For periodic boundary conditions, the Laplacians are circulant, so they can be diagonalized by the QFT $F$ (or a tensor product of QFTs for the multi-dimensional Laplacian $L'$), i.e., $D=F^{\dagger}LF$ is diagonal. In this case the simplest way to simulate $\exp (i Lt)$ is to perform the inverse QFT, apply controlled phase rotations to implement $\exp (iDt)$, and perform the QFT.
Reference~\cite{schuch2003programmable} shows how to exactly implement arbitrary diagonal unitaries on $m$ qubits using $O(2^m)$ gates. Since we consider Laplacians on $n$ lattice sites, simulating $\exp (iLt)$ takes $O(n)$ gates with the dominant contribution coming from the phase rotations (alternatively, the methods of Reference~\cite{welch2014efficient} or Reference~\cite{berry2015simulating} could also be used).
Using this Hamiltonian simulation algorithm in a QLSA for the FDM linear system gives us the following theorem. We restate \thm{main_fdm} as follows.

\theoremFDM*

\begin{proof}
We use the Fourier series based QLSA from Reference~\cite{CKS15}.
By Theorem 3 of that work, the QLSA makes $O (\kappa \sqrt{ \log (\kappa / \epsilon_{\mathrm{QLSA}})} )$ uses of a Hamiltonian simulation algorithm and uses of the oracle for the inhomogeneity.
For Hamiltonian simulation we use $d$ parallel QFTs and phase rotations as described in Reference~\cite{schuch2003programmable}, for a total of $O (dn\kappa \sqrt{\log (\kappa / \epsilon_{\mathrm{QLSA}})} )$ gates. The condition number for the $d$-dimensional Laplacian scales as $\kappa = O(dn^2)$.

We take $\epsilon_{\mathrm{FDM}}$ and $\epsilon_{\mathrm{QLSA}}$ to be of the same order and just write $\epsilon$. Then the QLSA has time complexity $O( d^{2}n^{3} \sqrt{ \log (dn^2/ \epsilon)} )$ and query complexity $O(dn^2 \log (dn^2 / \epsilon ) )$.
The adjustable parameters are the number of lattice sites $n$ and the order $2k$ of the finite difference formula.
To keep the error below the target error of $\epsilon$ we require
\begin{align}
2^{d/2} n^{(d/2)-2k+1}  \Bigl| \frac{\d^{2k+1} u}{\d x^{2k+1}} \Bigr| (e/2)^{2k} & = O( \epsilon ),
\end{align}
or equivalently,
\begin{align}
(-d/2) + (2k-1-(d/2)) \log (n) -2k \log (e/2) & = \Omega \Bigl( \log \Bigl(\Bigl| \frac{\d^{2k+1} u}{\d x^{2k+1}} \Bigr| / \epsilon  \Bigr)\Bigr).
\label{eq:control_error}
\end{align}
Now we focus on the choice of adjustable $n$ and $k$ relying on $\epsilon$. This procedure is inspired by the classical adaptive FDM \cite{BG87}, so we call it the adaptive FDM approach.
We must have $2k-1>d/2$ for the left-hand side of \eq{control_error} to be positive for large $n$. Indeed, we find the best performance by taking $k$ as large as possible subject to the assumption of \lem{condnumpbc}, i.e.,
$k = cn^{2/3}$ where $c:=(6/\pi^2)^{1/3}$.
For this choice of $k$ and for $n$ sufficiently large, \eq{control_error} is equivalent to
\begin{align}
k\log (n) = cn^{2/3} \log (n)  = \Omega \Bigl( \log \Bigl(\Bigl| \frac{\d^{2k+1} u}{\d x^{2k+1}} \Bigr| / \epsilon  \Bigr)\Bigr).
\label{eq:kn2}
\end{align}
To satisfy the condition $2cn^{2/3}-1 > d/2$, we must have $n = \Omega(d^{3/2})$. Combining this observation with \eq{kn2}, we choose
\begin{align}
n = \Theta \Bigl( d^{3/2} \log^{3/2} \Bigl(\Bigl| \frac{\d^{2k+1} u}{\d x^{2k+1}} \Bigr| / \epsilon  \Bigr) \Bigr)
\end{align}
so that
\begin{align}
k = cn^{2/3} = \Theta \Bigl( d\log \Bigl(\Bigl| \frac{\d^{2k+1} u}{\d x^{2k+1}} \Bigr| / \epsilon  \Bigr) \Bigr).
\end{align}
The QLSA
then has the stated time complexity
\begin{align}
\tilde O( d^{2}n^{3} \sqrt{ \log (dn^2/ \epsilon)} ) = O\biggl( d^{6.5} \log^{4.5} \Bigl(\Bigl| \frac{\d^{2k+1} u}{\d x^{2k+1}} \Bigr| / \epsilon  ) \sqrt{ \log \Bigl[d^{4} \log^{3} \Bigl(\Bigl| \frac{\d^{2k+1} u}{\d x^{2k+1}} \Bigr| / \epsilon \Bigr)/ \epsilon\Bigr]} \biggr),
\end{align}
and makes
\begin{align}
\tilde O(dn^2 \log (dn^2 / \epsilon ) ) = O\biggl( d^{4 }  \log^{3} \Bigl(\Bigl| \frac{\d^{2k+1} u}{\d x^{2k+1}} \Bigr| / \epsilon \Bigr) \sqrt{ \log \Bigl[d^{4} \log^{3} \Bigl(\Bigl| \frac{\d^{2k+1} u}{\d x^{2k+1}} \Bigr| / \epsilon )/ \epsilon\Bigr]} \biggr).
\end{align}
queries to the oracle for $\vec f$.
\end{proof}

This can be compared to the cost of using the conjugate gradient method to solve the same linear system classically.
The sparse conjugate gradient algorithm for an $N \times N$ matrix has time complexity $O(Ns \sqrt{\kappa} \log (1 / \epsilon))$. For arbitrary dimension $N=\Theta (n^d)$, we have $s=dk=cdn^{2/3}$ and $\kappa = O(dn^2)$, so that the time complexity is $ O( d^{4+3d/2} \log (1/\epsilon) \allowbreak \log^{5/2+3d/2} ( \bigl|\frac{\d^{2k+1} u}{\d x^{2k+1}} \bigr| / \epsilon ) )$.
Alternatively, $d$ fast Fourier transforms could be used, although this will generally take $\Omega (n^d) = \Omega ( d^{3d/2 } \log^{3d/2 } ( \bigl|\frac{\d^{2k+1} u}{\d x^{2k+1}} \bigr| / \epsilon )) $ time.


\subsection{Boundary conditions via the method of images}
\label{sec:fdm_images}

We can apply the method of images to deal with homogeneous Neumann and Dirichlet boundary conditions using the algorithm for periodic boundary conditions described above. In the method of images, the domain $[-1,1]$ is extended to include all of $\mathbb{R}$, and the boundary conditions are related to symmetries of the solutions.
For a pair of Dirichlet boundary conditions there are two symmetries: the solutions are anti-symmetric about $-1$ (i.e., $f(-x-1)=-f(x-1)$) and anti-symmetric about 1 (i.e., $f(1+x)= - f(1-x)$). Continuity and anti-symmetry about $-1$ and $1$ imply $f(-1)=f(1)=0$, and furthermore that $f(x)=0$ for all odd $x \in \mathbb{Z}$ and that $f(x+4)=f(x)$ for all $x \in \R$. For Neumann boundary conditions, the solutions are instead symmetric about $-1$ and $1$, which also implies $f(x+2)=f(x)$ for all $x \in \R$.

We would like to combine the method of images with the FDM to arrive at finite difference formulas for this special case.
In both cases, the method of images implies that the solutions are periodic, so without loss of generality we can consider a lattice on $[0, 2 \pi)$ instead of a lattice on $\mathbb{R}$.
It is useful to think of this lattice in terms of the cycle graph on $2n$ vertices, i.e., $(V,E)= ( \mathbb{Z}_{2n} , \{(i,i+1) \mid i \in \mathbb{Z}_{2n} \} )$, which means that the vectors encoding the solution $u(x)$ will lie in $\mathbb{R}^{2n}$. Let each vector $\vec e_j$ correspond to the vertex $j$. Then we divide $\mathbb{R}^{2n}$ into a symmetric and an anti-symmetric subspace, namely $\mathrm{span} \{ e_j + e_{2n+1-j} \}_{j=1}^{n} $ and $\mathrm{span} \{ e_j - e_{2n+1-j} \}_{j=1}^{n} $, respectively. Vectors lying in the symmetric subspace correspond to solutions that are symmetric about $0$ and $\pi$, so they obey Neumann boundary conditions at $0$ and $\pi$; similarly, vectors in the anti-symmetric space correspond to solutions obeying Dirichlet boundary conditions at $0$ and $\pi$.

Restricting to a subspace of vectors reduces the size of the FDM vectors and matrices we consider, and the symmetry of that subspace indicates how to adjust the coefficients.

If the FDM linear system is $L'' \vec u'' = \vec f''$ then $L''$ has entries
\begin{align}
L''_{i,j} =
\begin{cases}
r_{|i-j|} \pm r_{i+j-1} & i \leq k \\
r_{|i-j|} & k < i \leq n-k \\
r_{|i-j|} \pm r_{2n-i-j+1 } & n-k \leq i
\end{cases}
\end{align}
where $+$ ($-$) is chosen for Neumann (Dirichlet) boundary conditions and due to the truncation order $k$, $r_j=0$ for any $j > k$.
This is similar to how Laplacian coefficients are modified when imposing boundary conditions in discrete variable representations \cite{colbert1992novel}.

For the purpose of solving the new linear systems using quantum algorithms, we still treat these cases as obeying periodic boundary conditions. We assume access to an oracle that produces states $| f'' \rangle $ proportional to the inhomogeneity $f''(x)$. Then we apply the QLSA for periodic boundary conditions using $| f '' \rangle | \pm \rangle$ to encode the inhomogeneity, which will output solutions of the form $|u'' \rangle | \pm \rangle$. Here the ancillary state is chosen to be $| + \rangle$ ($| - \rangle$) for Neumann (Dirichlet) boundary conditions.

Typically, the (second-order) graph Laplacian for the path graph with Dirichlet boundary conditions has diagonal entries that are all equal to 2; however, using the above specification for the entries of $L$ leads to the $(1,1)$ and $(n,n)$ entries being 3 while the rest of the diagonal entries are 2.

To reproduce this case, we consider an alternative subspace restriction used in Reference~\cite{Spielman} to diagonalize the Dirichlet graph Laplacian. In this case it is easiest to consider the lattice of a cycle graph on $2n+2$ vertices, where the vertices $0$ and $n+1$ are selected as boundary points where the field takes the value 0. The relevant antisymmetric subspace is now $\mathrm{span}(\{ e_j - e_{2n+2-j} \}_{j=1}^{n})$ (which has no support on $e_0$ and $e_{n+1}$).

If we again write the linear system as $L'' \vec u'' = \vec f''$, then the Laplacian has entries

\[
L''_{i,j} =
\begin{cases}
r_{|i-j|} - r_{i+j} & i \leq k \\
r_{|i-j|} & k < i \leq n-k \\
r_{|i-j|} - r_{2n-i-j+2 } & n-k \leq i.
\end{cases}
\]

We again assume access to an oracle producing states proportional to $f''(x)$; however, we assume that this oracle operates in a Hilbert space with one additional dimension compared to the previous approaches (i.e., whereas previously we considered implementing $U$, here we consider implementing $\Bigl(\begin{smallmatrix}
U & \vec 0 \\ \vec 0^T & 1
\end{smallmatrix}\Bigr)$). With this oracle we again prepare the state $|f'' \rangle |- \rangle$ and solve Poisson's equation for periodic boundary conditions to output a state $|u'' \rangle |- \rangle$ (where $| u'' \rangle$ lies in an ($n+1$)-dimensional Hilbert space but has no support on the $(n+1)$st basis state).


\section{Multi-dimensional spectral method}
\label{sec:spectral}

We now turn our attention to the spectral method for multi-dimensional PDEs.
Since interpolation facilitates constructing a straightforward linear system, we develop a quantum algorithm based on the \emph{pseudo-spectral method} \cite{Ghe07,STW11,Tang06} for second-order elliptic equations with global strict diagonal dominance, under various boundary conditions. Using this approach, we show the following.

\begin{restatable}{theorem}{theoremSpectral}\label{thm:main}
Consider an instance of the quantum PDE problem as defined in \prb{pde} with Dirichlet boundary conditions \eq{dbc}. Then there exists a quantum algorithm that produces a state in the form of \eq{u_expand} whose amplitudes are proportional to $u(\bm{x})$ on a set of interpolation nodes $\bm{x}$ (with respect to the uniform grid nodes for periodic boundary conditions or the Chebyshev-Gauss-Lobatto quadrature nodes for non-periodic boundary conditions, as defined in in \eq{interpolation_nodes}), where $u(\bm{x})/\| u(\bm{x})\|$ is $\epsilon$-close to $\hat{u}(\bm{x})/\| \hat{u}(\bm{x})\|$ in $l_2$ norm for all nodes $\bm{x}$, succeeding with probability $\Omega(1)$, with a flag indicating success, using
\begin{equation}
  \biggl(\frac{d\|A\|_\Sigma}{C\|A\|_{\ast}}+qd^2\biggr)\poly(\log(g'/g\epsilon))
\end{equation}
queries to oracles as defined in \sec{state}.
Here $\|A\|_\Sigma:=\sum_{\|\bm{j}\|_1\le h}\|A_{\bm{j}}\|$, $\|A\|_{\ast}:=\sum_{j=1}^d |A_{j,j}|$, $C>0$ is defined in \eq{DDM}, and
\begin{align}
g = \min_{\bm{x}} \|\hat{u}(\bm{x})\|, \qquad g':=\max_{\bm{x}} \max_{n \in \N}\|\hat{u}^{(n+1)}(\bm{x})\|,\qquad \label{eq:main_thm_g} \\
q = \sqrt{\frac{\sum_{\|\bm{k}\|_{\infty}\le n} \sum_{j=1}^d \hat{f}^2_{\bm{k}}+(A_{j,j}\hat{\gamma}^{j+}_{\bm{k}})^2+(A_{j,j}\hat{\gamma}^{j-}_{\bm{k}})^2}{\sum_{\|\bm{k}\|_{\infty}\le n} \sum_{j=1}^d (\hat{f}_{\bm{k}}+A_{j,j}\hat{\gamma}^{j+}_{\bm{k}}+A_{j,j}\hat{\gamma}^{j-}_{\bm{k}})^2}}
.
\label{eq:main_thm_parameters}
\end{align}
The gate complexity is larger than the query complexity by a factor of $\poly(\log(d\|A\|_\Sigma/\epsilon))$.
\end{restatable}

After introducing the method, we discuss the complexity of the quantum shifted Fourier transform (\lem{qsft}) and the quantum cosine transform (\lem{qct}) in \sec{transform}.
These transforms are used as subroutines in our algorithm. Then we construct a linear system whose solution encodes the solution of the PDE in \sec{solver} (with a simple illustrative example presented in \app{Poisson}), analyze its condition number in \sec{condition} (\lem{condition}, established using \lem{svd_Fourier}, \lem{svd_Chebyshev}, and \lem{condition_Poisson}),
and consider the complexity of state preparation in \sec{state} (\lem{preparation}).
Finally, we prove our main result (\thm{main}) in \sec{main}.

In the spectral approach, we approximate the exact solution $\hat u(\bm{x})$ by a linear combination of basis functions
\begin{equation}
u(\bm{x}) = \sum_{\|\bm{k}\|_{\infty}\le n}c_{\bm{k}}\phi_{\bm{k}}(\bm{x})
\label{eq:basis_expand}
\end{equation}
for some $n \in \Z^+$. Here $\bm{k}=(k_1,\ldots,k_d)$ with $k_j\in \rangez{n+1}:=\{0,1,\ldots,n\}$, $c_{\bm{k}}\in \C$, and
\begin{equation}
\phi_{\bm{k}}(\bm{x}) = \prod_{j=1}^d\phi_{k_j}(x_j), \quad j\in\range{d}.
\label{eq:basis_tensor}
\end{equation}

We choose different basis functions for the case of periodic boundary conditions and for the more general case of non-periodic boundary conditions. When the boundary conditions are periodic, the algorithm implementation is more straightforward, and in some cases (e.g., for the Poisson equation), can be faster. Specifically, for any $k_j \in \rangez{n+1}$ and $x_j \in [-1,1]$, we take
\begin{equation}
\phi_{k_j}(x_j) = \begin{cases}
e^{i(k_j-\left\lfloor n/2 \right\rfloor)\pi x_j}, & \text{periodic conditions},\\
T_{k_j}(x_j) := \cos(k_j\arccos x_j), & \text{non-periodic conditions}.
\end{cases}
\label{eq:basis_function}
\end{equation}
Here $T_k$ is the degree-$k$ Chebyshev polynomial of the first kind.

The coefficients $c_{\bm{k}}$ are determined by demanding that $u(\bm{x})$ satisfies the ODE and boundary conditions at a set of \emph{interpolation nodes} $\{\bm{\chi}_{\bm{l}}=(\chi_{l_1}, \ldots, \chi_{l_d})\}_{\|\bm{l}\|_{\infty\le n}}$ with $l_j\in \rangez{n+1}$, where
\begin{equation}
\chi_{l_j} = \begin{cases}
\frac{2 l_j}{n+1}-1, & \text{periodic conditions},\\
\cos\frac{\pi l_j}{n}, & \text{non-periodic conditions}.
\end{cases}
\label{eq:interpolation_nodes}
\end{equation}
Here $\{\frac{2 l}{n+1}-1: l \in \rangez{n+1}\}$ are called the \emph{uniform grid nodes}, and $\{\cos\frac{\pi l}{n} : l \in \rangez{n+1}\}$ are called the \emph{Chebyshev-Gauss-Lobatto quadrature nodes}.

We require the numerical solution $u(\bm{x})$ to satisfy
\begin{equation}
\L(u(\bm{\chi}_{\bm{l}}))=f(\bm{\chi}_{\bm{l}}), \quad \forall \, l_j\in\rangez{n+1},~ j\in\range{d}.
\label{eq:spectral_pde}
\end{equation}
We would like to be able to increase the accuracy of the approximation by increasing $n$, so that
\begin{equation}
\|\hat{u}(\bm{x})-u(\bm{x})\|\rightarrow0 \quad \text{as} \quad n\rightarrow\infty.
\end{equation}

The convergence behavior of the spectral method is related to the smoothness of the solution. For a solution in $C^{r+1}$, the spectral method approximates the solution with $n=\poly({1}/{\epsilon})$. Furthermore, if the solution is in $C^{\infty}$, the spectral method approximates the solution to within $\epsilon$ using only $n=\poly(\log(1/\epsilon))$ \cite{STW11}. Since we require $k_j\in\rangez{n+1}$ for all $j\in\range{d}$, we have $(n+1)^d$ terms in total. Consequently, a classical pseudo-spectral method solves multi-dimensional PDEs with complexity $\poly(\log^d(1/\epsilon))$. Such classical spectral methods rapidly become infeasible since the number of coefficients $(n+1)^d$ grows exponentially with $d$.

Here we develop a quantum algorithm for multi-dimensional PDEs. The algorithm applies techniques from the quantum spectral method for ODEs \cite{CL19}. However, in the case of PDEs, the linear system to be solved is non-sparse. We address this difficulty using a quantum transform that restores sparsity.


\subsection{Quantum shifted Fourier transform and quantum cosine transform}
\label{sec:transform}

The well-known \emph{quantum Fourier transform} (QFT) can be regarded as an analogue of the discrete Fourier transform (DFT) acting on the amplitudes of a quantum state. The QFT maps the $(n+1)$-dimensional quantum state $v=(v_0,v_1,\ldots,v_n)\in\C^{n+1}$ to the state $\hat{v}=(\hat{v}_0,\hat{v}_1,\ldots,\hat{v}_n)\in\C^{n+1}$ with
\begin{equation}
\hat{v}_l = \frac{1}{\sqrt{n+1}}\sum_{k=0}^{n}\exp\Bigl(\frac{2\pi ikl}{n+1}\Bigr)v_k,\quad l\in\rangez{n+1}.
\label{eq:qft_rule}
\end{equation}
In other words, the QFT is the unitary transform
\begin{equation}
F_n:=\frac{1}{\sqrt{n+1}}\sum_{k,l=0}^n \exp\Bigl(\frac{2\pi ikl}{n+1}\Bigr)|l\rangle\langle k|.
\label{eq:qft_matrix}
\end{equation}

Here we also consider the \emph{quantum shifted Fourier transform} (QSFT), an analogue of the classical shifted discrete Fourier transform, which maps $v \in \C^{n+1}$ to $\hat v \in \C^{n+1}$ with
\begin{equation}
\hat{v}_l = \frac{1}{\sqrt{n+1}}\sum_{k=0}^{n}\exp\Bigl(\frac{2\pi i(k-\left\lfloor n/2 \right\rfloor)(l-(n+1)/2)}{n+1}\Bigr)v_k,\quad l\in\rangez{n+1}.
\label{eq:qsft_rule}
\end{equation}
In other words, the QSFT is the unitary transform
\begin{equation}
F^s_n:=\frac{1}{\sqrt{n+1}}\sum_{k,l=0}^n \exp\Bigl(\frac{2\pi i(k-\left\lfloor n/2 \right\rfloor)(l-(n+1)/2)}{n+1}\Bigr)|l\rangle\langle k|.
\label{eq:qsft_matrix}
\end{equation}

We define the multi-dimensional QSFT by the tensor product, namely
\begin{equation}
{\bm{F}}^s_n:=\frac{1}{\sqrt{(n+1)^d}}\sum_{\|\bm{k}\|_{\infty},\|\bm{l}\|_{\infty}\le n} \prod_{j=1}^d\exp\bigl(\tfrac{2\pi i(k_j-\left\lfloor n/2 \right\rfloor)(l_j-(n+1)/2)}{n+1}\bigr)|l_1\rangle\ldots|l_d\rangle\langle k_1|\ldots\langle k_d|,
\label{eq:mqsft_matrix}
\end{equation}
where $\bm{k}=(k_1,\ldots,k_d)$ and $\bm{l}=(l_1,\ldots,l_d)$ are $d$-dimensional vectors with $k_j, l_j\in \rangez{n}$.

The QSFT can be efficiently implemented as follows:

\begin{lemma}\label{lem:qsft}
The QSFT $F_n^s$ defined by \eq{qsft_matrix} can be performed with gate complexity $O(\log n \allowbreak \log\log n)$. More generally, the $d$-dimensional QSFT ${\bm F}_n^s$ defined by \eq{mqsft_matrix} can be performed with gate complexity $O(d \log n\log\log n)$.
\end{lemma}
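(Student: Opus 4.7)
The plan is to reduce the QSFT to the ordinary QFT by factoring out the ``shifts.'' Expanding the exponent in \eq{qsft_matrix} gives
\begin{equation*}
(k-\lfloor n/2\rfloor)(l-(n+1)/2)/(n+1) = kl/(n+1) - k/2 - l\lfloor n/2\rfloor/(n+1) + \lfloor n/2\rfloor/2,
\end{equation*}
so $F^s_n = e^{i\pi\lfloor n/2\rfloor}\, D^{(l)}\, F_n\, D^{(k)}$, where $D^{(k)}|k\rangle = (-1)^k|k\rangle$ and $D^{(l)}|l\rangle = \exp(-2\pi i l\lfloor n/2\rfloor/(n+1))|l\rangle$. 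First I would verify this factorization and observe that both diagonal unitaries are \emph{linear} in the register contents, hence can be implemented as products of single-qubit phase gates indexed by the bits of $k$ or $l$: writing $l=\sum_{j}l_j 2^j$, the phase $e^{-2\pi i l\lfloor n/2\rfloor/(n+1)}$ splits as $\prod_j e^{-2\pi i l_j 2^j \lfloor n/2\rfloor/(n+1)}$, which costs $O(\log n)$ single-qubit rotations. The matrix $D^{(k)}$ is just a single $Z$ on the least significant qubit.

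Second, I would invoke a fast implementation of the QFT modulo $n+1$ with gate complexity $O(\log n\log\log n)$. The approximate QFT of Coppersmith on $m=O(\log n)$ qubits achieves $O(m\log m)$ gates at sufficiently small error, and a Kitaev/Hales--Hallgren style reduction extends this to arbitrary modulus $n+1$ at the same asymptotic cost; the diagonal dressing in the previous step does not increase this. Combining the $O(\log n)$-gate diagonals with the $O(\log n \log\log n)$-gate QFT and the global phase gives the claimed one-dimensional bound.

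For the $d$-dimensional QSFT in \eq{mqsft_matrix}, the phase factor is a product of independent one-dimensional kernels, so ${\bm F}^s_n = (F^s_n)^{\otimes d}$. Applying the one-dimensional circuit to each of the $d$ registers yields gate complexity $d\cdot O(\log n\log\log n) = O(d\log n\log\log n)$, which is the stated bound.

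The main obstacle is the QFT step itself: ensuring that an approximate QFT with error small enough for downstream use still fits within the $O(\log n\log\log n)$ budget, and that this budget survives the non-power-of-two modulus $n+1$. Once a suitable fast QFT subroutine is cited, the rest of the argument is just the algebraic factorization and the observation that linear-phase diagonals are cheap, so no delicate estimates are needed.
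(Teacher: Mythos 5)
Your proposal is correct and follows essentially the same route as the paper: the paper also factors $F^s_n = S_n F_n R_n$ with $R_n$ and $S_n$ exactly your two linear-phase diagonals (the global phase absorbed into $S_n$), implements the diagonals with $O(\log n)$ gates and the QFT with $O(\log n\log\log n)$ gates, and handles the $d$-dimensional case as a tensor product of one-dimensional transforms. Your extra care about the approximate QFT over the non-power-of-two modulus $n+1$ is a point the paper simply asserts as well known, so no gap either way.
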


\begin{proof}
The unitary matrix $F^s_n$ can be written as the product of three unitary matrices
\begin{equation}
F^s_n=S_nF_nR_n,
\end{equation}
where
\begin{equation}
R_n=\sum_{k=0}^n \exp\Bigl(-\frac{2\pi i k(n+1)/2}{n+1}\Bigr)|k\rangle\langle k|
\end{equation}
and
\begin{equation}
S_n=\sum_{l=0}^n \exp\Bigl(-\frac{2\pi i \left\lfloor n/2 \right\rfloor (l-(n+1)/2)}{n+1}\Bigr)|l\rangle\langle l|.
\end{equation}
It is well known that $F_n$ can be implemented with gate complexity $O(\log n\log\log n)$, and it is straightforward to implement $R_n$ and $S_n$ with gate complexity $O(\log n)$. Thus the total complexity is $O(\log n\log\log n)$.

We rewrite $\bm{v}$ in the form
\begin{equation}
\bm{v}=\sum_{\|\bm{k}\|_{\infty}\le n}\bm{v}_{\bm{k}}|k_1\rangle\ldots|k_d\rangle,
\end{equation}
where $\bm{v}_{\bm{k}}\in\C$ with $\bm{k}=(k_1,\ldots,k_d)$, and each $k_j\in \rangez{n}$ for $j\in \range{d}$.
The unitary matrix ${\bm{F}}^s_n$ can be written as the tensor product
\begin{equation}
{\bm{F}}^s_n=\bigotimes_{j=1}^d F^s_n.
\end{equation}
Performing the multi-dimensional QSFT is equivalent to performing the one-dimensional QSFT on each register. Thus, the gate complexity of performing ${\bm{F}}^s_n$ is $O(d\log n\log\log n)$.
\end{proof}

Another efficient quantum transformation is the \emph{quantum cosine transform} (QCT) \cite{KR01,RPB99}. The QCT can be regarded as an analogue of the discrete cosine transform (DCT). The QCT maps $v\in\C^{n+1}$ to $\hat{v}\in\C^{n+1}$  with
\begin{equation}
\hat{v}_l = \sqrt{\frac{2}{n}}\sum_{k=0}^{n} \delta_k\delta_l\cos\frac{kl\pi}{n}v_k,\quad l\in\rangez{n+1},
\label{eq:qct_rule}
\end{equation}
where
\begin{equation}
\delta_l := \begin{cases}
\frac{1}{\sqrt{2}} & l=0,n\\
1 & l\in\range{n-1}.
\end{cases}
\end{equation}
In other words, the QCT is the orthogonal transform
\begin{equation}
C_n:=\sqrt{\frac{2}{n}}\sum_{k,l=0}^{n} \delta_l\delta_k\cos\frac{kl\pi}{n}|l\rangle\langle k|.
\label{eq:qct_matrix}
\end{equation}

Again we define the multi-dimensional QCT by the tensor product, namely
\begin{equation}
{\bm{C}}_n:=\sqrt{\Bigl(\frac{2}{n}\Bigr)^d}\sum_{\|\bm{k}\|_{\infty},\|\bm{l}\|_{\infty}\le n} \prod_{j=1}^d\delta_{k_j}\delta_{l_j}\cos\frac{k_jl_j\pi}{n}|l_1\rangle\ldots|l_d\rangle\langle k_1|\ldots\langle k_d|,
\label{eq:mqct_matrix}
\end{equation}
where $\bm{k}=(k_1,\ldots,k_d)$ and $\bm{l}=(l_1,\ldots,l_d)$ are $d$-dimensional vectors with $k_j, l_j\in \rangez{n+1}$.

The classical DCT on $(n+1)$-dimensional vectors takes $\Theta(n\log n)$ gates, while the QCT on $(n+1)$-dimensional quantum states can be implemented with complexity $\poly(\log n)$. According to Theorem 1 of Reference~\cite{KR01}, the gate complexity of performing $C_n$ is $O(\log^2 n)$. We observe that this can be improved as follows.

\begin{lemma}\label{lem:qct}
The quantum cosine transform $C_n$ defined by \eq{qct_matrix} can be performed with gate complexity $O(\log n\log\log n)$. More generally, the multi-dimensional QCT $\bm{C}_n$ defined by \eq{mqct_matrix} can be performed with gate complexity $O(d\log n\log\log n)$.
\end{lemma}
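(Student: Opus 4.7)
The plan is to follow the Klappenecker--Rötteler approach \cite{KR01}, which decomposes a discrete cosine transform into a Fourier transform of roughly twice the size conjugated by simple ``folding'' and diagonal unitaries, and then to substitute Coppersmith's approximate QFT to reduce the dominant cost from $O(\log^2 n)$ to $O(\log n \log\log n)$.

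Concretely, first recall the classical identity for the DCT-I: if one extends $(v_0,v_1,\ldots,v_n) \in \C^{n+1}$ to the symmetric sequence $(v_0,v_1,\ldots,v_{n-1},v_n,v_{n-1},\ldots,v_1) \in \C^{2n}$ and applies the DFT of size $2n$, one recovers, up to the boundary $\delta_k$ normalizations and an overall scalar, the DCT-I coefficients. I would lift this to the quantum setting by defining an isometry $V : \C^{n+1} \hookrightarrow \C^{2n}$ that implements the symmetric fold using a single ancilla qubit: controlled on the ancilla, map $|k\rangle$ to $|k\rangle$ or $|2n-k\rangle$, with the boundary basis states $|0\rangle$ and $|n\rangle$ treated as fixed points, and let $D_{\text{in}}, D_{\text{out}}$ be two-term diagonal unitaries absorbing the $\delta_0, \delta_n$ factors. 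This gives a decomposition of the form $C_n = V^\dagger D_{\text{out}} F_{2n} D_{\text{in}} V$.

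Next I would count gates piece by piece: the isometry $V$ needs one controlled arithmetic step $k \mapsto 2n-k$ (an $O(\log n)$ reversible subtraction) together with one Hadamard on the ancilla; the diagonals $D_{\text{in}}, D_{\text{out}}$ act nontrivially on only two basis states and cost $O(\log n)$; and Coppersmith's approximate QFT realizes $F_{2n}$ to inverse-polynomial accuracy in $O(\log n \log\log n)$ elementary gates, which dominates and gives the stated one-dimensional bound. The multi-dimensional case then follows exactly as in the proof of \lem{qsft}: the tensor factorization $\bm{C}_n = \bigotimes_{j=1}^d C_n$ lets one apply the one-dimensional algorithm in parallel across the $d$ registers, for a total of $O(d \log n \log\log n)$ gates. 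The main obstacle, beyond invoking the improved QFT, is bookkeeping --- making the $\delta_k$ normalizations line up correctly under the symmetric extension and verifying that the ancilla introduced by $V$ is cleanly uncomputed by $V^\dagger$, so that the net action on the $(n+1)$-dimensional register is exactly $C_n$ rather than an isometry into a larger Hilbert space.
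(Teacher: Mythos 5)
Your proof is correct in outline, but it takes a somewhat different route from the paper for the key step, so it is worth comparing the two. Both arguments start from the Klappenecker--R\"otteler observation that the DCT-I on $n+1$ points sits inside a DFT of size $2n$ via symmetric extension. The paper simply cites the circuit of \cite{KR01} as a black box --- a QFT, a cyclic-shift permutation $P_n$, and $O(1)$ extra gates --- and obtains the speedup by replacing the expensive implementation of $P_n$ with the Fourier-conjugated diagonal $P_n = F_n T_n F_n^{-1}$, which costs $O(\log n\log\log n)$ with no reversible arithmetic or work ancillas. You instead re-derive the decomposition from scratch, realizing the symmetric fold as an isometry $V$ built from a Hadamard on an ancilla and a controlled modular negation $k\mapsto 2n-k$, which is an $O(\log n)$ reversible operation; combined with the approximate QFT this gives the same $O(\log n\log\log n)$ bound. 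Your route is more self-contained but carries more bookkeeping burden, and you correctly flag the two genuine subtleties: (i) the ancilla is only uncomputed by $V^\dagger$ because the DFT of a sequence satisfying $w_k=w_{2n-k}$ is again symmetric, so the intermediate state stays in the image of $V$; and (ii) the $\delta_k=1/\sqrt{2}$ boundary weights are amplitude factors, not phases, so they cannot live in diagonal \emph{unitaries} $D_{\mathrm{in}},D_{\mathrm{out}}$ --- they must come from treating $\ket{0}$ and $\ket{n}$ as fixed points of the fold (in fact a direct computation shows $C_n=V^\dagger F_{2n}V$ exactly, with no diagonal corrections needed for the DCT-I). With that adjustment your argument goes through, and the multi-dimensional claim follows from the tensor factorization exactly as in \lem{qsft}, matching the paper.
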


\begin{proof}
According to the quantum circuit in Figure 2 of Reference~\cite{KR01}, $C_n$ can be decomposed into a QFT $F_{n+1}$, a permutation
\begin{equation}
P_n=
  \begin{pmatrix}
     &  &  &  & 1 \\
    1 &  &  &  & \\
     & 1 &  &  & \\
     &  & \ddots  & & \\
     &  &  &  1 & \\
  \end{pmatrix},
\end{equation}
and additional operations with $O(1)$ cost. The QFT $F_{n+1}$ has gate complexity $O(\log n \allowbreak \log\log n)$. We then consider an alternative way to implement $P_n$ that improves over the approach in \cite{PRB99}.

The permutation $P_n$ can be decomposed as
\begin{equation}
P_n=F_nT_nF_n^{-1},
\end{equation}
where $F_n$ is the Fourier transform \eq{qft_matrix} and $T_n = \sum_{k=0}^n e^{-\frac{2\pi i k}{n+1}} |k\rangle\langle k|$ is diagonal. The gate complexities of performing $F_n$ and $T_n$ are $O(\log n\log\log n)$ and $O(\log n)$, respectively. It follows that $C_n$ can be implemented with circuit complexity $O(\log n\log\log n)$.

The matrix ${\bm{C}}_n$ can be written as the tensor product
\begin{equation}
{\bm{C}}_n=\bigotimes_{j=1}^d C_n.
\end{equation}
As in \lem{qsft}, performing the multi-dimensional QCT is equivalent to performing a QCT on each register. Thus, the gate complexity of performing ${\bm{C}}_n$ is $O(d\log n\log\log n)$.
\end{proof}


\subsection{Linear system}
\label{sec:solver}

In this section we introduce the quantum PDE solver for the problem \eq{pde}. We construct a linear system that encodes the solution of \eq{pde} according to the pseudo-spectral method introduced above, using the QSFT/QCT introduced in \sec{transform} to ensure sparsity.

We consider a linear PDE problem (\prb{pde}) with periodic boundary conditions
\begin{equation}
u(\bm{x}+2\bm{v}) = u(\bm{x}) \quad \forall\, \bm{x} \in \D, ~ \forall\, \bm{v} \in \Z^d
\label{eq:pbc}
\end{equation}
or non-periodic Dirichlet boundary conditions
\begin{equation}
u(\bm{x}) = \gamma(\bm{x}) \quad \forall\, \bm{x} \in \partial\D.
\label{eq:dbc}
\end{equation}
According to the elliptic regularity theorem (Theorem 6 in Section 6.3 of Reference~\cite{eve10}), there exists a unique solution $\hat{u}(\bm{x})$ in $C^\infty$ for \prb{pde}.

We now show how to apply the Fourier and Chebyshev pseudo-spectral methods to this problem. Our goal is to obtain the quantum state
\begin{equation}
  |u\rangle
\propto \sum_{ \|\bm{k}\|_{\infty}, \|\bm{l}\|_{\infty}\le n} c_{\bm{k}}\phi_{\bm{k}}(\bm{\chi}_{\bm{l}})|l_1\rangle\ldots|l_d\rangle,
\label{eq:u_expand}
\end{equation}
where $\phi_{\bm{k}}(\bm{\chi}_{\bm{l}})$ is defined by \eq{basis_tensor} using \eq{basis_function} for the appropriate boundary conditions (periodic or non-periodic). This state corresponds to a truncated Fourier/Chebyshev approximation and is $\epsilon$-close to the exact solution $\hat{u}(\bm{\chi}_{\bm{l}})$ with $n=\poly(\log(1/\epsilon))$ \cite{STW11}. Note that this state encodes the values of the solution at the interpolation nodes \eq{interpolation_nodes} appropriate to the boundary conditions (the uniform grid nodes in the Fourier approach, for periodic boundary conditions, and the Chebyshev-Gauss-Lobatto quadrature nodes in the Chebyshev approach, for non-periodic boundary conditions).

Instead of developing our algorithm for the standard basis, we aim to produce a state
\begin{equation}
|c\rangle \propto \sum_{\|\bm{k}\|_{\infty}\le n} c_{\bm{k}}|k_1\rangle\ldots|k_d\rangle
\label{eq:u_coeff}
\end{equation}
that is the inverse QSFT/QCT of $|u\rangle$. We then apply the QSFT/QCT to transform back into the interpolation node basis.

The truncated spectral series of the inhomogeneity $f(\bm{x})$ and the boundary conditions $\gamma(\bm{x})$ can be expressed as
\begin{equation}
f(\bm{x}) = \sum_{\|\bm{k}\|_{\infty}\le n} \hat{f}_{\bm{k}}\phi_{\bm{k}}(\bm{x})
\label{eq:f_expand}
\end{equation}
and
\begin{equation}
\gamma(\bm{x}) = \sum_{\|\bm{k}\|_{\infty}\le n} \hat{\gamma}_{\bm{k}}\phi_{\bm{k}}(\bm{x}),
\label{eq:g_expand}
\end{equation}
respectively. We define quantum states $|f\rangle$ and $|\gamma\rangle$ by interpolating the nodes $\{\bm{\chi}_{\bm{l}}\}$ defined by \eq{interpolation_nodes} as
\begin{equation}
|f\rangle
\propto \sum_{\|\bm{k}\|_{\infty},\|\bm{l}\|_{\infty}\le n} \phi_{k_j}(\bm{\chi}_{\bm{l}})\hat{f}_{\bm{k}}|l_1\rangle\ldots|l_d\rangle,
\label{eq:f_inter}
\end{equation}
and
\begin{equation}
  |\gamma\rangle
\propto \sum_{\|\bm{k}\|_{\infty},\|\bm{l}\|_{\infty}\le n} \phi_{k_j}(\bm{\chi}_{\bm{l}})\hat{\gamma}_{\bm{k}}|l_1\rangle\ldots|l_d\rangle,
\label{eq:g_inter}
\end{equation}
respectively. These are the states that we assume we can produce using oracles. We perform the multi-dimensional inverse QSFT/QCT to obtain the states
\begin{equation}
|\hat{f}\rangle \propto \sum_{\|\bm{k}\|_{\infty}\le n} \hat{f}_{\bm{k}}|k_1\rangle\ldots|k_d\rangle,
\label{eq:f_coeff}
\end{equation}
and
\begin{equation}
|\hat{\gamma}\rangle \propto \sum_{\|\bm{k}\|_{\infty}\le n} \hat{\gamma}_{\bm{k}}|k_1\rangle\ldots|k_d\rangle.
\label{eq:g_coeff}
\end{equation}

Having defined these states, we now detail the construction of the linear system. At a high level, we construct two linear systems: one system $A \vec x = \vec f$ (where $\vec x$ corresponds to \eq{u_coeff}) describes the differential equation, and another system $B \vec x = \vec g$ describes the boundary conditions. We combine these into a linear system with the form
\begin{equation}
L \vec x =(A+B) \vec x = \vec f + \vec g.
\label{eq:linear_system}
\end{equation}
Even though we do not impose the two linear systems separately, we show that there exists a unique solution of \eq{linear_system} (which is therefore the solution of the simultaneous equations $A\vec x=\vec f$ and $B\vec x=\vec g$), since we show that $L$ has full rank, and indeed we upper bound its condition number in \sec{condition}.

Part of this linear system will correspond to just the differential equation
\begin{equation}
\L(u(\bm{\chi}_{\bm{l}}))=\sum_{\|\bm{j}\|_{1}= 2}A_{\bm{j}}\frac{\partial^{\bm{j}}}{\partial \bm{x}^{\bm{j}}}u(\bm{\chi}_{\bm{l}})=f(\bm{\chi}_{\bm{l}}),
\label{eq:inter_eq}
\end{equation}
while another part will come from imposing the boundary conditions on $\partial\D=\bigcup_{j \in [d]} \partial\D_j$, where $\partial\D_j:=\{\bm{x} \in \D \mid x_j=\pm1\}$ is a $(d-1)$-dimensional subspace.
More specifically, the boundary conditions
\begin{equation}
 u(\bm{\chi}_{\bm{l}}) = \gamma(\bm{\chi}_{\bm{l}}) \quad \forall\, \bm{\chi}_{\bm{l}} \in \partial\D
 \label{eq:proj_sample}
\end{equation}
can be expressed as conditions on each boundary:
\begin{equation}
\begin{aligned}
u(x_1,\ldots,x_{j-1},1,x_{j+1},\ldots,x_d)=\gamma^{j+},\quad \bm{x}\in \partial\D_j, \quad j \in [d]\\
u(x_1,\ldots,x_{j-1},-1,x_{j+1},\ldots,x_d)=\gamma^{j-},\quad \bm{x}\in \partial\D_j, \quad j \in [d].
\label{eq:proj_combination}
\end{aligned}
\end{equation}

\subsubsection{Linear system from the differential equation}
To evaluate the matrix corresponding to the differential operator from \eq{inter_eq}, it is convenient to define coefficients $c^{(\bm{j})}_{\bm{k}}$ and $\|\bm{k}\|_{\infty}\le n$ such that
\begin{equation}
\frac{\partial^{\bm{j}}}{\partial \bm{x}^{\bm{j}}}u(\bm{x})=\sum_{\|\bm{k}\|_{\infty}\le n}c^{(\bm{j})}_{\bm{k}}\phi_{\bm{k}}(\bm{x})
\label{eq:diff_expand}
\end{equation}
for some fixed $\bm{j} \in \N^d$ (as we explain below, such a decomposition exists for the choices of basis functions in \eq{basis_function}).
Using this expression, we obtain the following linear equations for $c^{(\bm{j})}_{\bm{k}}$:
\begin{equation}
\sum_{\|\bm{j}\|_{1}= 2}A_{\bm{j}}\sum_{\|\bm{k}\|_{\infty},\|\bm{l}\|_{\infty}\le n}\phi_{\bm{k}}(\bm{\chi}_{\bm{l}})c^{(\bm{j})}_{\bm{k}}|l_1\rangle\ldots|l_d\rangle=\sum_{\|\bm{k}\|_{\infty},\|\bm{l}\|_{\infty}\le n} \phi_{\bm{k}}(\bm{\chi}_{\bm{l}})\hat{f}_{\bm{k}}|l_1\rangle\ldots|l_d\rangle.
\label{eq:diff_coeff_system}
\end{equation}
To determine the transformation between $c^{(\bm{j})}_{\bm{k}}$ and $c_{\bm{k}}$, we can make use of the differential properties of Fourier and Chebyshev series, namely
\begin{align}
  \frac{\d}{\d x} e^{ik\pi x}=ik\pi e^{ik\pi x}
\end{align}
and
\begin{align}
  2T_k(t)=\frac{T'_{k+1}(t)}{k+1}-\frac{T'_{k-1}(t)}{k-1},
\end{align}
respectively.
We have
\begin{equation}
c^{(\bm{j})}_{\bm{k}}=\sum_{\|\bm{r}\|_{\infty}\le n}[\bm{D}^{(\bm{j})}_{n}]_{\bm{kr}}c_{\bm{r}.},\quad \|\bm{k}\|_{\infty}\le n,
\label{eq:diff_transform}
\end{equation}
where $\bm{D}^{(\bm{j})}_{n}$ can be expressed as the tensor product
\begin{equation}
\bm{D}^{(\bm{j})}_{n} = D^{j_1}_n \otimes D^{j_2}_n \otimes \cdots \otimes D^{j_d}_n,
\label{eq:diff_tensor}
\end{equation}
with $\bm{j}=(j_1,\ldots,j_d)$. The matrix $D_n$ for the Fourier basis functions in \eq{basis_function} can be written as the $(n+1)\times(n+1)$ diagonal matrix with entries
\begin{equation}
[D_n]_{kk}=i(k-\left\lfloor n/2 \right\rfloor)\pi.
\label{eq:Fourier_Dn}
\end{equation}
As detailed in Appendix A of Reference~\cite{CL19}, the matrix $D_n$ for the Chebyshev polynomials in \eq{basis_function} can be expressed as the $(n+1)\times(n+1)$ upper triangular matrix with nonzero entries
\begin{equation}
[D_n]_{kr}=\frac{2r}{\sigma_k},\qquad \text{$k+r$ odd},~ r>k,
\label{eq:Chebyshev_Dn}
\end{equation}
where
\begin{equation}
\sigma_k := \begin{cases}
2 & k=0\\
1 & k\in\range{n}.
\end{cases}
\end{equation}

Substituting \eq{diff_tensor} into \eq{diff_coeff_system}, with $D_n$ defined by \eq{Fourier_Dn} in the periodic case or \eq{Chebyshev_Dn} in the non-periodic case, and performing the multi-dimensional inverse QSFT/QCT (for a reason that will be explained in the next section), we obtain the following linear equations for $c_{\bm{r}}$:
\begin{equation}
\sum_{\|\bm{j}\|_1= 2}A_{\bm{j}}\sum_{\|\bm{k}\|_{\infty},\|\bm{l}\|_{\infty},\|\bm{r}\|_{\infty}\le n}[\bm{D}^{(\bm{j})}_{n}]_{\bm{kr}}c_{\bm{r}}|l_1\rangle\ldots|l_d\rangle=\sum_{\|\bm{k}\|_{\infty},\|\bm{l}\|_{\infty}\le n}\hat{f}_{\bm{k}}|l_1\rangle\ldots|l_d\rangle.
\label{eq:coeff_system}
\end{equation}

Notice that the matrices \eq{Fourier_Dn} and \eq{Chebyshev_Dn} are not full rank. More specifically, there exists at least one zero row in the matrix of \eq{coeff_system} when using either \eq{Fourier_Dn} ($k=\left\lfloor n/2 \right\rfloor$) or \eq{Chebyshev_Dn} ($k=n$). To obtain an invertible linear system, we next introduce the boundary conditions.

\subsubsection{Adding the linear system from the boundary conditions}

When we use the form \eq{u_expand} of $u(\bm{x})$ to write linear equations describing the boundary conditions \eq{proj_combination}, we obtain a non-sparse linear system. Thus, for each $\bm{x}\in \partial\D_j$ in \eq{proj_combination}, we perform the $(d-1)$-dimensional inverse QSFT/QCT on the $d-1$ registers except the $j$th register to obtain the linear equations
\begin{equation}
\begin{aligned}
\sum_{\substack{\|\bm{k}\|_{\infty}\le n \\ k_j=n}}c_{\bm{k}}|k_1\rangle\ldots|k_d\rangle&=\sum_{\substack{\|\bm{k}\|_{\infty}\le n \\ k_j=n}}\hat{\gamma}^{1+}_{\bm{k}}|k_1\rangle\ldots|k_d\rangle,\quad \hat{\gamma}^{j+}_{\bm{k}}\in\partial\D_j, \\
\sum_{\substack{\|\bm{k}\|_{\infty}\le n \\ k_j=n-1}}(-1)^{k_j}c_{\bm{k}}|k_1\rangle\ldots|k_d\rangle&=\sum_{\substack{\|\bm{k}\|_{\infty}\le n \\ k_j=n-1}}\hat{\gamma}^{1-}_{\bm{k}}|k_1\rangle\ldots|k_d\rangle,\quad \hat{\gamma}^{j-}_{\bm{k}}\in\partial\D_j
\label{eq:boundary_linear_system}
\end{aligned}
\end{equation}
for all $j \in [d]$, where the values of $k_j$ indicate that we place these constraints in the last two rows with respect to the $j$th coordinate.
We combine these equations with \eq{coeff_system} to obtain the linear system
\begin{equation}
\sum_{\|\bm{j}\|_1= 2}A_{\bm{j}}\sum_{\|\bm{k}\|_{\infty},\|\bm{r}\|_{\infty}\le n}[\overline{\bm{D}}^{(\bm{j})}_{n}]_{\bm{kr}}c_{\bm{r}}|k_1\rangle\ldots|k_d\rangle=\sum_{\|\bm{k}\|_{\infty}\le n}\sum_{j=1}^d (A_{j,j}\hat{\gamma}^{j+}_{\bm{k}}+A_{j,j}\hat{\gamma}^{j-}_{\bm{k}}+\hat{f}_{\bm{k}})|k_1\rangle\ldots|k_d\rangle,
\label{eq:coeff_linear_system}
\end{equation}
where
\begin{equation}
\overline{\bm{D}}^{(\bm{j})}_{n}=
\begin{cases}
\bm{D}^{(\bm{j})}_{n} + \bm{G}^{(\bm{j})}_{n}, \quad & \|\bm{j}\|_1= 2,~\|\bm{j}\|_{\infty}= 2;\\
\bm{D}^{(\bm{j})}_{n},  & \|\bm{j}\|_1= 2,~\|\bm{j}\|_{\infty}= 1
\end{cases}
\label{eq:boundary_tensor}
\end{equation}
with $\bm{G}^{(\bm{j})}_{n}$ defined below.
In other words, $\overline{\bm{D}}^{(\bm{j})}_{n}=\bm{D}^{(\bm{j})}_{n}+\bm{G}^{(\bm{j})}_{n}$ for each $\bm{j}$ that has exactly one entry equal to $2$ and all other entries $0$, whereas $\overline{\bm{D}}^{(\bm{j})}_{n}=\bm{D}^{(\bm{j})}_{n}$ for each $\bm{j}$ that has exactly two entries equal to $1$ and all other entries $0$. Here $\bm{G}^{(\bm{j})}_{n}$ can be expressed as the tensor product
\begin{equation}
\bm{G}^{(\bm{j})}_{n} = I^{\otimes r-1}\otimes G_n\otimes I^{\otimes d-r}
\label{eq:diff_tensor_G}
\end{equation}
where the $r$th entry of $\bm{j}$ is 2 and all other entries are 0. For the Fourier case in \eq{basis_function} used for periodic boundary conditions, $D_n$ comes from \eq{Fourier_Dn}, and the nonzero entries of $G_n$ are
\begin{equation}
[G_n]_{\left\lfloor n/2 \right\rfloor,k}=1,\quad k \in\rangez{n+1}.
\label{eq:Fourier_Gn}
\end{equation}
Alternatively, for the Chebyshev case in \eq{basis_function} used for non-periodic boundary conditions, $D_n$ comes from \eq{Chebyshev_Dn}, and the nonzero entries of $G_n$ are
\begin{equation}
\begin{aligned}
&[G_n]_{n,k}=1,\quad &k \in\rangez{n+1},\\\
&[G_n]_{n-1,k}=(-1)^k,\quad &k \in\rangez{n+1}.
\label{eq:Chebyshev_Gn}
\end{aligned}
\end{equation}

The system \eq{coeff_linear_system} has the form of \eq{linear_system}. For instance, the matrix in \eq{linear_system} for Poisson's equation \eq{Poisson} is
\begin{align}
L_{\Poisson}&:=\overline{\bm D}_n^{(2,0,\ldots,0)}
+\overline{\bm D}_n^{(0,2,\ldots,0)}
+\cdots+\overline{\bm D}_n^{(0,0,\ldots,2)} \\
&= \bigoplus_{j=1}^d \overline{D}^{(2)}_n=\overline{D}^{(2)}_n\otimes I^{\otimes d-1}+I\otimes \overline{D}^{(2)}_n\otimes I^{\otimes d-2}+\cdots+I^{\otimes d-1}\otimes \overline{D}^{(2)}_n.
\label{eq:matrix_Poisson}
\end{align}
For periodic boundary conditions, using \eq{diff_transform}, \eq{Fourier_Dn}, and \eq{Fourier_Gn}, the second-order differential matrix $\overline{D}^{(2)}_n$ has nonzero entries
\begin{equation}
\begin{aligned}
[\overline{D}^{(2)}_n]_{k,k}&=-((k-\left\lfloor n/2 \right\rfloor)\pi)^2, & k&\in\rangez{n+1}\backslash\{\left\lfloor n/2 \right\rfloor\},\\
[\overline{D}^{(2)}_n]_{\left\lfloor n/2 \right\rfloor,k}&=1, & k&\in\rangez{n+1}.
\label{eq:matrix_Fourier}
\end{aligned}
\end{equation}
For non-periodic boundary conditions, using \eq{diff_transform}, \eq{Chebyshev_Dn}, and \eq{Chebyshev_Gn}, $\overline{D}^{(2)}_n$ has nonzero entries
\begin{equation}
\begin{aligned}
[\overline{D}^{(2)}_n]_{kr}&=\sum_{\substack{l=k+1 \\ \text{$k+l$ odd} \\ \text{$l+r$ odd}}}^{r-1}[D_n]_{kl}[D_n]_{lr}=\frac{2r}{\sigma_k}\sum_{\substack{l=k+1 \\ \text{$k+l$ odd} \\ \text{$l+r$ odd}}}^{r-1}\frac{2l}{\sigma_l}=\frac{r(r^2-k^2)}{\sigma_k}, &&\text{$k+r$ even},\, r>k+1,\\
[\overline{D}^{(2)}_n]_{n,k}&=1, & &k \in\rangez{n+1},\\
[\overline{D}^{(2)}_n]_{n-1,k}&=(-1)^k, & &k \in\rangez{n+1}.
\label{eq:matrix_Chebyshev}
\end{aligned}
\end{equation}

To explicitly illustrate this linear system, we present a simple example in \app{Poisson}. We discuss the invertible linear system \eq{coeff_linear_system} and upper bound its condition number in the following section.


\subsection{Condition number}
\label{sec:condition}

We now analyze the condition number of the linear system. We begin with two lemmas bounding the singular values of the matrices \eq{matrix_Fourier} and \eq{matrix_Chebyshev} that appear in the linear system.

\begin{restatable}{lemma}{svdFourier}\label{lem:svd_Fourier}
Consider the case of periodic boundary conditions. Then for $n\ge4$, the largest and smallest singular values of $\overline{D}^{(2)}_n$ defined in \eq{matrix_Fourier} satisfy
\begin{equation}
\begin{aligned}
&\sigma_{\max}(\overline{D}^{(2)}_n) \le (2n)^{2.5},\\
&\sigma_{\min}(\overline{D}^{(2)}_n) \ge \frac{1}{\sqrt{2}}.
\label{eq:svd_Fourier}
\end{aligned}
\end{equation}
\end{restatable}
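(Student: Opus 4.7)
The plan is to exploit the nearly diagonal structure of $\overline{D}^{(2)}_n$. Writing $m := \lfloor n/2 \rfloor$, the matrix is diagonal with entries $-((k-m)\pi)^2$ at $(k,k)$ for $k \neq m$, while its $m$-th row is replaced by the all-ones row. The upper and lower singular value bounds then split into two essentially independent calculations.

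For the upper bound on $\sigma_{\max}$, I would split $\overline{D}^{(2)}_n = \Lambda + e_m \mathbf{1}^T$, where $\Lambda$ is diagonal with $\Lambda_{kk} = -((k-m)\pi)^2$ for $k \neq m$ and $\Lambda_{mm} = 0$, and $e_m \mathbf{1}^T$ is the rank-$1$ matrix whose only nonzero row is row $m$. The triangle inequality then gives $\sigma_{\max}(\overline{D}^{(2)}_n) \leq \|\Lambda\|_2 + \|e_m \mathbf{1}^T\|_2 \leq \pi^2 (n+1)^2/4 + \sqrt{n+1}$, using $\max_k |k-m| \leq (n+1)/2$ for the diagonal part and $\|e_m\|_2 \|\mathbf{1}\|_2 = \sqrt{n+1}$ for the rank-$1$ part. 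This is $O(n^2)$, which is comfortably below $(2n)^{5/2}$ once $n$ is sufficiently large (a Frobenius-norm bound gives the same scaling).

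For the lower bound on $\sigma_{\min}$, the plan is to invert $\overline{D}^{(2)}_n$ explicitly. From $\overline{D}^{(2)}_n x = b$, row $k \neq m$ gives $x_k = -b_k/((k-m)\pi)^2$, while row $m$ gives $\sum_k x_k = b_m$, hence $x_m = b_m + \sum_{k \neq m} b_k/((k-m)\pi)^2$. Defining the weight vector $w$ by $w_m := 1$ and $w_k := 1/((k-m)\pi)^2$ for $k \neq m$, I recognize $x_m = \langle w, b \rangle$ and invoke Cauchy-Schwarz to obtain $|x_m|^2 \leq \|w\|^2 \|b\|^2 = (1 + S_n)\|b\|^2$, where $S_n := \sum_{k \neq m,\, 0 \leq k \leq n} 1/((k-m)\pi)^4$. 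Since $|k-m| \geq 1$ for $k \neq m$, the remaining components satisfy $\sum_{k \neq m} |x_k|^2 \leq \|b\|^2/\pi^4$, so summing gives $\|x\|^2 \leq (1 + S_n + 1/\pi^4)\|b\|^2$.

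The main technical step is controlling $S_n$ uniformly in $n$. Here the Basel-type identity $\sum_{j=1}^{\infty} 1/j^4 = \pi^4/90$ enters: factoring out $\pi^{-4}$ and bounding the two one-sided sums by the full series yields $S_n \leq (2/\pi^4)(\pi^4/90) = 1/45$. Combined with $1/\pi^4 < 1/90$, this gives $S_n + 1/\pi^4 < 1$, so $\|(\overline{D}^{(2)}_n)^{-1} b\|^2 \leq 2\|b\|^2$ for every $b$, and therefore $\sigma_{\min}(\overline{D}^{(2)}_n) = 1/\|(\overline{D}^{(2)}_n)^{-1}\|_2 \geq 1/\sqrt{2}$. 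The main obstacle is arranging the Cauchy-Schwarz cleanly so that the perturbation caused by the all-ones row collapses into a single uniformly convergent sum; once the explicit inverse is on the table, this single estimate delivers the claimed bound.
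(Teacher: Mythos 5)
Your proof is correct, and for the harder half---the lower bound on $\sigma_{\min}$---it is essentially the paper's own argument: both write down the explicit inverse ($x_k=-b_k/((k-\lfloor n/2\rfloor)\pi)^2$ off the special row, $x_m=\langle w,b\rangle$ on it) and control its norm via the convergent sum $\sum_{j\ge1}j^{-4}=\pi^4/90$; the paper bounds the Frobenius norm of the inverse by $\sqrt{2}$, while your row-by-row Cauchy--Schwarz combined with the sup bound $1/\pi^4$ on the diagonal block gives the marginally sharper $\|(\overline{D}^{(2)}_n)^{-1}\|^2\le 1+1/45+1/\pi^4<2$, reaching the same conclusion. For the upper bound you take a genuinely different (and tighter) route: the triangle inequality on the splitting $\Lambda+e_m\mathbf{1}^T$ yields $O(n^2)$, whereas the paper uses $\sigma_{\max}\le\sqrt{n+1}$ times the maximum absolute column sum to land exactly at $(2n)^{2.5}$; both are routine and equally elementary. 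One cosmetic point common to both proofs: the closing numeric inequalities ($\pi^2(n+1)^2/4+\sqrt{n+1}\le(2n)^{2.5}$ for you, $((n+1)\pi/2)^2\le(2n)^2$ for the paper) only hold once $n$ exceeds a small constant---indeed the stated bound $\sigma_{\max}\le(2n)^{2.5}$ fails at $n=1$---which is harmless because the algorithm takes $n=\poly(\log(1/\epsilon))$ large, but it deserves a parenthetical restriction on $n$.
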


\begin{restatable}{lemma}{svdChebyshev}\label{lem:svd_Chebyshev}
Consider the case of non-periodic boundary conditions. Then the largest and smallest singular values of $\overline{D}^{(2)}_n$ defined in \eq{matrix_Chebyshev} satisfy
\begin{equation}
\begin{aligned}
&\sigma_{\max}(\overline{D}^{(2)}_n) \le n^4,\\
&\sigma_{\min}(\overline{D}^{(2)}_n) \ge \frac{1}{16}.
\label{eq:svd_Chebyshev}
\end{aligned}
\end{equation}
\end{restatable}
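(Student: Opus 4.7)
The key observation is that $\overline{D}^{(2)}_n$ is the matrix representation, in the Chebyshev coefficient basis, of the map that sends a polynomial $p(x) = \sum_{k=0}^{n} c_k T_k(x)$ to the $(n+1)$-tuple whose first $n-1$ entries are the Chebyshev coefficients of $p''$ (the remaining two are forced to zero since $\deg p''\le n-2$) and whose last two entries are the boundary values $p(-1)$ and $p(1)$. Indeed, the first $n-1$ rows encode the top $(n-1)\times(n+1)$ block of $D_n^{\,2}$ with $D_n$ as in \eq{Chebyshev_Dn}, while the last two rows implement evaluation functionals because $T_k(1)=1$ and $T_k(-1)=(-1)^k$. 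Under this identification, both bounds in the lemma become statements about the polynomial $p$ determined by $c$.

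For the upper bound, I would estimate $\sigma_{\max}$ by the Frobenius norm. Every nonzero entry of the top $n-1$ rows satisfies $|r(r^2-k^2)/\sigma_k|\le r^3\le n^3$, and row $k$ has at most $(n-k)/2$ such entries. Summing gives $\sum_{k=0}^{n-2}\sum_{r}r^{6}\le \tfrac{1}{4}n^8$ plus an $O(n)$ contribution from the two boundary rows, so $\|\overline{D}^{(2)}_n\|_F\le n^4$ for $n$ sufficiently large, whence $\sigma_{\max}(\overline{D}^{(2)}_n)\le n^4$.

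For the lower bound I would argue variationally, bounding $\|(\overline{D}^{(2)}_n)^{-1}\|$ by interpreting the linear system as a Dirichlet boundary value problem. Fix any $c$ and let $b=\overline{D}^{(2)}_n c$; then the polynomial $p$ corresponding to $c$ solves
\[
p''(x) = \tilde f(x), \qquad p(-1)=b_{n-1}, \qquad p(1)=b_n,
\]
where $\tilde f(x)=\sum_{k=0}^{n-2}b_k T_k(x)$. Split $p = p_L + p_I$, where $p_L(x) = \tfrac{b_n+b_{n-1}}{2}+\tfrac{b_n-b_{n-1}}{2}x$ is the linear interpolant of the boundary data and $p_I$ satisfies $p_I''=\tilde f$ with $p_I(\pm 1)=0$. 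Represent $p_I$ via the Green's function of the Dirichlet Laplacian on $[-1,1]$, which satisfies $|G(x,s)|\le (1-s^2)/2$, and apply Cauchy--Schwarz with respect to the Chebyshev weight $w(s)=1/\sqrt{1-s^2}$ to bound the weighted $L^2$ norm of $p_I$ by a constant times that of $\tilde f$. A direct computation controls the weighted $L^2$ norm of $p_L$ by a constant times $(|b_n|^2+|b_{n-1}|^2)^{1/2}$. Parseval's identity for Chebyshev expansions (namely $\|c\|_2^2 \le (2/\pi)\|p\|_w^2$ and $\|\tilde f\|_w^2\le \pi\sum_{k=0}^{n-2}|b_k|^2$) then converts these bounds back into the coefficient $\ell^2$ norm, yielding $\|c\|_2\le 16\,\|b\|_2$ and hence $\sigma_{\min}(\overline{D}^{(2)}_n)\ge 1/16$.

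The main obstacle is the lower bound: one needs a Poincar\'e--Friedrichs-type estimate in the Chebyshev-weighted $L^2$ norm that is uniform in $n$, and then must track the numerical constants (factors of $\pi$ from Parseval, the integral $\int_{-1}^{1}w\,dx=\pi$, and the Green's function integral) carefully enough that they collapse into the clean constant $1/16$. The upper bound is a routine Frobenius-norm estimate by comparison.
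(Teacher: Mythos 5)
Your proposal is correct, and while the upper bound is the same routine Frobenius-norm estimate as in the paper, your lower bound takes a genuinely different route. The paper stays entirely in coefficient space: it notes that the coordinates $x_2,\ldots,x_n$ decouple from the two boundary rows, inverts the Chebyshev differentiation matrix twice via the antidifferentiation recurrence $\frac{2l}{\sigma_{l-1}}y_l=b_{l-1}-\frac{1}{\sigma_{l-1}}b_{l+1}$ (so $\sum_l y_l^2\le 2\sum_l b_l^2$ and likewise for $x$ in terms of $y$), and then extracts $x_0,x_1$ from the boundary rows, arriving at $\|(\overline{D}^{(2)}_n)^{-1}\|\le 12<16$. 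You instead pass to function space and prove the continuous analogue: the solution operator of $p''=\tilde f$ with prescribed $p(\pm1)$ is bounded on the Chebyshev-weighted $L^2$ space via the Green's function bound $|G(x,s)|\le(1-s^2)/2$, with Parseval converting back to the coefficient $\ell^2$ norm. Your identification of the last two rows of \eq{matrix_Chebyshev} as the evaluation functionals $p(1)$ and $p(-1)$ is exactly right, and the constants you worry about do close: Cauchy--Schwarz against the weight gives $\sup_x|p_I(x)|\le\frac12\bigl(\int_{-1}^{1}(1-s^2)^{5/2}\,\d s\bigr)^{1/2}\|\tilde f\|_w$ with $\int_{-1}^{1}(1-s^2)^{5/2}\,\d s=5\pi/16$, and chaining the Parseval factors yields $\|c\|_2\lesssim 2\,\|b\|_2$, comfortably below the required $16$ --- so the constant-tracking you flag as the main obstacle is not delicate, especially since you only need an upper bound of $16$ on the inverse norm, not a sharp value (the paper itself only proves $12$). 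What each approach buys: the paper's argument is elementary and self-contained, using nothing beyond the three-term recurrence; yours explains conceptually why the bound is uniform in $n$ --- it is the discrete shadow of the Poincar\'e inequality for the Dirichlet problem on $[-1,1]$ --- and yields a sharper constant. One minor caveat common to both proofs: the bound $\sigma_{\max}\le n^4$ implicitly assumes $n\ge 2$ (for $n=1$ the matrix consists only of the two boundary rows and has $\sigma_{\max}=\sqrt2>1$), which is harmless since a second-order problem requires $n\ge 2$ anyway.
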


The proofs of \lem{svd_Fourier} and \lem{svd_Chebyshev} appear in \app{svd}. Using these two lemmas, we first upper bound the condition number of the linear system for Poisson's equation, and then extend the result to general elliptic PDEs.

For the case of the Poisson equation, we use the following simple bounds on the extreme singular values of a Kronecker sum.

\begin{lemma}\label{lem:condition_Poisson}
Let
\begin{equation}
L=\bigoplus_{j=1}^d M_j=M_1\otimes I^{\otimes d-1}+I\otimes M_2\otimes I^{\otimes d-2}+\cdots+I^{\otimes d-1}\otimes M_d,
\label{eq:tensorsum}
\end{equation}
where $\{M_j\}_{j=1}^d$ are square matrices. If the largest and smallest singular values of $M_j$ satisfy
\begin{equation}
\begin{aligned}
&\sigma_{\max}(M_j) \le s_j^{\max},\\
&\sigma_{\min}(M_j) \ge s_j^{\min},
\label{eq:svd_tensorsum}
\end{aligned}
\end{equation}
respectively, then the condition number of $L$ satisfies
\begin{equation}
\kappa_L\le \frac{\sum_{j=1}^ds_j^{\max}}{\sum_{j=1}^ds_j^{\min}}.
\label{eq:con_tensorsum}
\end{equation}
\end{lemma}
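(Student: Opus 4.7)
The plan is to bound $\sigma_{\max}(L)$ from above and $\sigma_{\min}(L)$ from below, and then form the ratio. Write $L=\sum_{j=1}^d L_j$ with $L_j:=I^{\otimes j-1}\otimes M_j\otimes I^{\otimes d-j}$; tensoring with the identity leaves the nonzero singular values unchanged, so $\|L_j\|=\sigma_{\max}(M_j)\le s_j^{\max}$ and $\sigma_{\min}(L_j)=\sigma_{\min}(M_j)\ge s_j^{\min}$.

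The upper bound is a one-line application of subadditivity of the spectral norm:
\begin{equation*}
\sigma_{\max}(L)=\|L\|\le\sum_{j=1}^d\|L_j\|\le\sum_{j=1}^d s_j^{\max}.
\end{equation*}

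The lower bound $\sigma_{\min}(L)\ge\sum_{j=1}^d s_j^{\min}$ is the main obstacle, because $\sigma_{\min}$ of a sum is not controlled by the $\sigma_{\min}$'s of the summands in general (no matching reverse triangle inequality is available, and $\sigma_{\min}$ of a sum can be catastrophically small even when every summand is well-conditioned). I would exploit the tensor structure specific to a Kronecker sum: if the eigenpairs of $M_j$ are $(\lambda_{j,k},v_{j,k})$, then each pure tensor $v_{1,k_1}\otimes\cdots\otimes v_{d,k_d}$ is a right eigenvector of $L$ with eigenvalue $\lambda_{1,k_1}+\cdots+\lambda_{d,k_d}$, and for diagonalizable $M_j$ these span the whole space. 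In the favorable setting where each $M_j$ is normal (so the tensor eigenbasis is orthonormal) this forces $\sigma_{\min}(L)=\min_{\bm k}\bigl|\sum_j\lambda_{j,k_j}\bigr|$, and invoking the hypothesis $|\lambda_{j,k}|\ge s_j^{\min}$ together with sign/phase alignment of the eigenvalues (automatic in the Hermitian positive semidefinite case, where all $\lambda_{j,k}$ are nonnegative reals and the argument collapses to the operator inequality $L\succeq(\sum_j s_j^{\min})I$) yields the claim. Dividing the two bounds then gives $\kappa_L=\sigma_{\max}(L)/\sigma_{\min}(L)\le\bigl(\sum_{j}s_j^{\max}\bigr)/\bigl(\sum_{j}s_j^{\min}\bigr)$ as claimed.
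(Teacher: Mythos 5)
Your upper bound is correct and essentially forced: $\|L\|\le\sum_j\|L_j\|=\sum_j\sigma_{\max}(M_j)\le\sum_j s_j^{\max}$. The problem is the lower bound on $\sigma_{\min}(L)$, and as you yourself flag, your argument for it only goes through after importing hypotheses that are not in the lemma: normality of each $M_j$ (so the tensor eigenbasis is orthonormal and singular values equal eigenvalue moduli) \emph{and} phase alignment of the eigenvalues across the $M_j$ (so that $\bigl|\sum_j\lambda_{j,k_j}\bigr|\ge\sum_j s_j^{\min}$ rather than possibly vanishing). Under the stated hypotheses --- arbitrary square matrices with two-sided singular-value bounds --- your sketch does not close; what you actually prove is the Hermitian positive (semi)definite case, where $L\succeq\bigl(\sum_j s_j^{\min}\bigr)I$ is indeed a one-line complete argument.

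That said, the gap you located is intrinsic to the statement, not a defect of your strategy: the lemma is false as written. Take $d=2$ and the $1\times1$ matrices $M_1=(1)$, $M_2=(-1)$; then $s_j^{\max}=s_j^{\min}=1$ for both, yet $L=M_1\otimes I+I\otimes M_2=(0)$, so $\kappa_L$ is infinite while the claimed bound is $1$. Some positivity or alignment assumption of exactly the kind you invoke is therefore \emph{necessary}. The paper's own proof takes a different route --- it exponentiates, uses $\exp\bigl(\bigoplus_j M_j\bigr)=\bigotimes_j\exp(M_j)$ together with multiplicativity of singular values under Kronecker products, and then ``takes the matrix logarithm'' --- but its opening step $\sigma_{\min}(\exp(M_j))\ge e^{\sigma_{\min}(M_j)}$ is already false in general (for $M=(-2)$ one has $\sigma_{\min}(\exp M)=e^{-2}$, not $\ge e^{2}$), and the closing logarithm step fails for the same reason; that argument is likewise only valid when the $M_j$ are simultaneously Hermitian positive semidefinite or unitarily equivalent to that situation. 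So your proposal and the paper's proof cover essentially the same restricted regime, but yours is the more honest of the two about where the restriction enters. The constructive fix is to restate the lemma with the hypothesis $M_j=M_j^\dagger\succeq s_j^{\min}I$, $\|M_j\|\le s_j^{\max}$, under which your operator-inequality argument is a complete proof.
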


\begin{proof}
We bound the singular values of the matrix exponential $\exp(M_j)$ by
\begin{equation}
\begin{aligned}
&\sigma_{\max}(\exp(M_j)) \le e^{s_j^{\max}},\\
&\sigma_{\min}(\exp(M_j)) \ge e^{s_j^{\min}}
\label{eq:svd_exp}
\end{aligned}
\end{equation}
using \eq{svd_tensorsum}. The singular values of the Kronecker product $\bigotimes_{j=1}^d\exp(M_j)$ are
\begin{equation}
\sigma_{k_1,\ldots,k_d}\biggl(\bigotimes_{j=1}^d\exp(M_j)\biggr)=\prod_{j=1}^d\sigma_{k_j}(\exp(M_j))
\end{equation}
where $\sigma_{k_j}(\exp(M_j))$ are the singular values of the matrix $\exp(M_j)$ for each $j \in \range{d}$, where $k_j$ runs from $1$ to the dimension of $M_j$. Using the property of the Kronecker sum that
\begin{equation}
\exp(L)=\exp\biggl(\bigoplus_{j=1}^d M_j\biggr)=\bigotimes_{j=1}^d\exp(M_j),
\label{eq:Kronecker}
\end{equation}
we bound the singular values of the matrix exponential of \eq{matrix_Poisson} by
\begin{equation}
\begin{aligned}
&\sigma_{\max}(\exp(L)) \le e^{\sum_{j=1}^ds_j^{\max}},\\
&\sigma_{\min}(\exp(L)) \ge e^{\sum_{j=1}^ds_j^{\min}}.
\label{eq:svd_prod}
\end{aligned}
\end{equation}
Finally, we bound the singular values of the matrix logarithm of \eq{svd_prod} by
\begin{equation}
\begin{aligned}
&\sigma_{\max}(L) \le \sum_{j=1}^ds_j^{\max},\\
&\sigma_{\min}(L) \ge \sum_{j=1}^ds_j^{\min}.
\label{eq:svd_Poisson}
\end{aligned}
\end{equation}
Thus the condition number of $L$ satisfies
\begin{equation}
\kappa_L \le \frac{\sum_{j=1}^ds_j^{\max}}{\sum_{j=1}^ds_j^{\min}}
\end{equation}
as claimed.
\end{proof}

This lemma easily implies a bound on the condition number of the linear system for Poisson's equation:

\begin{corollary}\label{cor:condition_Poisson}
Consider an instance of the quantum PDE problem as defined in \prb{pde} for Poisson's equation \eq{Poisson} with Dirichlet boundary conditions \eq{dbc}. Then for $n\ge4$, the condition number of $L_{\Poisson}$ in the linear system \eq{linear_system} satisfies
\begin{equation}
\kappa_{L_{\Poisson}}\le (2n)^4.
\end{equation}
\end{corollary}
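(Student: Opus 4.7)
The plan is to combine the Chebyshev singular value bound of \lem{svd_Chebyshev} with the general Kronecker-sum condition number bound of \lem{condition_Poisson}, noting that the matrix $L_{\Poisson}$ from \eq{matrix_Poisson} is precisely a Kronecker sum of $d$ identical copies of $\overline{D}^{(2)}_n$.

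First, I would observe that under Dirichlet boundary conditions on $\D = [-1,1]^d$, the spectral discretization in \sec{solver} uses the Chebyshev basis, so the relevant second-order differential matrix is the one described in \eq{matrix_Chebyshev}. Hence \lem{svd_Chebyshev} gives $\sigma_{\max}(\overline{D}^{(2)}_n) \le n^4$ and $\sigma_{\min}(\overline{D}^{(2)}_n) \ge 1/16$.

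Next, I would apply \lem{condition_Poisson} with $M_j = \overline{D}^{(2)}_n$ and the parameters $s_j^{\max} = n^4$, $s_j^{\min} = 1/16$ for each $j \in \range{d}$. Since all $d$ summands are identical, the hypothesis \eq{svd_tensorsum} is immediately satisfied, and the bound \eq{con_tensorsum} yields
\begin{equation}
\kappa_{L_{\Poisson}} \le \frac{\sum_{j=1}^d s_j^{\max}}{\sum_{j=1}^d s_j^{\min}} = \frac{d \cdot n^4}{d/16} = 16 n^4 = (2n)^4,
\end{equation}
which is the claimed bound.

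There is essentially no obstacle: the heavy lifting has already been done in the two preceding lemmas. The only point requiring a small check is that the $d$ Kronecker-sum factors in \eq{matrix_Poisson} are all exactly $\overline{D}^{(2)}_n$ (not merely $\overline{\bm D}^{(\bm j)}_n$ for mixed partial indices $\bm j$), which holds because Poisson's equation contains only pure second derivatives $\partial^2 / \partial x_j^2$ with unit coefficients, so the ellipticity and \gdd\ conditions make the off-diagonal second-derivative matrices $\overline{\bm D}^{(\bm j)}_n$ with $\|\bm j\|_\infty = 1$ drop out.
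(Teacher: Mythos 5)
Your proposal is correct and follows essentially the same route as the paper: both invoke the singular value bounds $\sigma_{\max}(\overline{D}^{(2)}_n)\le n^4$, $\sigma_{\min}(\overline{D}^{(2)}_n)\ge 1/16$ and then apply \lem{condition_Poisson} with identical summands $M_j=\overline{D}^{(2)}_n$ to get $\kappa\le 16n^4=(2n)^4$. The only cosmetic difference is that the paper quotes both \lem{svd_Fourier} and \lem{svd_Chebyshev} so the same bound covers the periodic case as well, whereas you specialize to the Chebyshev case relevant to Dirichlet conditions.
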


\begin{proof}
The matrix in \eq{linear_system} for Poisson's equation \eq{Poisson} is $L_{\Poisson}$ defined in \eq{matrix_Poisson}.
For both the periodic and the non-periodic case, we have
\begin{equation}
\begin{aligned}
&\sigma_{\max}(\overline{D}^{(2)}_n) \le n^4,\\
&\sigma_{\min}(\overline{D}^{(2)}_n) \ge \frac{1}{16}
\label{eq:svd_Fourier_Chebyshev}
\end{aligned}
\end{equation}
by \lem{svd_Fourier} and \lem{svd_Chebyshev}. Let $M_j=\overline{D}^{(2)}_n$ for $j\in\range{d}$ in \eq{tensorsum}, and apply \lem{condition_Poisson} with $s_j^{\max}=n^4$ and $s_j^{\min}={1}/{16}$ in \eq{con_tensorsum}. Then the condition number of $L_{\Poisson}$ is bounded by
\begin{equation}
\kappa_{L_{\Poisson}} \le \frac{\sigma_{\max}(\overline{D}^{(2)}_n)}{\sigma_{\min}(\overline{D}^{(2)}_n)} \le (2n)^4
\end{equation}
as claimed.
\end{proof}

We now consider the condition number of the linear system for general elliptic PDEs.

\begin{lemma}\label{lem:condition}
Consider an instance of the quantum PDE problem as defined in \prb{pde} with Dirichlet boundary conditions \eq{dbc}.
Then for $n\ge4$, the condition number of $L$ in the linear system \eq{linear_system} satisfies
\begin{equation}
\kappa_L\le \frac{\|A\|_\Sigma}{C\|A\|_{\ast}}(2n)^4,
\end{equation}
where $\|A\|_{\Sigma}:=\sum_{\|\bm{j}\|_1\le 2} |A_{\bm{j}}|=\sum_{j_1,j_2=1}^d |A_{j_1,j_2}|$, $\|A\|_{\ast}:=\sum_{j=1}^d |A_{j,j}|$, and $C>0$ is defined in \eq{DDM}.
\end{lemma}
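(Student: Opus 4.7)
The plan is to bound $\sigma_{\max}(L)$ and $\sigma_{\min}(L)$ separately, following the template of Corollary \ref{cor:condition_Poisson} and compensating for the off-diagonal (mixed-partial) terms through the global diagonal dominance constant $C$ from \eq{DDM}.

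For the upper bound I would apply the triangle inequality to $L=\sum_{\|\bm{j}\|_1=2}A_{\bm{j}}\overline{\bm{D}}^{(\bm{j})}_n$. Each summand $\overline{\bm{D}}^{(\bm{j})}_n$ is a Kronecker product containing either a single $\overline{D}^{(2)}_n$ factor (pure partials, $\|\bm{j}\|_\infty=2$) or two $D_n$ factors in distinct coordinates (mixed partials, $\|\bm{j}\|_\infty=1$). Lemmas \ref{lem:svd_Fourier} and \ref{lem:svd_Chebyshev} supply $\sigma_{\max}(\overline{D}^{(2)}_n)\le n^4$, while direct inspection of \eq{Fourier_Dn} and \eq{Chebyshev_Dn} gives $\sigma_{\max}(D_n)=O(n^2)$, so $\sigma_{\max}(\overline{\bm{D}}^{(\bm{j})}_n)\le n^4$ uniformly in $\bm{j}$. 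Summing against the coefficients yields $\sigma_{\max}(L)\le n^4\,\|A\|_\Sigma$.

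For the lower bound I would split $L = L_d + L_o$, where $L_d := \sum_{j=1}^{d} A_{j,j}\,\overline{\bm{D}}^{(2 e_j)}_n$ is a Kronecker sum of the matrices $A_{j,j}\,\overline{D}^{(2)}_n$ and $L_o := \sum_{j_1 \ne j_2} A_{j_1,j_2}\,\overline{\bm{D}}^{(e_{j_1}+e_{j_2})}_n$ collects the mixed-partial terms. Applying Lemma \ref{lem:condition_Poisson} to $L_d$ with $M_j = A_{j,j}\overline{D}^{(2)}_n$, so that $s_j^{\min} = |A_{j,j}|\sigma_{\min}(\overline{D}^{(2)}_n) \ge |A_{j,j}|/16$ by the two SVD lemmas, immediately gives $\sigma_{\min}(L_d) \ge \|A\|_*/16$. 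The remaining task is to establish
\[
\|L_d^{-1} L_o\| \le 1-C;
\]
factoring $L = L_d(I + L_d^{-1} L_o)$ then yields $\sigma_{\min}(L) \ge C\,\sigma_{\min}(L_d) \ge C\|A\|_*/16$, which together with the upper bound on $\sigma_{\max}(L)$ produces $\kappa_L \le \|A\|_\Sigma(2n)^4/(C\|A\|_*)$.

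The main obstacle is the perturbation estimate $\|L_d^{-1}L_o\|\le 1-C$, since the crude submultiplicative bound $\|L_d^{-1}\|\cdot\|L_o\|$ scales as $n^4$ and cannot suffice. The correct argument must exploit the bilinear structure $\sum_{j_1,j_2}A_{j_1,j_2}\,D_n^{(j_1)}D_n^{(j_2)}$ (together with the boundary-row corrections $G_n^{(j)}$) and the fact that $D_n^{(j)}$ acting on distinct tensor factors commute. In the Fourier case the $D_n^{(j)}$ are diagonal in the computational basis, so $L_d^{-1}L_o$ acts essentially diagonally with entries of the form $\bigl(\sum_{j_1 \ne j_2} A_{j_1,j_2}\xi_{j_1}\xi_{j_2}\bigr)/\bigl(\sum_j A_{j,j}\xi_j^2\bigr)$; the AM--GM bound $|\xi_{j_1}\xi_{j_2}|\le \tfrac12(\xi_{j_1}^2+\xi_{j_2}^2)$ combined with the global DD identity $\sum_j \alpha_j = 1-C$, where $\alpha_j := \sum_{j'\ne j}|A_{j,j'}|/|A_{j,j}|$, yields the bound, with the zero mode $\bm{\xi}=0$ handled separately by the boundary-row matrices $G_n^{(j)}$. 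For Chebyshev the $D_n^{(j)}$ are only upper triangular, so the same componentwise argument does not apply directly; instead one must transport the AM--GM estimate through the tensor-product structure via a weighted norm (or equivalently a quadratic-form estimate) adapted to the Chebyshev basis, and it is in this step that the diagonal-dominance hypothesis \eq{DDM} enters crucially.
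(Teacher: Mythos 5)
Your overall route is the same as the paper's: the same triangle-inequality upper bound $\|L\|\le\|A\|_\Sigma\, n^4$, the same splitting of $L$ into a Kronecker-sum part $L_d$ (the paper's $L_1$) and a mixed-partial part $L_o$ (the paper's $L_2$), the same application of \lem{condition_Poisson} together with \lem{svd_Fourier} and \lem{svd_Chebyshev} to get $\sigma_{\min}(L_d)\ge\|A\|_\ast/16$, and the same factorization $L=L_d(I+L_d^{-1}L_o)$ reducing everything to the perturbation estimate $\|L_d^{-1}L_o\|\le 1-C$ (the paper writes it as $\|L_2L_1^{-1}\|\le 1-C$, an immaterial difference). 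Your sketch of that estimate in the Fourier case---AM--GM on $|\xi_{j_1}\xi_{j_2}|$ against $\sum_j A_{j,j}\xi_j^2$, combined with the identity $\sum_j\alpha_j=1-C$ from \eq{DDM}---is exactly the mechanism the paper exploits.

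The gap is that you never actually establish $\|L_d^{-1}L_o\|\le 1-C$ in the Chebyshev (non-periodic) case, which is the case the lemma is really about: the statement assumes Dirichlet boundary conditions \eq{dbc}, for which $D_n$ is the non-normal upper-triangular matrix \eq{Chebyshev_Dn}. You correctly observe that the componentwise diagonal argument breaks down there, but then only assert that one must ``transport the AM--GM estimate through the tensor-product structure via a weighted norm adapted to the Chebyshev basis'' without exhibiting such a norm or proving the resulting quadratic-form inequality---and that is precisely where all the work lies. For reference, the paper closes this step in three moves: (i) for each mixed $\bm{j}$ it compares $\bm{D}^{(\bm{j})}_n$ (a product $D_n\otimes D_n$ in the two active coordinates) with the Kronecker sum $L^{(\bm{j})}=D_n^2\otimes I+I\otimes D_n^2$ by expanding in the right singular vectors of $D_n$ and applying AM--GM to the singular values, giving $\|\bm{D}^{(\bm{j})}_n v\|\le\tfrac12\|L^{(\bm{j})}v\|$ for every $v$; (ii) it upgrades $D_n^2$ to $\overline{D}^{(2)}_n$ via the componentwise observation that $\overline{D}^{(2)}_n$ only adds boundary rows where $D_n^2$ has zero rows, so $\|D_n^2 v\|\le\|\overline{D}^{(2)}_n v\|$; and (iii) it uses that the $d$ factors $I^{\otimes r-1}\otimes\overline{D}^{(2)}_n\otimes I^{\otimes d-r}$ share singular vectors and values to bound $\|I^{\otimes r_s-1}\otimes\overline{D}^{(2)}_n\otimes I^{\otimes d-r_s}L_d^{-1}\|$ by $1/A_{r_s,r_s}$, after which summing over $j_1\ne j_2$ and invoking \eq{DDM} yields $1-C$. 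Without some explicit version of (i)--(iii), or the weighted-norm argument you allude to made concrete, your proposal proves the lemma only for periodic boundary conditions.
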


Recall that $C$ quantifies the extent to which the global strict diagonal dominance condition holds.

\begin{proof}
According to \eq{coeff_linear_system}, the matrix in \eq{linear_system} is
\begin{equation}
L=\sum_{\|\bm{j}\|_1= 2}A_{\bm{j}}\overline{\bm{D}}^{(\bm{j})}_{\bm{n}}.
\label{eq:matrix_L}
\end{equation}
We upper bound the spectral norm of the matrix $L$ by
\begin{equation}
\|L\| \le \sum_{\|\bm{j}\|_1= 2} |A_{\bm{j}}| \|\overline{\bm{D}}^{(\bm{j})}_{\bm{n}}\|.
\end{equation}
For the matrix $\overline{\bm{D}}^{(\bm{j})}_{\bm{n}}$ defined by \eq{boundary_tensor}, \lem{svd_Fourier} (in the periodic case) and \lem{svd_Chebyshev} (in the non-periodic case) give the inequality
\begin{equation}
\|\overline{\bm{D}}^{(\bm{j})}_{\bm{n}}\|\le n^4,
\label{eq:bound_Dn}
\end{equation}
so we have
\begin{equation}
\|L\| \le \sum_{\|\bm{j}\|_1= 2} |A_{\bm{j}}| n^4 = \|A\|_\Sigma \, n^4.
\label{eq:matrix_norm}
\end{equation}

Next we lower bound $\|L\xi\|$ for any $\|\xi\|=1$.

It is non-trivial to directly compute the singular values of a sum of non-normal matrices. Instead, we write $L$ as a sum of terms $L_1$ and $L_2$, where $L_1$ is a tensor sum similar to \eq{matrix_Poisson} that can be bounded by \lem{condition_Poisson}, and $L_2$ is a sum of tensor products that are easily bounded. Specifically, we have
\begin{equation}
\begin{aligned}
L_1&=A_{1,1}\overline{D}^{(2)}_n\otimes I^{\otimes d-1}+\cdots+A_{d,d}I^{\otimes d-1}\otimes \overline{D}^{(2)}_n\\
L_2&=L-L_1.
\end{aligned}
\end{equation}
The ellipticity condition \eq{elliptic}, $\forall\bm{\xi}\,\sum_{\|\bm{j}\|_1= h}A_{\bm{j}}(\bm{x})\bm{\xi}^{\bm{j}}\ne0$, can only hold if the $A_{j,j}$ for $j \in \range{d}$ are either all positive or all negative; we consider $A_{j,j}>0$ without loss of generality, so
\begin{equation}
\|A\|_{\ast}=\sum_{j=1}^d |A_{j,j}|=\sum_{j=1}^dA_{j,j}.
\end{equation}
Also, the global strict diagonal dominance condition \eq{DDM} simplifies to
\begin{equation}
C=1-\sum_{j_1=1}^d\frac{1}{A_{j_1,j_1}}\sum_{j_2\in\range{d}\backslash\{j_1\}} |A_{j_1,j_2}|>0,
\label{eq:discrete_uniform}
\end{equation}
where $0 < C \le 1$.

We now upper bound $\|L_2 L_1^{-1}\|$ by bounding $\|\bm{D}^{(\bm{j})}_{\bm{n}} L_1^{-1}\|$ for each $\bm{j}=(j_1, \ldots, j_d)$ that has exactly two entries equal to $1$ and all other entries $0$. Specifically, consider $j_{r_1}=j_{r_2}=1$ for $r_1, r_2 \in \range{d}$, $r_1\ne r_2$, and $j_{r}=0$ for $r \in \range{d}\backslash\{r_1, r_2\}$. We denote
\begin{equation}
L^{(\bm{j})}:=I^{\otimes r_1-1}\otimes D^{2}_n\otimes I^{\otimes d-r_1}+I^{\otimes r_2-1}\otimes D^{2}_n\otimes I^{\otimes d-r_2}.
\end{equation}
We first upper bound $\|\bm{D}^{(\bm{j})}_{\bm{n}}\|$ by $\frac{1}{2}\|L^{(\bm{j})}\|$. Notice the matrices $\bm{D}^{(\bm{j})}_{\bm{n}}$ and $L^{(\bm{j})}$ share the same singular vectors. For $k \in \rangez{n+1}$, we let $v_k$ and $\lambda_k$ denote the right singular vectors and corresponding singular values of $D_n$, respectively. Then the right singular vectors of $\bm{D}^{(\bm{j})}_{\bm{n}}$ and $L^{(\bm{j})}$ are $\bm{v}_{\bm{k}}:=\bigotimes_{j=1}^d v_{k_j}$, where $\bm{k}=(k_1,\ldots,k_d)$ with $k_j \in \rangez{n+1}$ for $j \in \range{d}$. For any vector $v=\sum_{\|\bm{k}\|_\infty \le n}\alpha_{\bm{k}}\bm{v}_{\bm{k}}$, we have
\begin{equation}
\|\bm{D}^{(\bm{j})}_{\bm{n}}v\|^2 = \sum_{\|\bm{k}\|_\infty \le n}|\alpha_{\bm{k}}|^2\|\bm{D}^{(\bm{j})}_{\bm{n}}\bm{v}_{\bm{k}}\|^2 = \sum_{\|\bm{k}\|_\infty \le n}|\alpha_{\bm{k}}|^2(\lambda_{k_{j_{r_1}}}\lambda_{k_{j_{r_2}}})^2,
\label{eq:svd_Dj}
\end{equation}
\begin{equation}
\|L^{(\bm{j})}v\|^2 = \sum_{\|\bm{k}\|_\infty \le n}|\alpha_{\bm{k}}|^2\|L^{(\bm{j})}\bm{v}_{\bm{k}}\|^2 = \sum_{\|\bm{k}\|_\infty \le n}|\alpha_{\bm{k}}|^2(\lambda_{k_{j_{r_1}}}^2+\lambda_{k_{j_{r_2}}}^2)^2,
\label{eq:svd_Lj}
\end{equation}
which implies $\|\bm{D}^{(\bm{j})}_{\bm{n}}v\| \le \frac{1}{2}\|L^{(\bm{j})}v\|$ by the inequality of arithmetic and geometric means (also known as the AM-GM inequality). Since this holds for any vector $v$, we have
\begin{equation}
\|\bm{D}^{(\bm{j})}_{\bm{n}}L_1^{-1}\| \le \frac{1}{2}\|L^{(\bm{j})}L_1^{-1}\|.
\end{equation}

Next we upper bound $\|D^{2}_n\|$ by $\|\overline{D}^{(2)}_n\|$. For any vector $u=[u_0,\ldots,u_n]^T$, define two vectors $w=[w_0,\ldots,w_n]^T$ and $\overline{w}=[\overline{w}_0,\ldots,\overline{w}_n]^T$ such that
\begin{equation}
D^{2}_n[u_0,\ldots,u_n]^T = [w_0,\ldots,w_n]^T
\end{equation}
and
\begin{equation}
\overline{D}^{2}_n[u_0,\ldots,u_n]^T = [\overline{w}_0,\ldots,\overline{w}_n]^T.
\end{equation}
Notice that $w_{\left\lfloor n/2 \right\rfloor}=0$ and $w_k=\overline{w}_k$ for $k \in \rangez{n+1}\backslash\{\left\lfloor n/2 \right\rfloor\}$ for periodic conditions, and $w_{n-1}=w_n=0$ and $w_k=\overline{w}_k$ for $k \in \rangez{n+1}\backslash\{n-1,n\}$ for non-periodic conditions. Thus, for any vector $v$,
\begin{equation}
\|D^{2}_nv\|^2 =\|w\|^2 = \sum_{k=0}^nw_k^2 \le \sum_{k=0}^n\overline{w}_k^2 = \|\overline{w}\|^2 = \|\overline{D}^{(2)}_nv\|^2.
\end{equation}
Therefore,
\begin{equation}
\|L^{(\bm{j})}L_1^{-1}\| \le \sum_{s=1}^2\|I^{\otimes r_s-1}\otimes D^{2}_n\otimes I^{\otimes d-r_s}L_1^{-1}\| \le \sum_{s=1}^2\|I^{\otimes r_s-1}\otimes \overline{D}^{(2)}_n\otimes I^{\otimes d-r_s}L_1^{-1}\|.
\end{equation}
We also have
\begin{equation}
\|\bm{D}^{(\bm{j})}_{\bm{n}} L_1^{-1}\| \le \frac{1}{2}\sum_{s=1}^2\|I^{\otimes r_s-1}\otimes \overline{D}^{(2)}_n\otimes I^{\otimes d-r_s}L_1^{-1}\|.
\end{equation}
We can rewrite $I^{\otimes r_s-1}\otimes \overline{D}^{(2)}_n\otimes I^{\otimes d-r_s}L_1^{-1}$ in the form
\begin{equation}
I^{\otimes r_s-1}\otimes \overline{D}^{(2)}_n\otimes I^{\otimes d-r_s}\left(\sum_{h=1}^dA_{h,h}I^{\otimes r_h-1}\otimes \overline{D}^{(2)}_n\otimes I^{\otimes d-r_h}\right)^{-1}.
\end{equation}
The matrices $I^{\otimes r_h-1}\otimes \overline{D}^{(2)}_n\otimes I^{\otimes d-r_h}$ share the same singular values and singular vectors, so
\begin{align}
\|I^{\otimes r_s-1}\otimes \overline{D}^{(2)}_n\otimes I^{\otimes d-r_s}L_1^{-1}\| =  \max_{\overline{\lambda}_{k_r}}\frac{\overline{\lambda}_{k_{r_s}}}{\sum_{h=1}^d A_{h,h}\overline{\lambda}_{k_h}} < \frac{1}{A_{{r_s},{r_s}}},
\end{align}
where $\overline{\lambda}_{k_h}$ are singular values of $I^{\otimes r_h-1}\otimes \overline{D}^{(2)}_n\otimes I^{\otimes d-r_h}$ for $k_h \in \rangez{n}$, $h\in \range{d}$. This implies
\begin{equation}
\|\bm{D}^{(\bm{j})}_{\bm{n}} L_1^{-1}\| \le \frac{1}{2}(\frac{1}{A_{{r_1},{r_1}}}+\frac{1}{A_{{r_2},{r_2}}}).
\end{equation}
Using \eq{discrete_uniform}, considering each instance of $\bm{D}^{(\bm{j})}_{\bm{n}}$ in $L_2$, we have
\begin{equation}
\|L_2L_1^{-1}\| \le \sum_{j_1\ne j_2}|A_{j_1,j_2}| \|\bm{D}^{(\bm{j})}_{\bm{n}} L_1^{-1}\| \le \sum_{j_1=1}^d\frac{1}{A_{j_1,j_1}}\sum_{j_2\in\range{d}\backslash\{j_1\}} |A_{j_1,j_2}| \le 1-C.
\end{equation}
Since $L$ and $L_1$ are invertible, $\|L_1^{-1}L_2\|\le 1-C<1$, and by \lem{condition_Poisson} applied to $\|L_1^{-1}\|$, we have
\begin{equation}
\|L^{-1}\|=\|(L_1+L_2)^{-1}\| \le \|(I+L_2L_1^{-1})^{-1}\| \|L_1^{-1}\| \le \frac{\|L_1^{-1}\|}{1-\|L_2L_1^{-1}\|}\le\frac{1/\frac{1}{16}\|A\|_{\ast}}{C}=\frac{16}{C\|A\|_{\ast}}.
\end{equation}
Thus we have
\begin{equation}
\kappa_L = \|L\|\|L^{-1}\| \le \frac{\|A\|_\Sigma}{C\|A\|_{\ast}}(2n)^4
\end{equation}
as claimed.
\end{proof}


\subsection{State preparation}
\label{sec:state}

We now describe a state preparation procedure for the vector $\vec f + \vec g$ in the linear system \eq{linear_system}.

\begin{lemma}\label{lem:preparation}
Let $O_f$ be a unitary oracle that maps $|0\rangle|0\rangle$ to a state proportional to $|0\rangle|f\rangle$, and $|\phi\rangle|0\rangle$ to $|\phi\rangle|0\rangle$ for any $|\phi\rangle$ orthogonal to $|0\rangle$; let $O_x$ be a unitary oracle that maps $|0\rangle|0\rangle$ to $|0\rangle|0\rangle$, $|j\rangle|0\rangle$ to a state proportional to $|j\rangle|\gamma^{j+}\rangle$ for $j \in \range{d}$, and $|j+d\rangle|0\rangle$ to a state proportional to $|j+d\rangle|\gamma^{j-}\rangle$ for $j \in \range{d}$. Suppose $\norm{|f\rangle},\norm{|\gamma^{j+}\rangle},\norm{|\gamma^{j-}\rangle}$ and $A_{j,j}$ for $j \in \range{d}$ are known.
Define the parameter
\begin{equation}
  q: = \sqrt{\frac{\sum_{\|\bm{k}\|_{\infty}\le n} \sum_{j=1}^d [ \hat{f}^2_{\bm{k}}+(A_{j,j}\hat{\gamma}^{j+}_{\bm{k}})^2+(A_{j,j}\hat{\gamma}^{j-}_{\bm{k}})^2]}{\sum_{\|\bm{k}\|_{\infty}\le n} \sum_{j=1}^d |\hat{f}_{\bm{k}}+A_{j,j}\hat{\gamma}^{j+}_{\bm{k}}+A_{j,j}\hat{\gamma}^{j-}_{\bm{k}}|^2}}.
\end{equation}
Then the normalized quantum state
\begin{equation}
  |B\rangle \propto
\sum_{\|\bm{k}\|_{\infty}\le n} \sum_{j=1}^d (\hat{f}_{\bm{k}}+A_{j,j}\hat{\gamma}^{j+}_{\bm{k}}+A_{j,j}\hat{\gamma}^{j-}_{\bm{k}})|k_1\rangle\ldots|k_d\rangle,
\end{equation}
with coefficients defined as in \eq{f_coeff} and \eq{g_coeff},
can be prepared with gate and query complexity $O(qd^2\log n\log\log n)$.
\end{lemma}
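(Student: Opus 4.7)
The plan is to produce $|B\rangle$ by an LCU-style combination of the two oracles, followed by fixed-point amplitude amplification. Set $\alpha_{j,0}:=\|f\|$ and $\alpha_{j,\pm}:=A_{j,j}\|\gamma^{j\pm}\|$, write $M^2=\sum_{j,\sigma}\alpha_{j,\sigma}^{2}$ for the numerator inside $q^2$, and write $N^2$ for the denominator, so that $q=M/N$ by construction.

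First I would introduce an ancillary register of dimension $3d$, labeled by $(j,\sigma)\in[d]\times\{0,+,-\}$, and prepare it in $|\phi\rangle=M^{-1}\sum_{j,\sigma}\alpha_{j,\sigma}|j,\sigma\rangle$; since the norms and coefficients $A_{j,j}$ are known, this takes $O(d)$ gates by standard arbitrary state preparation on an $O(d)$-dimensional register. Next, controlled on the ancilla, I invoke $O_f$ on the $\sigma=0$ branches and $O_x$ (with index $j$ or $j+d$) on the $\sigma=\pm$ branches to produce the joint state $M^{-1}\sum_{j,\sigma}\alpha_{j,\sigma}|j,\sigma\rangle|\psi_{j,\sigma}\rangle$, where $|\psi_{j,0}\rangle=|f\rangle$ and $|\psi_{j,\pm}\rangle=|\gamma^{j\pm}\rangle$ are the normalized interpolation-basis states. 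A constant number of controlled oracle calls suffices here.

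Third, I transform from the interpolation-node basis to the spectral-coefficient basis by applying a controlled inverse multidimensional QSFT (periodic case) or QCT (non-periodic case) via \lem{qsft} and \lem{qct}. The $\sigma=0$ branches need the full $d$-dimensional inverse transform, while the $\sigma=\pm$ branches require the $(d-1)$-dimensional transform on all registers other than the $j$th, with the boundary index $k_j$ set to $n$ or $n-1$. I would implement the multiplexing by a controlled-\textsc{swap} reordering that brings register $j$ to a fixed position, applies a single $(d-1)$-dimensional transform, and uncomputes the \textsc{swap}s, for a per-iteration cost of $O(d\log n\log\log n)$. Since the QSFT/QCT is unitary, $\|\hat\psi_{j,\sigma}\|=\|\psi_{j,\sigma}\|$, so the amplitudes $\alpha_{j,\sigma}\hat\psi^{j,\sigma}_{\bm{k}}$ collapse to exactly $\hat{f}_{\bm{k}}$ (for $\sigma=0$) or $A_{j,j}\hat{\gamma}^{j\pm}_{\bm{k}}$ (for $\sigma=\pm$), matching the summands in $|B\rangle$.

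Finally, I apply to the ancilla the inverse of the unitary that creates the uniform superposition $(3d)^{-1/2}\sum_{j,\sigma}|j,\sigma\rangle$ and post-select on $|0\rangle$; the resulting amplitude on $|B\rangle$ is $N/(M\sqrt{3d})$, so the success probability equals $1/(3dq^2)$. Fixed-point amplitude amplification then produces $|B\rangle$ to within any desired accuracy using $O(q\sqrt{d})\le O(qd)$ iterations, for a total cost of $O(qd^{2}\log n\log\log n)$ gates and queries. The main obstacle is implementing the multiplexed controlled inverse transform efficiently, since the boundary branches require applying the transform to a $(d-1)$-register subblock whose identity depends on $j$; the controlled-\textsc{swap} reordering is the key device that avoids instantiating $d$ separate $(d-1)$-dimensional transforms, and verifying that the amplitude-amplification factor and normalizations combine to exactly $q$ (rather than some 1-norm analogue) is the main bookkeeping check.
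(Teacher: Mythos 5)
Your construction follows essentially the same route as the paper's: prepare a weighted superposition on an ancilla using the known norms and coefficients $A_{j,j}$, apply the oracles controlled on the ancilla, apply the inverse QSFT/QCT, project the ancilla onto the uniform superposition, and boost with $O(q\cdot\mathrm{poly}(d))$ rounds of amplitude amplification; the accounting lands at the same $O(qd^2\log n\log\log n)$ bound. Two points of divergence are worth noting. First, you use a $3d$-dimensional ancilla with the $f$-branch repeated $d$ times, whereas the paper uses a $(2d+1)$-dimensional ancilla with $|f\rangle$ appearing once; your version actually reproduces the displayed $|B\rangle$ and $q$ (whose sums over $j$ include $\hat f_{\bm{k}}$ with multiplicity $d$) more literally, at the cost of a $\sqrt{3d}$ in the success amplitude that you correctly absorb into the amplification count. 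Second, and more substantively, the paper applies the \emph{same} full $d$-dimensional inverse QSFT/QCT to every branch, because the lemma defines its coefficients via \eq{f_coeff} and \eq{g_coeff}, i.e., as full $d$-dimensional transforms of \eq{f_inter} and \eq{g_inter}; no multiplexing is needed and the per-iteration cost is just $O(d\log n\log\log n)$ from \lem{qsft} and \lem{qct}. Your branch-dependent $(d-1)$-dimensional transform with controlled-\textsc{swap} reordering produces boundary coefficients supported only on $k_j\in\{n,n-1\}$ --- this is the form that actually enters the linear system \eq{boundary_linear_system}, so your state is arguably the operationally relevant one, but it is not the $|B\rangle$ defined in the lemma statement, and the extra multiplexing machinery is unnecessary for proving the lemma as written. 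Since the paper is itself loose about the distinction between the $d$-dimensional coefficients of \eq{g_coeff} and the $(d-1)$-dimensional ones of \eq{boundary_linear_system}, this is a defensible reading rather than an error, but you should state explicitly which of the two states you are preparing.
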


\begin{proof}
Starting from the initial state $|0\rangle|0\rangle$, we first perform a unitary transformation $U$ satisfying
\begin{equation}
\begin{aligned}
  U|0\rangle=&\frac{\||f\rangle\|}{\sqrt{\||f\rangle\|^2
  +\sum_{j=1}^d\bigl(A_{j,j}^2\||\gamma^{j+}\rangle\|^2+A_{j,j}^2\||\gamma^{j-}\rangle\|^2\bigr)}}|0\rangle\\
  &+\sum_{j=1}^d\frac{A_{j,j}\||\gamma^{j+}\rangle\|}{\sqrt{\||f\rangle\|^2
  +\sum_{j=1}^d\bigl(A_{j,j}^2\||\gamma^{j+}\rangle\|^2+A_{j,j}^2\||\gamma^{j-}\rangle\|^2\bigr)}}|j\rangle\\
  &+\sum_{j=1}^d\frac{A_{j,j}\||\gamma^{j-}\rangle\|}{\sqrt{\||f\rangle\|^2
  +\sum_{j=1}^d\bigl(A_{j,j}^2\||\gamma^{j+}\rangle\|^2+A_{j,j}^2\||\gamma^{j-}\rangle\|^2\bigr)}}|j+d\rangle
\end{aligned}
\end{equation}
on the first register to obtain
\begin{equation}
  \frac{\||f\rangle\||0\rangle
  +A_{1,1}\||\gamma^{1+}\rangle\||1\rangle+\cdots+A_{d,d}\||\gamma^{d-}\rangle\| |2d\rangle}
  {\sqrt{\||f\rangle\|^2
  +\sum_{j=1}^d\bigl(A_{j,j}^2\||\gamma^{j+}\rangle\|^2+A_{j,j}^2\||\gamma^{j-}\rangle\|^2\bigr)}}|0\rangle.
\end{equation}
This can be done in time $O(2d+1)$ by standard techniques \cite{SBM06}. Then we apply $O_x$ and $O_f$ to obtain
\begin{equation}
  \begin{aligned}
  &|0\rangle|f\rangle+A_{1,1}|1\rangle|\gamma^{1+}\rangle+\cdots+A_{d,d}|2d\rangle|\gamma^{d-}\rangle, \\
&\quad\propto
\sum_{\|\bm{k}\|_{\infty},\|\bm{l}\|_{\infty}\le n} \phi_{\bm{k}}(\bm{\chi}_{\bm{l}})(\hat{f}_{\bm{k}}|0\rangle+A_{1,1}\hat{\gamma}^{1+}_{\bm{k}}|1\rangle+\cdots+A_{d,d}\hat{\gamma}^{d-}_{\bm{k}}|2d\rangle)|l_1\rangle\ldots|l_d\rangle,
\end{aligned}
\end{equation}
according to \eq{f_inter} and \eq{g_inter}. We then perform the $d$-dimensional inverse QSFT (for periodic boundary conditions) or inverse QCT (for non-periodic boundary conditions) on the last $d$ registers, obtaining
\begin{equation}
\sum_{\|\bm{k}\|_{\infty}\le n} (\hat{f}_{\bm{k}}|0\rangle+A_{1,1}\hat{\gamma}^{1+}_{\bm{k}}|1\rangle+\cdots+A_{d,d}\hat{\gamma}^{d-}_{\bm{k}}|2d\rangle)|k_1\rangle\ldots|k_d\rangle.
\end{equation}
Finally, observe that if we measure the first register in a basis containing the uniform superposition $|0\rangle+|1\rangle+\cdots+|2d\rangle$ (say, the Fourier basis) and obtain the outcome corresponding to the uniform superposition, we produce the state
\begin{equation}
\sum_{\|\bm{k}\|_{\infty}\le n} \sum_{j=1}^d (\hat{f}_{\bm{k}}+A_{j,j}\hat{\gamma}^{j+}_{\bm{k}}+A_{j,j}\hat{\gamma}^{j-}_{\bm{k}})|k_1\rangle\ldots|k_d\rangle.
\end{equation}
Since this outcome occurs with probability $1/q^2$, we can prepare this state with probability close to $1$ using $O(q)$ steps of amplitude amplification. According to \lem{qsft} and \lem{qct}, the $d$-dimensional (inverse) QSFT or QCT can be performed with gate complexity $O(d\log n\log\log n)$. Thus the total gate and query complexity is $O(qd^2\log n\log\log n)$.
\end{proof}

Alternatively, if it is possible to directly prepare the quantum state $|B\rangle$, then we may be able to avoid the factor of $q$ in the complexity of the overall algorithm.


\subsection{Main result}
\label{sec:main}

Having analyzed the condition number and state preparation procedure for our approach, we are now ready to establish the main result. \thm{main} as follows.

\theoremSpectral*

\begin{proof}
We analyze the complexity of the algorithm presented in \sec{solver}.

First we choose
\begin{equation}
n:=\biggl\lfloor\frac{\log(\Omega)}{\log(\log(\Omega))}\biggr\rfloor,
\label{eq:truncated_number}
\end{equation}
where
\begin{equation}
\Omega=\frac{g'(1+\epsilon)}{g\epsilon}.
\label{eq:Omega}
\end{equation}
By Eq.~(1.8.28) of Reference~\cite{Tang06}, this choice guarantees
\begin{equation}
\|\hat{u}(\bm{x})-u(\bm{x})\|\le
\max_{\bm{x}}\|\hat{u}^{(n+1)}(\bm{x})\|\frac{e^{n}}{(2n)^n}\le\frac{g'}{\Omega}=\frac{g\epsilon}{1+\epsilon}=:\delta.
\label{eq:n_inequality}
\end{equation}
Now $\|\hat{u}(\bm{x})-u(\bm{x})\|\le\delta$ implies
\begin{equation}
\biggl\|\frac{\hat{u}(\bm{x})}{\|\hat{u}(\bm{x})\|}-\frac{u(\bm{x})}{\|u(\bm{x})\|}\biggr\|
\le\frac{\delta}{\min\{\|\hat{u}(\bm{x})\|,\|u(\bm{x})\|\}}\le\frac{\delta}{g-\delta}=\epsilon,
\label{eq:normalized_inequality}
\end{equation}
so we can choose $n$ to ensure that the normalized output state is $\epsilon$-close to $\hat{u}(\bm{x})/\| \hat{u}(\bm{x})\|$.

As described in \sec{solver}, the algorithm uses the high-precision QLSA from Reference~\cite{CKS15} and the multi-dimensional QSFT/QCT (and its inverse). According to \lem{qsft} and \lem{qct}, the $d$-dimensional (inverse) QSFT or QCT can be performed with gate complexity $O(d\log n\log\log n)$. According to \lem{preparation}, the query and gate complexity for state preparation is $O(qd^2\log n\log\log n)$.

For the linear system $L \vec x=\vec f + \vec g$ in \eq{linear_system}, the matrix $L$ is an $(n+1)^d \times (n+1)^d$ matrix with $(n+1)$ or $(n+1)d$ nonzero entries in any row or column for periodic or non-periodic conditions, respectively. According to \lem{condition}, the condition number of $L$ is upper bounded by $\frac{\|A\|_\Sigma}{C\|A\|_{\ast}}(2n)^4$. Consequently, by Theorem 5 of Reference~\cite{CKS15}, the QLSA produces a state proportional to $\vec x$ with $O(\frac{d\|A\|_\Sigma}{C\|A\|_{\ast}}(2n)^5)$ queries to the oracles, and its gate complexity is larger by a factor of $\poly(\log(d\|A\|_\Sigma\,n))$. Using the value of $n$ specified in \eq{truncated_number}, the overall query complexity of our algorithm is
\begin{equation}
\biggl(\frac{d\|A\|_\Sigma}{C\|A\|_{\ast}}+qd^2\biggr)\poly(\log(g'/g\epsilon)),
\end{equation}
and the gate complexity is
\begin{equation}
\biggl(\frac{d\|A\|_\Sigma}{C\|A\|_{\ast}}\poly(\log(d\|A\|_\Sigma/\epsilon))+qd^2\biggr)\poly(\log(g'/g\epsilon))
\end{equation}
which is larger by a factor of $\poly(\log(d\|A\|_\Sigma/\epsilon))$, as claimed.
\end{proof}

Note that we can establish a more efficient algorithm in the special case of the Poisson equation with homogenous boundary conditions. In this case, $\|A\|_\Sigma = \|A\|_{\ast} = d$ and $C=1$. Under homogenous boundary conditions, the complexity of state preparation can be reduced to $d\poly(\log(g'/g\epsilon))$, since we can remove $2d$ applications of the QSFT or QCT for preparing a state depending on the boundary conditions, and since $\gamma=0$ there is no need to postselect on the uniform superposition to incorporate the boundary conditions. In summary, the query complexity of the Poisson equation with homogenous boundary conditions is
\begin{equation}
d\poly(\log(g'/g\epsilon));
\end{equation}
again the gate complexity is larger by a factor of
$\poly(\log(d\|A\|_\Sigma/\epsilon))$.


\section{Discussion and open problems}
\label{sec:discussion}

We have presented high-precision quantum algorithms for $d$-dimensional PDEs using the FDM and spectral methods. These algorithms use high-precision QLSAs to solve Poisson's equation and other second-order elliptic equations. Whereas previous algorithms scaled as $\poly (d, 1 / \epsilon)$, our algorithms scale as $\poly (d, \log (1 / \epsilon ))$.

This work raises several natural open problems. First, for the quantum adaptive FDM, we only deal with Poisson's equation with homogeneous boundary conditions. Can we apply the adaptive FDM to other linear equations or to inhomogeneous boundary conditions?
The quantum spectral algorithm applies to second-order elliptic PDEs with Dirichlet boundary conditions.
Can we generalize it to other linear PDEs with Neumann or mixed boundary conditions? Also, can we develop algorithms for space- and time-dependent PDEs?
These cases are more challenging since the quantum Fourier transform cannot be directly applied to ensure sparsity.
Finally, can we improve the dependence on $d$?

Second, the complexity scales logarithmically with high-order derivatives (of the inhomogeneity or solution) for both the adaptive FDM and the spectral method. In particular, \thm{main_fdm} shows that the complexity of the quantum adaptive FDM scales logarithmically with $\bigl| \frac{\d^{2k+1} u}{\d x^{2k+1}} \bigr|$, and \thm{main} shows that the complexity of the quantum spectral method is $\poly(\log g')$, where $g'$ upper bounds $\|\hat{u}^{(n+1)}(\bm{x})\|$ (see \eq{main_thm_g}).
Such a logarithmic dependence on high-order derivatives of the solution is typical for classical algorithms, including the classical adaptive FDM (see for example Theorem 7 of Reference~\cite{kivlichan2017bounding}) and spectral methods (see for example Eq.~(1.8.28) of Reference~\cite{Tang06}), both of which have the same logarithmic dependence on $\bigl| \frac{\d^{2k+1} u}{\d x^{2k+1}} \bigr|$ and $g'$.
This logarithmic dependence means that the algorithm is efficient even when faced with a highly oscillatory solution with an exponentially large derivative.

However, the query complexity of time-dependent Hamiltonian simulation only depends on the first-order derivatives of the Hamiltonian \cite{PQS11,BCC13}. Can we develop quantum algorithms for PDEs with query complexity independent of high-order derivatives, and henceforth develop an unexpected advantage of quantum algorithms for PDEs?

Third, can we develop quantum algorithms for other types of PDEs, such as stochastic or nonlinear PDEs?

Fourth, can we use quantum algorithms for PDEs as a subroutine of other quantum algorithms? For example, some PDE algorithms have state preparation steps that require inverting finite difference matrices (such as Reference~\cite{CJO17} using certain oracles for the initial conditions); are there other scenarios in which state preparation can be done using the solution of another system of PDEs? Can quantum algorithms for PDEs be applied to other algorithmic tasks, such as optimization?

Finally, how should these algorithms be applied? While PDEs have broad applications, much more work remains to understand the extent to which quantum algorithms can be of practical value. Answering this question will require careful consideration of various technical aspects of the algorithms. In particular: What measurements give useful information about the solutions, and how can those measurements be efficiently implemented? How should the oracles encoding the equations and boundary conditions be implemented in practice? And with these aspects taken into account, what are the resource requirements for quantum computers to solve classically intractable problems related to PDEs?


\section*{Acknowledgments}

JPL thanks Jacob Bedrossian, Dominic Berry, Stephen Jordan, Ashley Montanaro, and Konstantina Trivisa for valuable discussions.
We also thank Tongyang Li for pointing out an issue with the analysis of the adaptive FDM algorithm in a previous version of the paper,
and we thank anonymous referees for their comments on an earlier draft.

The authors acknowledge support from National Science Foundation grant CCF-1813814 and from the U.S.\ Department of Energy, Office of Science, Office of Advanced Scientific Computing Research, Quantum Algorithms Teams and Accelerated Research in Quantum Computing programs.


\providecommand{\bysame}{\leavevmode\hbox to3em{\hrulefill}\thinspace}

\newpage
\appendix


\section{An example for solving Poisson's equation}
\label{app:Poisson}

In this appendix, we present an example of solving Poisson's equation in two dimensions using our algorithm. The Poisson equation is
\begin{align}
\Delta u(x_1,x_2)&=\left(\frac{\partial^2}{\partial x_1^2}+\frac{\partial^2}{\partial x_2^2}\right)u(x_1,x_2)=f(x_1,x_2), & (x_1,x_2) &\in \Omega=[-1, 1]^2.
\label{eq:Poisson_eq}
\end{align}

We consider two kinds of boundary value problems, as follows.
\begin{itemize}
  \item \emph{Periodic boundary conditions:}
  \begin{equation}
  \begin{aligned}
  u(x_1,x_2) &= u(x_1+2v,x_2) = u(x_1,x_2+2v) , & (x_1,x_2)& \in \Omega=[-1, 1]^2, v \in \Z\\
  u(0,0) &= \gamma.
  \label{eq:period_cond}
  \end{aligned}
  \end{equation}

  \item \emph{Non-periodic boundary conditions:}
  \begin{equation}
  \begin{aligned}
  u(x_1,1) &= \gamma_N(x_1),  & u(1,x_2) &= \gamma_E(x_2),\\
  u(x_1,-1) &= \gamma_S(x_1), & u(-1,x_2) &= \gamma_W(x_2).
  \label{eq:non_period_cond}
  \end{aligned}
  \end{equation}
\end{itemize}

We first present the quantum Fourier spectral method to solve \eq{Poisson_eq} with the periodic conditions \eq{period_cond}. In particular, we choose $n=2$ in the specification of the linear system. The truncated Fourier series can be written as
\begin{equation}
u(x_1,x_2) = \sum_{k_1=0}^2\sum_{k_2=0}^2 c_{k_1,k_2}e^{i(k_1-1)\pi x_1}e^{i(k_2-1)\pi x_2} .
\label{eq:period_expand}
\end{equation}
We are given an oracle for preparing the state
\begin{equation}
\sum_{l_1=0}^2\sum_{l_2=0}^2f\left(\frac{2 l_1}{3}-1,\frac{2 l_2}{3}-1\right)|l_1\rangle|l_2\rangle
\label{eq:period_inter}
\end{equation}
that interpolates the uniform grid nodes \eq{interpolation_nodes}.
We first perform a multi-dimensional inverse QSFT on \eq{period_inter} to obtain
\begin{equation}
\sum_{k_1=0}^2\sum_{k_2=0}^2f_{k_1,k_2}|k_1\rangle|k_2\rangle,
\label{eq:Poisson_coeff}
\end{equation}
where $b_{k_1,k_2}$ are defined by \eq{f_coeff}. Then we apply the quantum linear system algorithm of Reference~\cite{CKS15} to solve the linear system
\begin{equation}
L_p|X\rangle=|B\rangle,
\label{eq:linear_system_Poisson}
\end{equation}
where the solution is
\begin{equation}
|X\rangle=  \sum_{k_1=0}^2\sum_{k_2=0}^2 c_{k_1,k_2}|k_1\rangle|k_2\rangle.
\label{eq:solution_Poisson}
\end{equation}
According to \eq{coeff_linear_system},the discretized linear system from \eq{Poisson_eq} is
\begin{equation}
D^{2}_n\otimes I+I\otimes D^{2}_n,
\label{eq:Poisson_matrix}
\end{equation}
where the Fourier difference matrix $D_n$ is defined by \eq{Fourier_Dn} with $n=2$, namely
\begin{equation}
D_2=
  \begin{pmatrix}
    -i\pi & 0 & 0 \\
    0 & 0 & 0 \\
    0 & 0 & i\pi \\
  \end{pmatrix},
\end{equation}
so that
\begin{equation}
D_2^2=
  \begin{pmatrix}
    -\pi^2 & 0 & 0 \\
    0 & 0 & 0 \\
    0 & 0 & -\pi^2 \\
  \end{pmatrix}.
\end{equation}
Therefore, the matrix \eq{Poisson_matrix} is
\begin{equation}
D^{2}_2\otimes I+I\otimes D^{2}_2=
  \begin{pmatrix}
    -2\pi^2 &  &  &  &  &  &  &  &  \\
     & -\pi^2 &  &  &  &  &  &  &  \\
     &  & -2\pi^2 &  &  &  &  &  &  \\
     &  &  & -\pi^2 &  &  &  &  &  \\
     &  &  &  & 0 &  &  &  &  \\
     &  &  &  &  & -\pi^2 &  &  &  \\
     &  &  &  &  &  & -2\pi^2 &  &  \\
     &  &  &  &  &  &  & -\pi^2 &  \\
     &  &  &  &  &  &  &  & -2\pi^2 \\
  \end{pmatrix}.
\end{equation}
The rank of this matrix is $(n+1)^d-1$ with $d=2$, $n=2$. We use the boundary condition to complete the linear system:
\begin{equation}
  \begin{pmatrix}
    -2\pi^2 &  &  &  &  &  &  &  &  \\
     & -\pi^2 &  &  &  &  &  &  &  \\
     &  & -2\pi^2 &  &  &  &  &  &  \\
     &  &  & -\pi^2 &  &  &  &  &  \\
     &  &  & 1 & 1 & 1 &  &  &  \\
     &  &  &  &  & -\pi^2 &  &  &  \\
     &  &  &  &  &  & -2\pi^2 &  &  \\
     &  &  &  &  &  &  & -\pi^2 &  \\
     &  &  &  &  &  &  &  & -2\pi^2 \\
  \end{pmatrix}
  \begin{pmatrix}
    c_{0,0} \\
    c_{0,1} \\
    c_{0,2} \\
    c_{1,0} \\
    c_{1,1} \\
    c_{1,2} \\
    c_{2,0} \\
    c_{2,1} \\
    c_{2,2} \\
  \end{pmatrix}
=
  \begin{pmatrix}
    f_{0,0} \\
    f_{0,1} \\
    f_{0,2} \\
    f_{1,0} \\
    \gamma \\
    f_{1,2} \\
    f_{2,0} \\
    f_{2,1} \\
    f_{2,2} \\
  \end{pmatrix},
\end{equation}
where the additional linear equation comes from $u(0,0)=\sum_{k_1=0}^2\sum_{k_2=0}^2 c_{k_1,k_2}=\gamma$. In some problems, we might be directly given the value of $c_{1,1}$, say, $c_{1,1}=\gamma$. Then the linear system would be
\begin{equation}
  \begin{pmatrix}
    -2\pi^2 &  &  &  &  &  &  &  &  \\
     & -\pi^2 &  &  &  &  &  &  &  \\
     &  & -2\pi^2 &  &  &  &  &  &  \\
     &  &  & -\pi^2 &  &  &  &  &  \\
     &  &  &  & 1 &  &  &  &  \\
     &  &  &  &  & -\pi^2 &  &  &  \\
     &  &  &  &  &  & -2\pi^2 &  &  \\
     &  &  &  &  &  &  & -\pi^2 &  \\
     &  &  &  &  &  &  &  & -2\pi^2 \\
  \end{pmatrix}
  \begin{pmatrix}
    c_{0,0} \\
    c_{0,1} \\
    c_{0,2} \\
    c_{1,0} \\
    c_{1,1} \\
    c_{1,2} \\
    c_{2,0} \\
    c_{2,1} \\
    c_{2,2} \\
  \end{pmatrix}
=
  \begin{pmatrix}
    f_{0,0} \\
    f_{0,1} \\
    f_{0,2} \\
    f_{1,0} \\
    \gamma \\
    f_{1,2} \\
    f_{2,0} \\
    f_{2,1} \\
    f_{2,2} \\
  \end{pmatrix}.
\end{equation}

We now present the quantum Chebyshev spectral method to solve \eq{Poisson_eq} with non-periodic conditions \eq{non_period_cond}. Similarly, we choose $n=3$ in the specification of the linear system. The truncated Chebyshev series of the solution can be written as
\begin{equation}
u(x_1,x_2) = \sum_{k_1=0}^3\sum_{k_2=0}^3 c_{k_1,k_2}T_{k_1}(x_1)T_{k_2}(x_2).
\label{eq:non_period_expand}
\end{equation}
We are given an oracle for preparing the state
\begin{equation}
\sum_{l_1=0}^3\sum_{l_2=0}^3f\left(\cos\frac{\pi l_1}{3},\cos\frac{\pi l_2}{3}\right)|l_1\rangle|l_2\rangle
\label{eq:non_period_inter}
\end{equation}
that interpolates the Chebyshev-Gauss-Lobatto quadrature nodes specified in \eq{interpolation_nodes}.
We first perform the multi-dimensional inverse QCT on \eq{period_inter} to obtain \eq{Poisson_coeff}, where $f_{k_1,k_2}$ are defined by \eq{f_coeff}. Then we apply the quantum linear system algorithm of Reference~\cite{CKS15} to solve the linear system \eq{linear_system_Poisson} with the solution \eq{solution_Poisson}. The discretized linear system from \eq{Poisson_eq} is \eq{Poisson_matrix}, where the Chebyshev difference matrix $D_n$ is defined by \eq{Chebyshev_Dn} with $n=3$, namely
\begin{equation}
D_3=
  \begin{pmatrix}
    0 & 1 & 0 & 3 \\
    0 & 0 & 4 & 0 \\
    0 & 0 & 0 & 6 \\
    0 & 0 & 0 & 0 \\
  \end{pmatrix},
\end{equation}
and
\begin{equation}
D_3^2=
  \begin{pmatrix}
    0 & 0 & 4 & 0 \\
    0 & 0 & 0 & 24 \\
    0 & 0 & 0 & 0 \\
    0 & 0 & 0 & 0 \\
  \end{pmatrix}.
\end{equation}
Notice that the rank of $D_n^2$ is $n-1$, which implies the second derivative for $d=1$ can be represented as
\begin{equation}
u''(x) = c''_0T_0(x)+c''_1T_1(x) = 4c_2T_0(x)+24c_3T_1(x),
\end{equation}
where the truncation order of $u''(x)$ is $n-2$, and the coefficients $c''_0,\ldots,c''_{n-2}$ are determined by $c_2,\ldots,c_n$. Similarly for the case $d=2$, the coefficients of $\Delta u(x)$ are determined by
\begin{equation}
\begin{aligned}
c''_{00} &= 4c_{02}+4c_{20}, &
c''_{01} &= 24c_{03}+4c_{21}, &
c''_{02} &= 4c_{22}, &
c''_{03} &= 4c_{23}, \\
c''_{10} &= 4c_{12}+24c_{30}, &
c''_{11} &= 24c_{13}+24c_{31}, &
c''_{12} &= 24c_{32}, &
c''_{13} &= 24c_{33}, \\
c''_{20} &= 4c_{22}, &
c''_{21} &= 24c_{23},\\
c''_{30} &= 4c_{32}, &
c''_{33} &= 24c_{33},
\end{aligned}
\end{equation}
so the matrix $D^{2}_3\otimes I+I\otimes D^{2}_3$ gives the linear system
\begin{equation}
  \begin{pmatrix}
     &  & 4 &  &  &  &  &  & 4 &  &  &  &  &  &  &  \\
     &  &  & 24 &  &  &  &  &  & 4 &  &  &  &  &  &  \\
     &  &  &  &  &  &  &  &  &  & 4 &  &  &  &  &  \\
     &  &  &  &  &  &  &  &  &  &  & 4 &  &  &  &  \\
     &  &  &  &  &  & 4 &  &  &  &  &  & 24 &  &  &  \\
     &  &  &  &  &  &  & 24 &  &  &  &  &  & 24 &  &  \\
     &  &  &  &  &  &  &  &  &  &  &  &  &  & 24 &  \\
     &  &  &  &  &  &  &  &  &  &  &  &  &  &  & 24 \\
     &  &  &  &  &  &  &  &  &  & 4 &  &  &  &  &  \\
     &  &  &  &  &  &  &  &  &  &  & 24 &  &  &  &  \\
     &  &  &  &  &  &  &  &  &  &  &  &  &  & 4 &  \\
     &  &  &  &  &  &  &  &  &  &  &  &  &  &  & 24 \\
  \end{pmatrix}
  \begin{pmatrix}
    c_{0,0} \\
    c_{0,1} \\
    c_{0,2} \\
    c_{0,3} \\
    c_{1,0} \\
    c_{1,1} \\
    c_{1,2} \\
    c_{1,3} \\
    c_{2,0} \\
    c_{2,1} \\
    c_{2,2} \\
    c_{2,3} \\
    c_{3,0} \\
    c_{3,1} \\
    c_{3,2} \\
    c_{3,3} \\
  \end{pmatrix}
=
  \begin{pmatrix}
    f_{0,0} \\
    f_{0,1} \\
    f_{0,2} \\
    f_{0,3} \\
    f_{1,0} \\
    f_{1,1} \\
    f_{1,2} \\
    f_{1,3} \\
    f_{2,0} \\
    f_{2,1} \\
    f_{3,0} \\
    f_{3,1} \\
  \end{pmatrix}.
\label{eq:diff_system}
\end{equation}
We now use the boundary conditions to complete the linear system. The truncated Chebyshev series of the solution can be written as
\begin{equation}
\begin{aligned}
\gamma_N(x_1) &= \sum_{k_1=0}^2 {g_N}_{k_1}T_{k_1}(x_1),\\
\gamma_S(x_1) &= \sum_{k_1=0}^2 {g_S}_{k_1}T_{k_1}(x_1),\\
\gamma_E(x_2) &= \sum_{k_2=0}^2 {g_E}_{k_1}T_{k_1}(x_2),\\
\gamma_W(x_2) &= \sum_{k_2=0}^2 {g_W}_{k_1}T_{k_1}(x_2).
\label{eq:boundary_expand}
\end{aligned}
\end{equation}
We are given an oracle for preparing the state by interpolating the Chebyshev-Gauss-Lobatto quadrature nodes specified in \eq{interpolation_nodes}
\begin{equation}
\begin{aligned}
&\sum_{l_1=0}^3\gamma_N\left(\cos\frac{\pi l_1}{3}\right)|l_1\rangle, &
&\sum_{l_2=0}^3\gamma_E\left(\cos\frac{\pi l_2}{3}\right)|l_2\rangle,\\
&\sum_{l_1=0}^3\gamma_S\left(\cos\frac{\pi l_1}{3}\right)|l_1\rangle, &
&\sum_{l_2=0}^3\gamma_W\left(\cos\frac{\pi l_2}{3}\right)|l_2\rangle.
\label{eq:boundary_inter}
\end{aligned}
\end{equation}
We perform the multi-dimensional inverse QCT on \eq{boundary_inter} to obtain
\begin{equation}
\begin{aligned}
&\sum_{k_1=0}^3{g_N}_{k_1}|k_1\rangle, &
&\sum_{k_2=0}^3{g_E}_{k_2}|k_2\rangle, \\
&\sum_{k_1=0}^3{g_S}_{k_1}|k_1\rangle, &
&\sum_{k_2=0}^3{g_W}_{k_2}|k_2\rangle,\\
\label{eq:boundary_coeff}
\end{aligned}
\end{equation}
where $a_{k_1,k_2}$ are defined by \eq{g_coeff}. The linear system from the boundary conditions is
\begin{equation}
\small
  \begin{pmatrix}
    1 & 1 & 1 & 1 &  &  &  &  &  &  &  &  &  &  &  &  \\
     &  &  &  & 1 & 1 & 1 & 1 &  &  &  &  &  &  &  &  \\
     &  &  &  &  &  &  &  & 1 & 1 & 1 & 1 &  &  &  &  \\
     &  &  &  &  &  &  &  &  &  &  &  & 1 & 1 & 1 & 1 \\
    1 & -1 & 1 & -1 &  &  &  &  &  &  &  &  &  &  &  &  \\
     &  &  &  & 1 & -1 & 1 & -1 &  &  &  &  &  &  &  &  \\
     &  &  &  &  &  &  &  & 1 & -1 & 1 & -1 &  &  &  &  \\
     &  &  &  &  &  &  &  &  &  &  &  & 1 & -1 & 1 & -1 \\
    1 &  &  &  & 1 &  &  &  & 1 &  &  &  & 1 &  &  &  \\
     & 1 &  &  &  & 1 &  &  &  & 1 &  &  &  & 1 &  &  \\
     &  & 1 &  &  &  & 1 &  &  &  & 1 &  &  &  & 1 &  \\
     &  &  & 1 &  &  &  & 1 &  &  &  & 1 &  &  &  & 1 \\
    1 &  &  &  & -1 &  &  &  & 1 &  &  &  & -1 &  &  &  \\
     & 1 &  &  &  & -1 &  &  &  & 1 &  &  &  & -1 &  &  \\
     &  & 1 &  &  &  & -1 &  &  &  & 1 &  &  &  & -1 &  \\
     &  &  & 1 &  &  &  & -1 &  &  &  & 1 &  &  &  & -1 \\
  \end{pmatrix}
  \begin{pmatrix}
    c_{0,0} \\
    c_{0,1} \\
    c_{0,2} \\
    c_{0,3} \\
    c_{1,0} \\
    c_{1,1} \\
    c_{1,2} \\
    c_{1,3} \\
    c_{2,0} \\
    c_{2,1} \\
    c_{2,2} \\
    c_{2,3} \\
    c_{3,0} \\
    c_{3,1} \\
    c_{3,2} \\
    c_{3,3} \\
  \end{pmatrix}
=
  \begin{pmatrix}
    {g_N}_{0} \\
    {g_N}_{1} \\
    {g_N}_{2} \\
    {g_N}_{3} \\
    {g_S}_{0} \\
    {g_S}_{1} \\
    {g_S}_{2} \\
    {g_S}_{3} \\
    {g_E}_{0} \\
    {g_E}_{1} \\
    {g_E}_{2} \\
    {g_E}_{3} \\
    {g_W}_{0} \\
    {g_W}_{1} \\
    {g_W}_{2} \\
    {g_W}_{3} \\
  \end{pmatrix}.
\label{eq:boundary_system}
\end{equation}
Adding the two linear systems \eq{diff_system} and \eq{boundary_system} together, we obtain a full-rank linear system
\begin{equation}
\overline{D}^{(2)}_3\otimes I+I\otimes \overline{D}^{(2)}_3,
\label{eq:complete_matrix}
\end{equation}
where
\begin{equation}
\overline{D}_3^{(2)}=
  \begin{pmatrix}
    0 & 0 & 4 & 0 \\
    0 & 0 & 0 & 24 \\
    1 & -1 & 1 & -1 \\
    1 & 1 & 1 & 1 \\
  \end{pmatrix}.
\end{equation}
In summary, the linear system including the differential equations and the boundary conditions is
\begin{equation}
  \footnotesize
  \setlength\arraycolsep{4pt}
  \begin{pmatrix}
     &  & 4 &  &  &  &  &  & 4 &  &  &  &  &  &  &  \\
     &  &  & 24 &  &  &  &  &  & 4 &  &  &  &  &  &  \\
    1 & -1 & 1 & -1 &  &  &  &  &  &  & 4 &  &  &  &  &  \\
    1 & 1 & 1 & 1 &  &  &  &  &  &  &  & 4 &  &  &  &  \\
     &  &  &  &  &  & 4 &  &  &  &  &  & 24 &  &  &  \\
     &  &  &  &  &  &  & 24 &  &  &  &  &  & 24 &  &  \\
     &  &  &  & 1 & -1 & 1 & -1 &  &  &  &  &  &  & 24 &  \\
     &  &  &  & 1 & 1 & 1 & 1 &  &  &  &  &  &  &  & 24 \\
    1 &  &  &  & -1 &  &  &  & 1 &  & 4 &  & -1 &  &  &  \\
     & 1 &  &  &  & -1 &  &  &  & 1 &  & 24 &  & -1 &  &  \\
     &  & 1 &  &  &  & -1 &  & 1 & -1 & 2 & -1 &  &  & -1 &  \\
     &  &  & 1 &  &  &  & -1 & 1 & 1 & 1 & 2 &  &  &  & -1 \\
    1 &  &  &  & 1 &  &  &  & 1 &  &  &  & 1 &  & 4 &  \\
     & 1 &  &  &  & 1 &  &  &  & 1 &  &  &  & 1 &  & 24 \\
     &  & 1 &  &  &  & 1 &  &  &  & 1 &  & 1 & -1 & 2 & -1 \\
     &  &  & 1 &  &  &  & 1 &  &  &  & 1 & 1 & 1 & 1 & 2 \\
  \end{pmatrix}
    \begin{pmatrix}
    c_{0,0} \\
    c_{0,1} \\
    c_{0,2} \\
    c_{0,3} \\
    c_{1,0} \\
    c_{1,1} \\
    c_{1,2} \\
    c_{1,3} \\
    c_{2,0} \\
    c_{2,1} \\
    c_{2,2} \\
    c_{2,3} \\
    c_{3,0} \\
    c_{3,1} \\
    c_{3,2} \\
    c_{3,3} \\
  \end{pmatrix}
=
  \begin{pmatrix}
    f_{0,0} \\
    f_{0,1} \\
    f_{0,2}+{g_S}_{0} \\
    f_{0,3}+{g_N}_{0} \\
    f_{1,0} \\
    f_{1,1} \\
    f_{1,2}+{g_S}_{1} \\
    f_{1,3}+{g_N}_{1} \\
    f_{2,0}+{g_W}_{0} \\
    f_{2,1}+{g_W}_{1} \\
    {g_W}_{2}+{g_S}_{2} \\
    {g_W}_{3}+{g_N}_{2} \\
    f_{3,0}+{g_E}_{0} \\
    f_{3,1}+{g_E}_{1} \\
    {g_E}_{2}+{g_S}_{3} \\
    {g_E}_{3}+{g_N}_{3} \\
  \end{pmatrix}.
\label{eq:complete_system}
\end{equation}


\section{Singular values of second-order differential matrices}
\label{app:svd}

Here we present a detailed proof of the singular value estimation in \lem{svd_Fourier} and \lem{svd_Chebyshev}.

\svdFourier*

\begin{proof}
By direct calculation of the $l_{\infty}$ norm (i.e., the maximum absolute column sum) of \eq{matrix_Fourier}, for $n\ge4$, we have
\begin{equation}
\|\overline{D}^{(2)}_n\|_{\infty} \le \left(\frac{(n+1)\pi}{2}\right)^2 \le (2n)^2.
\end{equation}

Then the inverse of the matrix \eq{matrix_Fourier} is
\begin{equation}
\begin{aligned}
&[(\overline{D}^{(2)}_n)^{-1}]_{k,k}=-\frac{1}{((k-\left\lfloor n/2 \right\rfloor)\pi)^{2}},\quad &k \in\rangez{n+1}\backslash\{\left\lfloor n/2 \right\rfloor\},\\
&[(\overline{D}^{(2)}_n)^{-1}]_{\left\lfloor n/2 \right\rfloor,k}=\frac{1}{((k-\left\lfloor n/2 \right\rfloor)\pi)^{2}},\quad &k \in\rangez{n+1}
\end{aligned}
\end{equation}
as can easily be verified by a direct calculation.

By direct calculation of the Frobenius norm of \eq{matrix_Fourier}, we have
\begin{equation}
\|(\overline{D}^{2}_n)^{-1}\|_F^2 \le 1+2\sum_{k=1}^{\infty}\frac{1}{k^4\pi^4} = 1+\frac{2}{\pi^4}\frac{\pi^4}{90} \le 2.
\end{equation}

Thus we have the result in \eq{svd_Fourier}:
\begin{equation}
\begin{aligned}
&\sigma_{\max}(\overline{D}^{(2)}_n) \le \sqrt{n+1}\|\overline{D}^{2}_n\|_{\infty} \le (2n)^{2.5},\\
&\sigma_{\min}(\overline{D}^{(2)}_n) \ge \frac{1}{\|(\overline{D}^{2}_n)^{-1}\|_F} \ge \frac{1}{\sqrt{2}}
\end{aligned}
\end{equation}
as claimed.
\end{proof}

\svdChebyshev*

\begin{proof}
By direct calculation of the Frobenius norm of \eq{matrix_Chebyshev}, we have
\begin{equation}
\|\overline{D}^{(2)}_n\|_F^2 \le n^2\max_{k,r}\left(\frac{r(r^2-k^2)}{\sigma_k}\right)^2 \le n^2\cdot n^6 = n^8.
\end{equation}

Next we upper bound $\|(\overline{D}^{(2)}_n)^{-1}\|$. By definition,
\begin{equation}
\|(\overline{D}^{(2)}_n)^{-1}\|=\sup_{\|b\|\le1}\|(\overline{D}^{2}_n)^{-1}b\|.
\end{equation}
Given any vector $b$ satisfying $\|b\|\le1$, we estimate $\|x\|$ defined by the full-rank linear system
\begin{equation}
\overline{D}^{(2)}_n x = b.
\label{eq:svd_system}
\end{equation}
Notice that $\overline{D}^{(2)}_n$ is the sum of the upper triangular matrix ${D}^{2}_n $ and \eq{Chebyshev_Gn}, the coordinates $x_2,\ldots,x_n$ are only defined by coordinates $b_0,\ldots,b_{n-2}$. So we only focus on the partial system
\begin{equation}
{D}^{(2)}_n [0,0,x_2,\ldots,x_n]^T = [b_0,\ldots,b_{n-2},0,0]^T.
\end{equation}
Given the same $b$, we also define the vector $y$ by
\begin{equation}
{D}_n [0,y_1,\ldots,y_{n-1},0]^T = [b_0,\ldots,b_{n-2},0,0]^T,
\end{equation}
where each coordinate of $y$ can be expressed by
\begin{equation}
b_k = \sum_{l=1}^{n-1}[{D}_n]_{kl}y_l = \sum_{l=1}^{n-1}\frac{2l}{\sigma_k}y_l, \quad \text{$k+l$ odd},~ l>k,~k\in\rangez{n-1}.
\label{eq:svd_by}
\end{equation}
Using this equation with $k=l-1$ and $k=l+1$, we can express $y_l$ in terms of $b_{l-1}$ and $b_{l+1}$:
\begin{equation}
\frac{2l}{\sigma_{l-1}} y_l = b_{l-1} - \frac{1}{\sigma_{l-1}} b_{l+1}, \quad l \in\range{n-1},
\label{eq:svd_yb}
\end{equation}
where we let $b_{n-1}=b_n=0$. Thus we have
\begin{equation}
\begin{aligned}
\sum_{l=1}^{n-1}y_l^2
&= \sum_{l=1}^{n-1} \left(\frac{\sigma_{l-1}}{2l}\left(b_{l-1} - \frac{1}{\sigma_{l-1}} b_{l+1}\right)\right)^2 \\
&\le \sum_{l=1}^{n-1}\frac{\sigma_{l-1}^2}{4l^2}\left(1+\frac{1}{\sigma_{l-1}^2}\right)(b_{l-1}^2+b_{l+1}^2) \\
&\le \frac{5}{4}(b_0^2+b_2^2)+\frac{1}{16}\sum_{l=2}^{n-2}(b_{l-1}^2+b_{l+1}^2) \\
&\le 2\sum_{l=0}^{n-2}b_l^2.
\label{eq:svd_y2b2}
\end{aligned}
\end{equation}
We notice that $y$ also satisfies
\begin{equation}
[0,y_1,\ldots,y_{n-1},0]^T = {D}_n [0,0,x_2,\ldots,x_n]^T,
\end{equation}
where each coordinate of $y$ can be expressed by
\begin{equation}
y_l = \sum_{r=1}^n[{D}_n]_{lr}x_r = \sum_{r=1}^n\frac{2r}{\sigma_l}x_r, \quad \text{$l+r$ odd},~ r>l,~l\in\range{n-1}.
\label{eq:svd_yx}
\end{equation}
Substituting the $(r-1)$st and the $(r+1)$st equations of \eq{svd_yx}, we can express $x$ in terms of $y$:
\begin{equation}
\frac{2r}{\sigma_{r-1}} x_r = y_{r-1} - \frac{1}{\sigma_{r-1}} y_{r+1}, \quad r \in\range{n} \backslash \{1\},
\label{eq:svd_xy}
\end{equation}
where we let $y_n=y_{n+1}=0$. Similarly, according to \eq{svd_xy}, we also have
\begin{equation}
\sum_{l=2}^nx_l^2\le2\sum_{l=1}^{n-1}y_l^2.
\label{eq:svd_x2y2}
\end{equation}
Then we calculate $x_0^2+x_1^2$ based on the last two equations of \eq{svd_system}, \eq{svd_y2b2}, and \eq{svd_xy}, giving
\begin{equation}
\begin{aligned}
x_0^2+x_1^2
&= \frac{1}{2}[(x_0+x_1)^2+(x_0-x_1)^2]\\
&= \frac{1}{2}\left[\left(b_n-\sum_{l=2}^nx_l\right)^2+\left(b_{n-1}-\sum_{l=2}^n(-1)^lx_l\right)^2\right] \\
&= \frac{1}{2}\left[\left(b_n-\sum_{l=2}^n\frac{\sigma_{l-1}}{2l}\left(y_{l-1} - \frac{1}{\sigma_{l-1}} y_{l+1}\right)\right)^2 \right. \\
&\qquad+\left.\left(b_{n-1}-\sum_{l=2}^n(-1)^l\frac{\sigma_{l-1}}{2l}\left(y_{l-1} - \frac{1}{\sigma_{l-1}} y_{l+1}\right)\right)^2\right] \\
&\le \frac{1}{2}\left(1+\sum_{l=2}^n\frac{\sigma_{l-1}^2}{4l^2}\right)\left[b_n^2+\sum_{l=2}^n\left(y_{l-1} - \frac{1}{\sigma_{l-1}} y_{l+1}\right)^2+b_{n-1}^2\right.\\
&\qquad+\left.\sum_{l=2}^n\left(y_{l-1} - \frac{1}{\sigma_{l-1}} y_{l+1}\right)^2\right] \\
&\le \frac{1}{2}\left(1+\frac{1}{4}\sum_{l=2}^{\infty}\frac{1}{l^2}\right)\left[b_n^2+b_{n-1}^2+\sum_{l=2}^{n}\left(1+\frac{1}{\sigma_{l-1}^2}\right)(y_{l-1}^2+y_{l+1}^2)\right] \\
&\le \frac{1}{2}\left(1+\frac{\pi^2}{24}\right)\left[b_n^2+b_{n-1}^2+4\sum_{l=1}^{n-1}y_l^2\right] \\
&\le b_n^2+b_{n-1}^2+8\sum_{l=0}^{n-2}b_l^2.
\label{eq:svd_x0x1}
\end{aligned}
\end{equation}
Thus, based on \eq{svd_y2b2}, \eq{svd_x2y2}, and \eq{svd_x0x1}, the inequality
\begin{equation}
\begin{aligned}
\sum_{l=0}^nx_l^2 &= x_0^2+x_1^2+\sum_{l=2}^nx_l^2 \\
&\le b_n^2+b_{n-1}^2+8\sum_{l=0}^{n-2}b_l^2+4\sum_{l=0}^{n-2}b_l^2 \\
&\le b_n^2+b_{n-1}^2+12\sum_{l=0}^{n-2}b_l^2 \le 12
\end{aligned}
\end{equation}
holds for any vectors $b$ satisfying $\|b\|\le1$. Thus
\begin{equation}
\|(\overline{D}^{(2)}_n)^{-1}\| = \sup_{\|b\|\le1} \|x\| \le 12 < 16.
\end{equation}

Altogether, we have
\begin{equation}
\begin{aligned}
&\sigma_{\max}(\overline{D}^{(2)}_n) \le \|\overline{D}^{2}_n\|_F \le n^4,\\
&\sigma_{\min}(\overline{D}^{(2)}_n) \ge \frac{1}{\|(\overline{D}^{2}_n)^{-1}\|} \ge \frac{1}{16}
\end{aligned}
\end{equation}
as claimed in \eq{svd_Chebyshev}.
\end{proof}

\end{document}